\crefname{algocf}{alg.}{algs.}
\Crefname{algocf}{Algorithm}{Algorithms}
\declaretheorem[name=Theorem,numberwithin=section]{thm}
\declaretheorem[name=Theorem,numberlike=thm]{theorem}
\declaretheorem[name=Lemma,numberlike=thm]{lemma}
\declaretheorem[name=Corollary,numberlike=thm]{cor}
\declaretheorem[name=Definition,numberlike=thm,style=definition]{defn}
\DeclarePairedDelimiter{\ceil}{\lceil}{\rceil}
\DeclarePairedDelimiter{\floor}{\lfloor}{\rfloor}
\DeclareMathOperator*{\argmax}{arg\,max}
\DeclareMathOperator*{\argmin}{arg\,min}
\newcommand{\bethe}{\mathrm{bethe}}
\newcommand{\sink}{\mathrm{sinkhorn}}
\newcommand{\ssink}{\mathrm{scaledsinkhorn}}
\newcommand{\betheq}{\mathrm{F}}
\newcommand{\frst}{\mathrm{U}}
\newcommand{\scnd}{\mathrm{V}}
\newcommand{\prob}[1]{\mathrm{Pr}\left(#1 \right)}
\newcommand{\permD}{S_{\bX}}
\newcommand{\my}{\textbf{f}_{y}}
\newcommand{\mo}{\textbf{m}_{1}}
\newcommand{\mk}{\textbf{m}_{k}}
\newcommand{\mj}{\textbf{m}_{j}}
\newcommand{\co}{\textbf{c}_{1}}
\newcommand{\ct}{\textbf{c}_{2}}
\newcommand{\ck}{\textbf{c}_{k}}
\newcommand{\cj}{\textbf{c}_{j}}
\newcommand{\perm}{\mathrm{perm}}
\newcommand{\kzo}{\{0,1\}^{\bX \times k}}
\newcommand{\ka}{\textbf{K}_{\ma}}
\newcommand{\kr}{\textbf{K}_{r}}
\newcommand{\mx}{\textbf{X}}
\newcommand{\kds}{\R_{\geq 0}^{\bX \times \bX}}
\newcommand{\fna}{\textbf{h}_{\ma}}
\newcommand{\expt}[2]{\mathbb{E}_{#1}\left[#2\right]}
\newcommand{\KL}[1]{\mathrm{KL}\left(#1\right)}
\newcommand{\Pxj}{\bP_{x,j}}
\newcommand{\Axy}{\ma_{x,y}}
\newcommand{\Xxj}{\mx_{x,j}}
\newcommand{\mad}{{\hat{\ma}}}
\newcommand{\madxj}{\mad_{x,j}}
\newcommand{\may}{\ma_{:y}}
\newcommand{\Ns}{N}
\newcommand{\boo}{b_1}
\newcommand{\btt}{b_2}
\newcommand{\ztbtt}{[0,\btt]}
\newcommand{\otboo}{[1,\boo]}
\newcommand{\bttpo}{(\btt+1)}
\newcommand{\F}{\phi}
\newcommand{\1}{\overrightarrow{1}}
\newcommand{\R}{\mathbb{R}}
\newcommand{\bbP}{\mathbb{P}}
\newcommand{\Z}{\mathbb{Z}_{+}}
\newcommand{\simplex}{\Delta^{\bX}}
\newcommand{\psimplex}{\Delta_{pseudo}^{\bX}}
\newcommand{\dsimplex}{\Delta_{\bR}^{\bX}}
\newcommand{\pp}{\pvec}
\newcommand{\nn}{\mvec}
\newcommand{\probpml}{\bbP}
\newcommand{\bg}{\textbf{g}}
\newcommand{\bL}{\textbf{L}}
\newcommand{\bp}{\textbf{p}}
\newcommand{\bq}{\textbf{q}}
\newcommand{\bff}{\textbf{f}}
\newcommand{\bX}{\mathcal{D}}
\newcommand{\bH}{\textbf{H}}
\newcommand{\bP}{\textbf{P}}
\newcommand{\bQ}{\textbf{Q}}
\newcommand{\bZ}{\textbf{Z}}
\newcommand{\bR}{\textbf{R}}
\newcommand{\bS}{\textbf{S}}
\newcommand{\expo}[1]{\exp \left(#1 \right)}
\newcommand{\exps}[1]{\exp (#1 )}
\newcommand{\fng}{\textbf{h}}
\newcommand{\otilde}{\widetilde{O}}
\newcommand{\cphi}{C_{\phi}}
\newcommand{\level}{\ell}
\newcommand{\levelq}{\ell^{\bq}}
\newcommand{\ma}{\textbf{A}}
\newcommand{\mzero}{\textbf{0}}
\newcommand{\pvec}{\zeta}
\newcommand{\vvec}{\textbf{v}}
\newcommand{\mvec}{\overrightarrow{\mathrm{m}}}
\newcommand{\epso}{\epsilon_1}
\newcommand{\eqdef}{\stackrel{\mathrm{def}}{=}}
\newcommand{\defeq}{\eqdef}
\newcommand{\onevec}{\overrightarrow{\mathrm{1}}}
\newcommand{\fnff}{\textbf{f}}
\newcommand{\setd}{\textbf{M}}
\newcommand{\eled}{\textbf{m}}
\newcommand{\dstoc}{\mathbf{K}_{rc}}
\newcommand{\mapphi}{\ma^{\bp,\phi}}
\newcommand{\mE}{\textbf{E}}
\newcommand{\mone}{\textbf{1}}
\newcommand{\cjp}{\textbf{c}_{j'}}
\newcommand{\maxt}{\ma_{x.}}
\newcommand{\mayt}{\ma_{y.}}
\newcommand{\fnp}{\textbf{f}}
\newcommand{\Pyj}{\bP_{y,j}}	
\newcommand{\Xzjp}{\mx_{z,j'}}
\newcommand{\madzjp}{\hat{\ma}_{z,j'}}
\newcommand{\Xyj}{\mx_{y,j}}
\newcommand{\mY}{\textbf{Y}}
\newcommand{\Yzjp}{\mY_{z,j'}}
\newcommand{\disc}{\mathrm{disc}}
\newcommand{\mz}{\textbf{m}_{0}}
\newcommand{\bZS}{\textbf{Z}^{\phi,\bq_{\bS}}_{\bR}}
\newcommand{\bZfrac}{\textbf{Z}^{\phi,frac}_{\bR}}
\newcommand{\bqS}{\bq_{\bS}}
\newcommand{\ri}{\textbf{r}_{i}}
\newcommand{\zrm}{\R_{\geq 0}^{\ell \times (k+1)}}
\newcommand{\zrc}{\textbf{Z}_{rc}}
\newcommand{\za}{\textbf{Z}_{\bR}^{{\bq},\phi}}
\newcommand{\lpi}{\level^{\bp}_{i}}
\newcommand{\Sij}{\bS_{i,j}}
\newcommand{\pml}{\bp_{\mathrm{pml}}}
\newcommand{\maqphi}{\ma^{\bq,\phi}}
\newcommand{\bigO}[1]{O\left(#1 \right)}
\newcommand{\bSp}{\bS'}
\newcommand{\dpml}{\bq_{\mathrm{pml}}}
\newcommand{\bZqext}{\textbf{Z}^{\qext,\phi}_{\bRext}}
\newcommand{\logparam}{\Delta}
\newcommand{\bRext}{\bR^{\mathrm{ext}}}
\newcommand{\bZext}{\textbf{Z}^{\phi}_{\bRext}}
\newcommand{\bSext}{\textbf{S}^{\mathrm{ext}}}
\newcommand{\bqSext}{\bq_{\bSext}}
\newcommand{\fnggg}{\textbf{g}}
\newcommand{\mmjj}{m_{j}}
\newcommand{\newk}{\bttpo}
\newcommand{\algcreate}{\mathrm{CreateNewProbabilityValues}}
\newcommand{\bRone}{\bR^{(1)}}
\newcommand{\bRtwo}{\bR^{(2)}}
\newcommand{\bZone}{\textbf{Z}^{\phi,frac}_{\bRone}}
\newcommand{\bZtwo}{\textbf{Z}^{\phi,frac}_{\bRtwo}}
\newcommand{\bSone}{\textbf{S}^{(1)}}
\newcommand{\bStwo}{\textbf{S}^{(2)}}
\newcommand{\bHone}{\bH^{(1)}}
\newcommand{\bLone}{\bL^{(1)}}
\newcommand{\pvecone}{\pvec^{(1)}}
\newcommand{\pvectwo}{\pvec^{(2)}}
\newcommand{\bAone}{\textbf{A}^{(1)}}
\newcommand{\tsqn}{\gamma}
\newcommand{\tsqni}{\frac{1}{\gamma}}
\newcommand{\pvecext}{\pvec^{\mathrm{ext}}}
\newcommand{\bpapprox}{\bp_{\mathrm{approx}}}
\newcommand{\qext}{\bq_{\bSext}}
\newcommand{\si}{s_{i}}
\newcommand{\RomanNumeralCaps}[1]
{\MakeUppercase{\romannumeral #1}}
\title{The Bethe and Sinkhorn Permanents of Low Rank Matrices\\
	and Implications for Profile Maximum Likelihood}
\author{
	 Nima Anari\\
	Stanford University\\
	\texttt{anari@cs.stanford.edu} \\
	\and
  Moses Charikar\\
  Stanford University\\
  \texttt{moses@cs.stanford.edu}\thanks{Moses Charikar was supported by a Simons Investigator Award, a Google Faculty Research Award and an Amazon Research Award.} \\
   \and
Kirankumar Shiragur\\
  Stanford University\\
  \texttt{shiragur@stanford.edu}\thanks{Kirankumar Shiragur was supported by Stanford Data Science Scholarship.} \\
     \and
Aaron Sidford\\
  Stanford University\\
  \texttt{sidford@stanford.edu}\thanks{Aaron Sidford was supported by NSF CAREER Award CCF-1844855.} \\
}
\begin{document}

\maketitle

\begin{abstract}
In this paper we consider the problem of computing the likelihood of the profile of a discrete distribution, i.e., the probability of observing the multiset of element frequencies, and computing a profile maximum likelihood (PML) distribution, i.e., a distribution with the maximum profile likelihood. For each problem we provide polynomial time algorithms that given $n$ i.i.d.\ samples from a discrete distribution, achieve an approximation factor of $\exp\left(-O(\sqrt{n} \log n) \right)$, improving upon the previous best-known bound achievable in polynomial time of $\exp(-O(n^{2/3} \log n))$ (Charikar, Shiragur and Sidford, 2019). Through the work of Acharya, Das, Orlitsky and Suresh (2016), this implies a polynomial time universal estimator for symmetric properties of discrete distributions in a broader range of error parameter.

We achieve these results by providing new bounds on the quality of approximation of the Bethe and Sinkhorn permanents (Vontobel, 2012 and 2014).
We show that each of these are $\exp(O(k \log(N/k)))$ approximations to the permanent of $N \times N$ matrices with non-negative rank at most $k$, improving upon the previous known bounds of $\exp(O(N))$. 
To obtain our results on PML,
we exploit the fact that the PML objective is proportional to the permanent of a certain Vandermonde matrix with $\sqrt{n}$ distinct columns, i.e. with non-negative rank at most $\sqrt{n}$.
As a by-product of our work we establish a surprising connection between the convex relaxation in prior work (CSS19) and the well-studied Bethe and Sinkhorn approximations.
\end{abstract}

\pagenumbering{gobble}

\newpage

\pagenumbering{arabic}

\section{Introduction}
Symmetric property estimation of distributions\footnote{Throughout this paper, we use the word distribution to refer to discrete distributions.} is an important and well studied problem in statistics and theoretical computer science. Given access to $n$ i.i.d samples from a hidden discrete distribution $\bp$ the goal is to estimate $\fnff (\bp)$, for a symmetric property $\fnff(\cdot)$.  Formally, a property is symmetric if it is invariant to permutating the labels, i.e. it is a function of the multiset of probabilities and does not depend on the symbol labels. There are many well-known well-studied such properties, including support size and coverage, entropy, distance to uniformity, Renyi entropy, and sorted $\ell_{1}$ distance. 
Understanding the computational and sample complexity for estimating these symmetric properties has led to an extensive line of interesting research over the past decade. 

Symmetric property estimation spans applications in many different fields. For instance, entropy estimation has found applications in neuroscience \cite{RWDB99}, physics \cite{VBBVP12} and others \cite{PW96, ASNRMAS01}. Support size and coverage estimation were initially used in estimating ecological diversity \cite{Chao84, Chao92, BF93, CCGLMCL12} and subsequently applied to many different applications \cite{ET76,TE87, Fur05, KLR99, PBGELLSD01, DS13, RCSWTKRWC09, GTPB07, HHRB01}. For applications of other symmetric properties we refer the reader to \cite{HJW17, HJM17, AOST14, RVZ17, ZVVKCSLSDM16, WY16a, RRSS07, WY15, OSW16, VV11a, WY16, JVHW15, JHW16, VV11b}. 

Early work on symmetric property estimation developed estimators tailored to the particular property of interest. Consequently, a fundamental and important open questions was to come up with an estimator that is \emph{universal}, i.e. the same esstimator could be used for all symmetric properties. A natural approach for constructing universal estimators is plug-in approach, where given samples we first compute a distribution independent of the property and later we output the (value of this) property for the computed distribution as our estimate. 

Our approach is based on the observation (see~\cite{ADOS16}) that a sufficient statistic for estimating a symmetric property from a sequence of samples is the profile, i.e. the multiset of frequencies of symbols in the sequence; e.g. the profile of sequence $abbc$ is $\{2,1,1\}$. We provide an efficient universal estimator that is based on 
the plug-in approach applied to
the \emph{profile maximum likelihood (PML)} distribution introduced by Orlitsky et al.~\cite{OSSVZ04}: given a sequence of $n$ samples, PML is the distribution that maximizes the likelihood of the observed profile. The problem of computing the PML distribution has been studied in several papers since, applying heuristic approaches such as Bethe/Sinkhorn approximation \cite{Von12, Von14}, the EM algorithm \cite{OSSVZ04}, a dynamic programming \cite{PJW17} and algebraic methods  \cite{ADMOP10}.

A recent paper of Acharya et al.~\cite{ADOS16} showed that a plug-in estimator using the optimal PML distribution is universal in estimating various symmetric properties of distributions. In fact it suffices to compute a $\beta$-approximate PML distribution (i.e. a distribution that approximates the PML objective to within a factor of $\beta$) for $\beta >\exps{-n^{1-\delta}}$ for constant $\delta > 0$. Previous work of the authors in \cite{CSS19}, gave the first efficient algorithm to compute a $\beta$-approximate PML for some non-trivial $\beta$. In particular, \cite{CSS19} gave a nearly linear running time algorithm to compute an $\exp(-O(n^{2/3} \log n))$-approximate PML distribution. In this work, we give an efficient algorithm to compute an $\exp(-O(\sqrt{n} \log n))$-approximate PML distribution. 

The parameter $\beta$ in $\beta$-approximate PML effects the error parameter regime under which the estimator is sample complexity optimal. Smaller values of $\beta$ yield a universal estimator that is sample optimal over broader parameter regime. For instance, \cite{CSS19} show that $\exps{-O(n^{2/3} \log n)}$-approximate PML\footnote{Throughout this paper, $\otilde(\cdot)$ hides poly $\log n$ terms.} is sample complexity optimal for estimating certain symmetric properties within accuracy for $\epsilon>n^{-0.16666}$. On the other hand \cite{ADOS16} showed that computing an $\exps{-O(\sqrt{n} \log n)}$-approximate PML is sample complexity optimal for $\epsilon>n^{-0.249}$. However note that, using the current analysis techniques \cite{ADOS16} we are unsure on how to exploit the computation of exact PML any better than computing an $\exps{-O(\sqrt{n} \log n)}$-approximate PML and they both are sample complexity optimal over the same error parameter regime.

In our work, we use the Bethe approximation of the permanent or the Bethe permanent (for short), a previously proposed heuristic to compute an approximate PML distribution.
This is based on the Bethe free energy approximation originating in statistical physics and is very closely connected to the belief propagation algorithm~\cite{YFW05,Von13}. The idea of using the Bethe permanent for computing an approximate PML distribution comes from the fact that the likelihood of a profile with respect to a distribution can be written as the permanent of a non-negative Vandermonde matrix (which we call the profile probability matrix). 
For a $N\times N$ non-negative matrix, \cite{GS14} show that the ratio between the permanent and the Bethe permanent of a matrix is upper bounded by $1.9022^{N} \leq 2^{N}$, that was later improved to $\sqrt{2}^{N}$ \cite{AR19}\footnote{Note that previous results on the Bethe permanent do not immediately imply non-trivial results for PML. For consistency with the literature, we use approximation factors $<1$ for PML.}. A natural question is \emph{whether the approximation ratio of the Bethe permanent depends on some other structural parameter better than the input dimension of the matrix?} 
In this work, 
we show that the approximation ratio between the permanent and the Bethe permanent is upper bounded by an exponential in the non-negative rank of the matrix (up to a logarithmic factor). We also give an explicit construction of a matrix to show that our result for this structural parameter is asymptotically tight. As the non-negative rank of any $N\times N$ non-negative matrix is at most $N$,  our analysis implies an upper bound of $c^{N}$ for some constant $c>0$ on the approximation ratio. Therefore our work (asymptotically) generalizes previous results for general non-negative matrices.

To obtain our efficient algorithm, we prove a slightly stronger statement than the bound of the Bethe permanent of a matrix with non-negative rank at most $k$. 
We show that a scaling of a simpler approximation of the permanent known as the Sinkhorn\footnote{Sinkhorn is also called as capacity in the literature.} permanent also approximates the permanent up to exponential in the non-negative rank of the matrix (up to log factors). 
This implies our bound for the Bethe permanent and shows that scaled Sinkhorn is a compelling alternative to Sinkhorn, with a tighter worst-case multiplicative approximation to the permanent.

An immediate application of our work on the Bethe and the scaled Sinkhorn permanent is to approximate PML. Given $n$ samples, the number of distinct columns in the profile probability matrix is always upper bounded by $\sqrt{n}$, i.e. its non-negative rank is at most $\sqrt{n}$. Therefore our analysis of the scaled Sinkhorn permanent immediately implies an $\expo{-O(\sqrt{n} \log n)}$ approximation to the PML objective with respect to a fixed distribution. This result, combined with probability discretization, results in a convex program whose optimal solution is a fractional representation of an approximate PML distribution. We round this fractional solution to output a valid distribution that is an $\expo{-O(\sqrt{n} \log n)}$-approximate PML distribution. Surprisingly the resulting convex program is exactly the same as the one in \cite{CSS19}, where a completely different (combinatorial) technique was used to arrive at the convex program. Our work here provides a better analysis of the convex program in \cite{CSS19} using a more delicate and sophisticated rounding algorithm.

\paragraph{Organization of the paper:} In \Cref{sec:prelims} we present preliminaries. In \Cref{sec:results}, we provide the main results of the paper. In \Cref{sec:bethe}, we analyze the scaled Sinkhorn permanent of structured matrices. In \Cref{sec:sinkcol}, we prove an upper bound for the approximation ratio of the scaled Sinkhorn permanent to the permanent as a function of the number of distinct columns. In \Cref{sec:lowrank}, we prove the generalized result of the scaled Sinkhorn permanent for the low non-negative rank matrices. In \Cref{sec:lbbethesink}, we prove the lower bound for the Bethe and scaled Sinkhorn approximations of the permanent.  In \Cref{sec:pml}, we combine the result for the scaled Sinkhorn permanent with the idea of probability discretization to provide the convex program that returns a fractional representation of an approximate PML distribution. In the same section, we provide the rounding algorithm to return a valid approximate PML distribution.

\subsection{Overview of Techniques}
In \cite{CSS19}, the authors
presented a convex relaxation for the PML objective.
This was obtained by a combinatorial view of the PML
problem.
In a sequence of steps, they discretized the set of probabilities and the frequencies, grouped the terms in the objective into groups and developed a relaxation for the sum of terms in the largest group, giving an $\exp({-{O}(n^{2/3} \log n)})$ approximation. 
In this paper, we exploit the fact that the likelihood of a profile with respect to a distribution is the permanent of a certain non-negative Vandermonde matrix (referred to here as the profile probability matrix with respect to a distribution) and that the PML objective is an optimization problem over such permanents.
We work with the same convex relaxation we derived earlier, but relate it to the well known Bethe and scaled Sinkhorn approximations for the permanent. In fact, Vontobel~\cite{Von12,Von14} proposed the Bethe and Sinkhorn permanents as a heuristic approximation of the PML objective, but bounding the quality of the solution was an open problem~\cite{Von11}.
We show that both the Bethe and scaled Sinkhorn permanents are within a factor 
$\expo{-O(\sqrt{n} \log n)}$ of the PML objective.
Enroute, we show that the approximation ratio of the Bethe and scaled Sinkhorn permanents for any non-negative matrix $A$ are 
upper bounded by the exponential in the non-negative rank of matrix $A$. This is a strengthening of the well known $\expo{O(N)}$ upper bound on the approximation ratio of both the Bethe and scaled Sinkhorn permanents of an $N \times N$ matrix.

In \cite{CSS19}, the fact that the convex problem we obtained was a relaxation of the PML objective followed directly from the combinatorial derivation of our relaxation.
By contrast, our analysis here exploits the non-trivial fact that the Bethe and scaled Sinkhorn approximations are lower bounds for the permanent of a non-negative matrix.
The Bethe and scaled Sinkhorn permanents of the profile probability matrix $A$ with respect to a fixed distribution are optimum solutions to maximization problems over doubly stochastic matrices $Q$ where the objective functions have entropy-like terms involving the entries of $A$ and $Q$.
In order to obtain an upper bound on the Bethe and scaled Sinkhorn approximation as a function of the non-negative rank, we show the existence of a doubly stochastic matrix $Q$ as a witness such that the objective of the Bethe and scaled Sinkhorn w.r.t.~$Q$ upper bounds the permanent of $A$ within the desired factor.

We first work with a simpler setting of matrices $A$ with at most $k$ distinct columns.\footnote{In the final preparation of this paper for posting an anonymous reviewer showed that a simpler proof for the distinct column case can be derived using Corollary 3.4.5 of Barvinok's book \cite{Bar17}. The proof of the Corollary 3.4.5 further uses the famous Bregman–Minc inequality. We thank the anonymous reviewer for this and include the derivation in \Cref{sec:altproof}. In constrast, our proof is self-contained and we believe it provides further insight into the structure of the Sinkhorn/Bethe approximations. See \Cref{sec:related} for further details.}
Here we consider a modified matrix $\hat{A}$ that contains the $k$ distinct columns of $A$.
We define a distribution $\mu$ on permutations of the domain where the probability of a permutation $\sigma$ is proportional to its contribution to the permanent of $A$.
There is a many-to-one mapping from such permutations to 0-1 $N \times k$ matrices with row sums 1 and column sums $\phi_j$, the number of times the  $j$th column of $\hat{A}$ appears in $A$.
We next define an $N \times k$ real-valued, non-negative matrix $P$ with row sums 1 and column sums $\phi_j$, in terms of the marginals of the distribution $\mu$.
We also define a different distribution $\nu$ on 0-1 $N \times k$ row-stochastic matrices by independent sampling from $P$.
Finally, we use the fact that the KL-divergence between $\mu$ and $\nu$ is non-negative to get the required upper bound on the scaled Sinkhorn approximation with a doubly stochastic witness $Q$ (obtained from $P$). This proof technique is inspired by the recent work of Anari and Rezaei \cite{AR19} that gives a tight $\sqrt{2}^N$ bound on the approximation ratio of the Bethe approximation for the permanent of an $N \times N$ non-negative matrix.

Though this bound on the quality of the Bethe and scaled Sinkhorn approximations for non-negative matrices with $k$ distinct columns suffices for our PML applications, interestingly we show that it can be extended to non-negative matrices with bounded rank. 
In order to obtain an upper bound on the Bethe and scaled Sinkhorn approximation as a function of the non-negative rank of $A$, recall that we need to show the existence of a suitable doubly stochastic witness $Q$ which certifies the required guarantee.
We express the permanent of $A$ as the sum of $O(\exp(k \log(N/k)))$ terms of the form $\perm(U)\perm(V)$ where matrices $U$ and $V$ have at most $k$ distinct columns.
We focus on the largest of these terms, and construct a doubly stochastic witness $Q$ for matrix $A$ from the witnesses for matrices $U$ and $V$ in this largest term. This doubly stochastic witness $Q$ certifies the required guarantee and we get an upper bound on the scaled Sinkhorn approximation as a function of the non-negative rank. This result for the scaled Sinkhorn approximation further implies an upper bound for the Bethe approximation.

Even with this improved bound on the quality of the Bethe and scaled Sinkhorn approximations as applied to the PML objective, challenges remain in obtaining an improved approximate PML distribution.
In particular, we do not know of an efficient algorithm to maximize the Bethe or the scaled Sinkhorn permanent of the profile probability matrix over a family of distributions as it would be needed to compute the Bethe or the scaled Sinkhorn approximation to the optimum of the PML objective. 
Prior work by Vontobel suggests an alternating maximization approach, but this is only guaranteed to produce a local optimum.
To address this, we revisit the efficiently computable convex relaxation from \cite{CSS19} and show that this is suitably close to the scaled Sinkhorn approximation.
This is quite surprising as the prior derivation of this relaxation in \cite{CSS19} was purely combinatorial and had nothing to do with the scaled Sinkhorn approximation.

The final challenge towards obtaining our PML results is to round the fractional solution produced so that the approximation guarantee is preserved.
The rounding procedure from \cite{CSS19}  does not immediately suffice, but we present a more sophisticated and delicate rounding procedure that does indeed give us the required approximation guarantee.
The rounding algorithm proceeds in three steps, where in the first step we first apply a procedure analogous to \cite{CSS19} to handle large probability values and in the later steps we provide a new procedure to the smaller probability values;  in each step, we ensure that the objective function does not drop significantly. The input to the rounding procedure is a matrix where the rows correspond to discretized probability values 
and the columns correspond to distinct frequencies.
We create rows corresponding to new probability values in the course of the rounding algorithm, maintain column sums and eventually ensure that all row sums are integral, and ensure that the objective function has not dropped significantly.
\section{Preliminaries}\label{sec:prelims}
\newcommand{\vj}{\textbf{v}_{j}}
\newcommand{\uj}{\textbf{u}_{j}}
Let $[a,b]$ and $[a,b]_{\R}$ denote the interval of integers and reals $\geq a$ and $\leq b$ respectively. 
Let $\bX$ be the domain of elements and $\Ns\defeq |\bX|$ be its size. Let $\ma \in \R^{\bX \times \bX}$ be a non-negative matrix, where its $(x,y)$'th entry is denoted by $\Axy$. We further use $\ma_{x:}$ and $\ma_{:y}$ to denote the row and column corresponding to $x$ and $y$ respectively. The non-negative rank of a non-negative matrix $\ma \in \R^{\bX \times \bX}$ is equal to the smallest number $k$ such there exist non-negative vectors $\vj,\uj \in \R^{\bX}$ for $j \in [1,k]$ such that $\ma=  \sum_{j \in [1,k]} \vj \uj^{\top}$. Let $\permD$ be the set of all permutations of domain $\bX$ and we denote a permutation $\sigma$ in the following way $\sigma=\{(x,\sigma(x)) \text{ for all } x\in \bX \}$. The permanent of a matrix $\ma$ denoted by $\perm(\ma)$ is defined as follows,
$$\perm(\ma)\defeq \sum_{\sigma \in \permD} \prod_{e \in \sigma}\ma_{e}~.$$
Let $\dstoc \subseteq \kds$ be the set of all non-negative matrices that are doubly stochastic. For any matrix $\ma \in \kds$ and $\bQ \in \dstoc$, we define the following set of functions:
\begin{equation}\label{eq:ftsd}
\frst(\ma,\bQ)\defeq \sum_{(x,y) \in \bX \times \bX} \bQ_{x,y}\log \left(\frac{\ma_{x,y}}{\bQ_{x,y}}\right)\quad \text{and} \quad \scnd(\bQ)=\sum_{(x,y) \in \bX \times \bX}(1-\bQ_{x,y})\log \left( 1-\bQ_{x,y} \right)~.
\end{equation}
Further,
$$\betheq(\ma,\bQ)\defeq \frst(\ma,\bQ) + \scnd(\bQ)~.$$
Using these definitions, we define the Bethe permanent of a matrix. 
\begin{defn}\label{def:bethe}
For a matrix $\ma \in \kds$, the Bethe permanent of $\ma$ is defined as follows,
$$\bethe(\ma)\defeq \max_{\bQ \in \dstoc} \expo{\betheq(\ma,\bQ)}~.$$
\end{defn}
A well known and important result about the Bethe permanent is that it lower bounds the value of permanent of a non-negative matrix and we state this result next.
\begin{lemma}[\cite{Gur11} based on \cite{Sch98}]\label{bethelb}
	For any non-negative $\ma \in \kds$, the following holds,
	$$\bethe(\ma)\leq \perm(\ma)$$
\end{lemma}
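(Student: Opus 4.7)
The plan is to derive the bound from a pointwise inequality that holds for every doubly stochastic matrix $\bQ \in \dstoc$, namely
$$\exp(\betheq(\ma,\bQ)) \leq \perm(\ma).$$
Taking the maximum over $\bQ$ on the left then immediately yields $\bethe(\ma) \leq \perm(\ma)$ by \Cref{def:bethe}, so I would concentrate entirely on this pointwise statement.

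First I would unpack the exponential of the Bethe functional. Using the definition in \eqref{eq:ftsd}, one has
$$\exp(\betheq(\ma,\bQ)) \;=\; \prod_{(x,y) \in \bX \times \bX}\left(\frac{\ma_{x,y}}{\bQ_{x,y}}\right)^{\bQ_{x,y}} \prod_{(x,y) \in \bX \times \bX}(1-\bQ_{x,y})^{1-\bQ_{x,y}},$$
with the standard conventions $0 \log 0 = 0$ and $0/0$ interpreted so that entries where $\bQ_{x,y} = 0$ contribute a factor of $1$. Thus the pointwise inequality reduces to
$$\perm(\ma) \;\geq\; \prod_{(x,y)}\left(\frac{\ma_{x,y}}{\bQ_{x,y}}\right)^{\bQ_{x,y}} \prod_{(x,y)}(1-\bQ_{x,y})^{1-\bQ_{x,y}}.$$

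The central ingredient I would invoke is the Schrijver--Gurvits permanental inequality. Schrijver's original bound \cite{Sch98} establishes that for a doubly stochastic matrix $\bQ$, $\perm(\bQ) \geq \prod_{x,y}(1-\bQ_{x,y})^{1-\bQ_{x,y}}$. Gurvits \cite{Gur11}, using the theory of real stable polynomials and the capacity method, extended this into the weighted form displayed above that incorporates an arbitrary non-negative matrix $\ma$. Granting this inequality, the lemma is now only algebra: take logarithms to recover $\frst(\ma,\bQ) + \scnd(\bQ) = \betheq(\ma,\bQ)$, exponentiate, and then take the maximum over $\bQ \in \dstoc$ to obtain $\bethe(\ma) \leq \perm(\ma)$.

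The main obstacle is of course the Schrijver--Gurvits inequality itself, which I would import as a black box rather than reprove. For a self-contained derivation, the natural route is to introduce the polynomial $p_{\ma}(z) = \prod_{x} \sum_{y} \ma_{x,y}\, z_y$, observe that $\perm(\ma)$ is the coefficient of $\prod_y z_y$ in $p_{\ma}$, and then apply Gurvits's capacity lower bound for real stable polynomials together with an induction that strips one coordinate at a time while tracking the improvement factors $(1-\bQ_{x,y})^{1-\bQ_{x,y}}$. Since this machinery is standard in the permanent-approximation literature and is precisely what is cited in the statement, in the actual write-up I would only state the inequality, cite \cite{Gur11,Sch98}, and present the short algebraic rearrangement above that extracts \Cref{bethelb} from it.
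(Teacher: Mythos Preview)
Your proposal is correct and matches the paper's treatment: the paper does not prove \Cref{bethelb} at all but simply cites it as a known result from \cite{Gur11,Sch98}, and your plan likewise imports the Schrijver--Gurvits inequality as a black box, adding only the short algebraic unpacking that links $\exp(\betheq(\ma,\bQ))$ to the cited bound.
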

We next define the Sinkhorn permanent of a matrix and later we state the relationship between the Bethe and the Sinkhorn permanent.
\begin{defn}
For a matrix $\ma \in \kds$, the Sinkhorn permanent of $\ma$ is defined as follows,
	$$\sink(\ma)\defeq \max_{\bQ \in \dstoc} \expo{\frst(\ma,\bQ)}~.$$
\end{defn}
To establish the relationship between the Bethe and the Sinkhorn permanent we need the following lemma from \cite{GS14}.
\begin{lemma}[Proposition 3.1 in \cite{GS14}]\label{lem:termlb}
	For any distribution $\bp \in \R_{\geq 0}^{\bX}$, the following holds,
	$$\sum_{x\in \bX}(1-\bp_{x})\log(1-\bp_{x})\geq -1~.$$
\end{lemma}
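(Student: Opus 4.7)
The plan is to reduce the inequality to a one-variable pointwise bound and then sum over the domain. Specifically, I would prove that for every $t \in [0,1]$,
\[
(1-t)\log(1-t) \;\geq\; -t,
\]
using the convention $0\log 0 = 0$. Once this is established, summing over $x \in \bX$ and using the fact that $\bp$ is a probability distribution (so $\sum_{x \in \bX} \bp_x = 1$) immediately gives
\[
\sum_{x \in \bX}(1-\bp_x)\log(1-\bp_x) \;\geq\; -\sum_{x \in \bX}\bp_x \;=\; -1,
\]
which is exactly the desired bound.

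To establish the pointwise inequality, I would introduce the auxiliary function $g(t) \defeq (1-t)\log(1-t) + t$ on the interval $[0,1]$, check that $g(0) = 0$, and compute
\[
g'(t) \;=\; -\log(1-t) - 1 + 1 \;=\; -\log(1-t),
\]
which is non-negative on $[0,1)$. Hence $g$ is non-decreasing on $[0,1)$, so $g(t) \geq g(0) = 0$ for all $t \in [0,1)$, and the endpoint $t = 1$ is handled by the $0\log 0 = 0$ convention (both sides equal $1$ in the limit, giving $g(1) = 1 \geq 0$). Rearranging $g(t) \geq 0$ yields the pointwise inequality.

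There is no real obstacle: the entire argument is an elementary calculus computation followed by summation over $\bX$. The only care required is the standard convention $0 \log 0 = 0$, which is needed when some coordinate $\bp_x$ equals $1$ (i.e.\ $\bp$ is a point mass). As a sanity check, the bound is tight precisely when $\bp$ is a point mass: then exactly one term contributes $(1-1)\log(1-1) = 0$ and all other terms contribute $(1-0)\log(1-0) = 0$, giving sum $0 \geq -1$; alternatively, in the limit where one coordinate approaches $1$, the inequality $(1-t)\log(1-t) \geq -t$ is tight as $t \to 1$, justifying the constant $-1$ on the right-hand side.
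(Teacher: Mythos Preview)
Your proof is correct. The paper does not provide its own proof of this lemma; it is simply cited from \cite{GS14} as Proposition~3.1 there, so there is nothing to compare against. Your argument---the pointwise bound $(1-t)\log(1-t)\ge -t$ via monotonicity of $g(t)=(1-t)\log(1-t)+t$, followed by summation---is the standard elementary route.

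One small remark on your sanity check: the summed inequality is \emph{not} tight at a point mass (you correctly compute the sum to be $0$, not $-1$, in that case). The pointwise inequality $(1-t)\log(1-t)\ge -t$ is an equality only at $t=0$, so the constant $-1$ is approached only in the limit where the mass is spread thinly over many coordinates (e.g.\ uniform on $N$ points as $N\to\infty$, where each term is $(1-1/N)\log(1-1/N)\approx -1/N$). This does not affect the validity of the proof, only the commentary.
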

For any matrix $\bQ \in \dstoc$, each row of $\bQ$ is a distribution; therefore the following holds,
$$\scnd(\bQ) \geq -\Ns~.$$
As a corollary of the above inequality we have,
\begin{cor}\label{cor:ssinkbethe}
	For any non-negative matrix $\ma \in \kds$, the following inequality holds,
	$$\exp(-\Ns) \sink(\ma) \leq \bethe(\ma)~.$$
	\end{cor}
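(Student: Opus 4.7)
The plan is to use the already-established inequality $\scnd(\bQ) \geq -\Ns$ (noted immediately before the corollary) to relate the Bethe objective to the Sinkhorn objective pointwise in $\bQ$, then take the maximum.

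First I would let $\bQ^{\sink} \in \dstoc$ be any maximizer of $\exp(\frst(\ma,\bQ))$ over $\dstoc$, so that by definition
\[
\sink(\ma) = \expo{\frst(\ma,\bQ^{\sink})}.
\]
Since $\bQ^{\sink}$ is doubly stochastic, in particular each of its rows is a probability distribution, and so by Lemma \ref{lem:termlb} applied row by row (there are $\Ns$ rows) one obtains $\scnd(\bQ^{\sink}) \geq -\Ns$, exactly as observed in the text preceding the corollary.

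Next I would combine these facts using the definition $\betheq(\ma,\bQ)=\frst(\ma,\bQ)+\scnd(\bQ)$:
\[
\betheq(\ma,\bQ^{\sink}) = \frst(\ma,\bQ^{\sink}) + \scnd(\bQ^{\sink}) \geq \frst(\ma,\bQ^{\sink}) - \Ns.
\]
Exponentiating and noting that $\bethe(\ma) = \max_{\bQ \in \dstoc} \expo{\betheq(\ma,\bQ)} \geq \expo{\betheq(\ma,\bQ^{\sink})}$ (since $\bQ^{\sink} \in \dstoc$ is a feasible point for the Bethe maximization), we conclude
\[
\bethe(\ma) \;\geq\; \expo{\frst(\ma,\bQ^{\sink}) - \Ns} \;=\; \expo{-\Ns}\,\sink(\ma),
\]
which is the desired inequality. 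There is no real obstacle here: the entire content is that the additive $\scnd$ term in the Bethe objective is bounded below by $-\Ns$, so the maximizer of the Sinkhorn objective serves as an explicit feasible witness for the Bethe maximization losing at most a factor of $\expo{-\Ns}$.
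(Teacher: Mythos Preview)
Your proof is correct and follows essentially the same approach as the paper: the paper simply notes that $\scnd(\bQ) \geq -\Ns$ for every $\bQ \in \dstoc$ and declares the corollary immediate, while you make the argument explicit by plugging in the Sinkhorn maximizer as a feasible witness for the Bethe maximization.
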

Later we will see that it is convenient to work with $\exp(-\Ns) \sink(\ma)$ than $\sink(\ma)$ itself; we define this expression to be scaled Sinkhorn and we formally state it next.
\begin{defn}\label{def:ssink}
	For a matrix $\ma \in \kds$, the scaled Sinkhorn permanent of $\ma$ is defined as follows,
	$$\ssink(\ma)\defeq \max_{\bQ \in \dstoc} \expo{\frst(\ma,\bQ)-\Ns}~.$$
\end{defn}
The above expression can be equivalently stated as,
$$\ssink(\ma)= \exp(-\Ns) \sink(\ma)~.$$
Combining \Cref{bethelb} and \Cref{cor:ssinkbethe} we get the following result.
\begin{cor}\label{cor:ssinklb}
For any matrix $\ma \in \kds$, the following inequality holds,
	$$\ssink(\ma) \leq \bethe(\ma),$$
	which further implies, 
	$$\ssink(\ma) \leq \perm(\ma)~.$$
\end{cor}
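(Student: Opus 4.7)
The plan is to observe that this corollary is an immediate consequence of chaining together two results already established in the excerpt, combined with the equivalent formulation of the scaled Sinkhorn permanent given just after \Cref{def:ssink}. So the proof is essentially a short calculation rather than a new argument.

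First I would recall, directly from \Cref{def:ssink} (or equivalently from the identity stated right after it), that
$$\ssink(\ma) \;=\; \exp(-\Ns)\,\sink(\ma)~.$$
Then I would invoke \Cref{cor:ssinkbethe}, which gives $\exp(-\Ns)\,\sink(\ma) \leq \bethe(\ma)$, to conclude the first inequality
$$\ssink(\ma) \;\leq\; \bethe(\ma)~.$$
For the second inequality, I would simply apply \Cref{bethelb} (the Schrijver/Gurvits lower bound on the permanent via the Bethe permanent), which states $\bethe(\ma) \leq \perm(\ma)$, and combine it with the first inequality to obtain $\ssink(\ma) \leq \perm(\ma)$, as required.

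There is no real obstacle here: both ingredients ($\bethe \leq \perm$ and $\exp(-\Ns)\sink \leq \bethe$) have already been proved or cited, and the definition of $\ssink$ was chosen precisely so that these bounds compose cleanly. The only thing to verify is that $\ma \in \kds$ and that $\dstoc$ is non-empty (so the maxima are well defined), both of which are immediate from the setup since the maximization is over doubly stochastic matrices and the arguments $\frst(\ma,\bQ)$ and $\betheq(\ma,\bQ)$ are well defined for $\ma$ non-negative with the usual conventions $0 \log 0 = 0$. Hence the corollary follows in two lines.
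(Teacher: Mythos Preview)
Your proposal is correct and matches the paper's own approach exactly: the paper states the corollary as following immediately by ``Combining \Cref{bethelb} and \Cref{cor:ssinkbethe},'' which is precisely the two-step chain you wrote out using the identity $\ssink(\ma) = \exp(-\Ns)\sink(\ma)$.
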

Other than approximations to the permanent of a matrix, we next state two important results that will be helpful throughout the paper. The first result is the Stirling's approximation for factorial function and the second is the non-negativity result of KL divergence between two distributions.
\begin{lemma}[Stirling's approximation]\label{lem:ster}
	For all $n \in \Z$, the following holds:
	$$ \exp(n\log n-n)\leq n! \leq O(\sqrt{n}) \exp(n\log n-n)~.$$
\end{lemma}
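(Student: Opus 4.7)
The plan is to treat this as the classical Stirling-type bound and reduce both inequalities to integral comparisons after passing to logarithms. Write $\log(n!) = \sum_{k=1}^{n} \log k$ and estimate this sum from above and below using the antiderivative $\int \log x \, dx = x\log x - x$.

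For the lower bound $n! \geq \exp(n\log n - n)$, I would use monotonicity of $\log$ on $[1,\infty)$: for each $k \geq 2$, $\log k \geq \int_{k-1}^{k} \log x \, dx$ since $\log x \leq \log k$ on $[k-1,k]$. Summing from $k=2$ to $n$ and adding $\log 1 = 0$ yields
\[
\log(n!) \;\geq\; \int_{1}^{n} \log x \, dx \;=\; n \log n - n + 1 \;\geq\; n \log n - n,
\]
which is the desired lower bound (in fact slightly stronger).

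For the upper bound $n! \leq O(\sqrt{n})\exp(n \log n - n)$, I would introduce the sequence
\[
a_n \;\defeq\; \frac{n!}{n^{n+1/2}\, e^{-n}}
\]
and show it is bounded above by a universal constant. Computing the ratio,
\[
\log(a_{n+1}/a_n) \;=\; 1 \;-\; \left(n + \tfrac{1}{2}\right) \log\!\left(1 + \tfrac{1}{n}\right).
\]
Expanding $\log(1+1/n) = 1/n - 1/(2n^2) + 1/(3n^3) - \cdots$ and multiplying by $(n + 1/2)$ gives $(n+1/2)\log(1+1/n) = 1 + 1/(12 n^2) + O(1/n^3)$, so $\log(a_{n+1}/a_n) = -1/(12 n^2) + O(1/n^3) < 0$ for every $n \geq 1$ (one checks the remainder does not spoil the sign by a crude tail bound on the alternating series). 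Thus $\{a_n\}$ is monotone decreasing and bounded below by $0$, so $a_n \leq a_1 = e$ for all $n$, which rearranges to $n! \leq e \sqrt{n}\, n^n e^{-n} = O(\sqrt{n}) \exp(n \log n - n)$.

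There is no real obstacle here, as this is entirely classical; the only care needed is to verify that the higher-order terms in the Taylor expansion of $(n+1/2)\log(1+1/n)$ do not flip the sign of $\log(a_{n+1}/a_n)$, which can be done by bounding the tail of the alternating series by its first omitted term. Note also that the precise constant in the $O(\sqrt{n})$ factor is $\sqrt{2\pi}$ by Wallis' product, but we do not need to identify it for the applications in the paper.
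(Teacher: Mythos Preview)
Your proof is correct; the integral comparison gives the lower bound and the monotonicity of $a_n = n!/(n^{n+1/2}e^{-n})$ gives the upper bound, with the sign of $\log(a_{n+1}/a_n)$ most cleanly handled via the elementary inequality $\log(1+x) > 2x/(2+x)$ for $x>0$ rather than a Taylor tail bound. The paper states this lemma as a standard fact without proof, so there is no argument to compare against.
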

Let $\mu$ and $\nu$ be distributions defined on some domain $\Omega$. The \emph{KL divergence} denoted $\KL{\mu\|\nu}$ between distributions $\mu$ and $\nu$ is defined as follows,
$$\KL{\mu\|\nu}\defeq \sum_{\mx \in \Omega}\mu(\mx) \log \frac{\mu(\mx)}{\nu(\mx)}=\expt{\mx \sim \mu}{\log \mu(\mx)}- \expt{\mx \sim \mu}{\log \nu(\mx)}$$
\begin{lemma}[Non-negativity of KL divergence]\label{lem:kldiv}
For any distributions $\mu$ and $\nu$ defined on domain $\Omega$, the KL divergence between distributions $\mu$ and $\nu$ satisfies,
$$\KL{\mu\|\nu} \geq 0~.$$
\end{lemma}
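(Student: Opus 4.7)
The plan is to prove this via the standard application of Jensen's inequality (equivalently, the elementary logarithmic inequality $\log t \le t - 1$, valid for all $t > 0$). The statement is a classical information-theoretic fact, and the proof is short; the main task is to carefully handle the support conditions so that the sums are well-defined.

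First I would reduce to the support of $\mu$: by the convention $0 \log \tfrac{0}{\nu(\mx)} = 0$, the sum defining $\KL{\mu\|\nu}$ is effectively taken over $\mx \in \Omega$ with $\mu(\mx) > 0$. I would note that if $\nu(\mx) = 0$ for some $\mx$ with $\mu(\mx) > 0$, then $\KL{\mu\|\nu} = +\infty$ and the inequality is trivial, so we may assume $\nu(\mx) > 0$ whenever $\mu(\mx) > 0$. Under this assumption, the ratio $\nu(\mx)/\mu(\mx)$ is well-defined and positive on the support of $\mu$.

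Next, I would rewrite
$$-\KL{\mu\|\nu} = \sum_{\mx \in \Omega : \mu(\mx) > 0} \mu(\mx) \log \frac{\nu(\mx)}{\mu(\mx)}$$
and apply the pointwise inequality $\log t \le t - 1$ with $t = \nu(\mx)/\mu(\mx)$. This yields
$$-\KL{\mu\|\nu} \le \sum_{\mx \in \Omega : \mu(\mx) > 0} \mu(\mx) \left( \frac{\nu(\mx)}{\mu(\mx)} - 1 \right) = \sum_{\mx \in \Omega : \mu(\mx) > 0} \nu(\mx) - \sum_{\mx \in \Omega : \mu(\mx) > 0} \mu(\mx).$$
The second sum equals $1$ since $\mu$ is a distribution on $\Omega$, and the first sum is at most $1$ since $\nu$ is a distribution. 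Hence $-\KL{\mu\|\nu} \le 1 - 1 = 0$, which gives $\KL{\mu\|\nu} \ge 0$, as desired.

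There is no real obstacle here; the only subtle point is the convention for $0 \log 0$ and the case $\nu(\mx) = 0$ on the support of $\mu$, which I would dispatch at the outset as above. Alternatively, one could present the argument as a single application of Jensen's inequality to the strictly convex function $\phi(t) = -\log t$ on the random variable $T = \nu(\mx)/\mu(\mx)$ under $\mu$: by Jensen, $\expt{\mx\sim\mu}{-\log T} \ge -\log \expt{\mx\sim\mu}{T} \ge -\log 1 = 0$, yielding the same conclusion. I would include a brief remark that equality holds iff $\mu = \nu$, following from the strict convexity of $-\log$, though this is not required for the stated inequality.
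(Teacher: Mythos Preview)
Your proof is correct and complete; it is the standard argument via the inequality $\log t \le t-1$ (equivalently, Jensen's inequality for the concave function $\log$). The paper itself states this lemma without proof, treating it as a well-known fact, so there is nothing to compare against.
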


In the remainder of this section we provide formal definitions related to PML. 
\subsection{Profile maximum likelihood}
Let $\simplex \subset [0,1]_{\R}^{\bX}$ be the set of all discrete distributions supported on domain $\bX$. 
Here on we use the word distribution to refer to discrete distributions.
Throughout this paper we assume that we receive a sequence of $n$ independent samples from an underlying distribution $\bp \in \simplex$. Let $\bX ^n$ be the set of all length $n$ sequences and $y^n \in \bX^n$ be one such sequence with $y^n_{i}$ denoting its $i$th element.
The probability of observing sequence $y^n$ is:
$$\bbP(\bp,y^n) \defeq \prod_{x \in \bX}\bp_x^{\bff(y^n,x)}$$
where $\bff(y^n,x)= |\{i\in [n] ~ | ~ y^n_i = x\}|$ is the frequency/multiplicity of symbol $x$ in sequence $y^n$ and $\bp_x$ is the probability of domain element $x\in \bX$. 

For any given sequence one could define its profile (histogram of a histogram or fingerprint) that is sufficient statistic for symmetric property estimation. 
\begin{defn}[Profile]
For any sequence $y^n \in \bX^n$, let $\setd=\{ \bff(y^n,x) \}_{x \in \bX} \backslash \{0\}$ be the set of all its non-zero distinct frequencies and $\eled_1,\eled_2,\dots, \eled_{|\setd|}$ be elements of the set $\setd$. The \emph{profile} of a sequence $y^n \in \bX^{n}$ denoted $\phi=\Phi(y^n) \in \Z^{|\setd|}$ is 
\[
\phi\defeq(\F_j)_{j\in [1, |\setd|]}
\text{ , where }
\F_j=\F_j(y^n)\defeq|\{x\in \bX ~|~\bff(y^n,x)=\eled_{j} \}|
\]
is the number of domain elements with frequency $\eled_{j}$ in $y^{n}$\footnote{The profile does not contain information about the number of unseen domain elements.}. We call $n$ the length of profile $\F$ and as a function of profile $\phi$, $n = \sum_{j \in [1,|\setd|]} \eled_{j} \cdot \F_j$. Let $\Phi^n$ denote the set of all profiles of length $n$. We use $k$ to denote the number of distinct frequencies in the profile $\phi$ and $k= |\setd|$.\footnote{Note the number of distinct frequencies denoted $k$ in a length $n$ sequence is always upper bounded by $\sqrt{n}$.} For convenience we use $\mvec \in \Z^{k}$ to denote the vector of observed frequencies, therefore $\mvec_{j}=\eled_{j}$ for all $j \in [1,k]$.
\end{defn}
For any distribution $\bp \in \simplex$, the probability of a profile $\phi \in \Phi^n$ is defined as:
\begin{equation}\label{eqpml1}
\probpml(\bp,\phi)\defeq\sum_{\{y^n \in \bX^n~|~ \Phi (y^n)=\phi \}} \bbP(\bp,y^n)\\
\end{equation}

Let $x^n$ be a sequence such that $\Phi (x^n)=\phi$. We define a \emph{profile probability matrix} $\mapphi$ with respect to sequence $x^n$ (therefore profile $\phi$) and distribution $\bp$ as follows,
\begin{equation}\label{eq:matrixqphi}
\mapphi_{z,y}\defeq \bp_{z}^{\my} \text{ for all } z,y \in \bX,
\end{equation}
where $\my\defeq \bff(x^n,y)$ is the frequency of domain element $y\in \bX$ in sequence $x^n$ and recall $\Phi(x^n)=\phi$. We are interested in the permanent of the matrix $\mapphi$, and note that the $\perm(\mapphi)$ is invariant under the choice of sequences $x^n$ that satisfy $\Phi(x^n)=\phi$. Therefore we index the matrix $\mapphi$ with profile $\phi$ rather than sequence $x^n$ itself. The number of distinct columns in $\mapphi$ is equal to number of distinct observed frequencies plus one (for the unseen), i.e.  $k+1$.

 The probability of a profile $\phi \in \Phi^n$ with respect to distribution $\bp$ (from Equation 20 in \cite{OSZ03}, Equation 15 in \cite{PJW17}) in terms of permanent of matrix $\mapphi$ is given below:
\begin{equation}\label{eqpml2}
\probpml(\bp,\phi)=\cphi \cdot \left(\prod_{j\in [0,k]}\frac{1}{\phi_{j}!}\right) \cdot \perm(\mapphi)
\end{equation}
where $\cphi\defeq \frac{n!}{\prod_{j\in [1,k]}(\eled_{j}!)^{\phi_{j}}}$ and $\phi_{0}$ is the number of unseen domain elements\footnote{Given a distribution $\bp$, we know its domain $\bX$ and therefore the value of $\phi_{0}$.}.

The distribution which maximizes the probability of a profile $\phi$ is the profile maximum likelihood distribution which we formally define next.
\begin{defn}[Profile maximum likelihood] For any profile $\phi \in \Phi^{n}$, a \emph{profile maximum likelihood} (PML) distribution $\bp_{\mathrm{pml},\phi} \in \simplex$ is:
 $$\bp_{\mathrm{pml},\phi} \in \argmax_{\bp \in \simplex} \probpml(\bp,\phi)$$
 and $\probpml(\bp_{\mathrm{pml},\phi},\phi)$ is the maximum PML objective value.
\end{defn}

The central goal of this paper is to define efficient algorithms for computing approximate PML distributions defined as follows.
 
\begin{defn}[Approximate PML]
 For any profile $\phi \in \Phi^{n}$, a distribution $\bp^{\beta}_{\mathrm{pml},\phi} \in \simplex$ is a $\beta$-\emph{approximate PML} distribution if $$\probpml(\bp^{\beta}_{\mathrm{pml},\phi},\phi)\geq \beta \cdot \probpml(\bp_{\mathrm{pml},\phi},\phi)$$
\end{defn}
\section{Results}\label{sec:results}
Here we state the main results of this paper. In our first class of main results, we improve the analysis of the scaled Sinkhorn permanent for structured non-negative matrices. We first show that the scaled Sinkhorn permanent approximates the permanent of a non-negative matrix $\ma$, where the approximation factor (up to log factors) depends exponentially on the non-negative rank of the smatrix $\ma$. We formally state this result next.

\begin{restatable}[Scaled Sinkhorn permanent approximation for low non-negative rank matrices]{theorem}{thmmainlowrank}
	\label{thm:mainlowrank}
	For any matrix $\ma \in \kds$ with non-negative rank at most $k$, the following inequality holds,
	\begin{equation}\label{eq:conditiontwo}
	\ssink(\ma) \leq \perm(\ma) \leq \expo{O\left(k\log \frac{\Ns}{k}\right)} \ssink(\ma)~.
	\end{equation}
\end{restatable}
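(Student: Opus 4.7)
The plan is to prove the two inequalities separately. The lower bound $\ssink(\ma) \leq \perm(\ma)$ is already delivered by \Cref{cor:ssinklb}, so all the work lies in the upper bound $\perm(\ma) \leq \expo{O(k\log(N/k))}\ssink(\ma)$. Following the two-stage blueprint from the overview, I would first settle the simpler case in which $\ma$ has at most $k$ \emph{distinct columns}, and then reduce the general non-negative rank $k$ case to it through a structured expansion of the permanent.

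For the distinct-column case, let $\hat{\ma}$ be the $N \times k$ matrix of distinct columns of $\ma$ and $\phi_j$ be the multiplicity in $\ma$ of the $j$-th distinct column, so $\sum_j \phi_j = N$. Define a probability distribution $\mu$ on $\permD$ by $\mu(\sigma) \propto \prod_{e \in \sigma} \ma_e$ and project each $\sigma$ to the matrix $\mx \in \kzo$ that records only the \emph{type} of column each row is mapped to; the projection is $\prod_j \phi_j!$-to-one onto $0/1$ row-stochastic matrices with column sums $\phi_j$. Let $P_{x,j} \defeq \Pr_{\sigma \sim \mu}[\mx_{x,j}=1]$ be the induced marginals, and let $\nu$ be the distribution that samples each row of $\mx$ independently from $(P_{x,j})_j$. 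Applying non-negativity of KL divergence (\Cref{lem:kldiv}) between the pushforward of $\mu$ and $\nu$, combined with Stirling's estimate (\Cref{lem:ster}) on the $\prod_j \phi_j!$ factor, bounds $\log \perm(\ma)$ above by an entropy-type expression in $\hat{\ma}$ and $P$. Finally, I build a doubly stochastic witness $\bQ \in \dstoc$ by spreading $P$ uniformly across duplicate columns, $\bQ_{x,y} = P_{x,j(y)}/\phi_{j(y)}$ where $j(y)$ indexes the type of column $y$, and check directly that $\frst(\ma,\bQ) - N$ matches the resulting upper bound up to $O(k\log(N/k))$ slack.

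For the general low non-negative rank case, write $\ma = \sum_{j=1}^k \vj \uj^\top$ with $\vj, \uj \geq 0$. Expanding each $\ma_{x,\sigma(x)} = \sum_j \vj(x) \uj(\sigma(x))$ inside $\perm(\ma) = \sum_\sigma \prod_x \ma_{x,\sigma(x)}$ and grouping by the type $\mathbf{s} = (s_1,\ldots,s_k)$ of the induced assignment of rows to rank-one components yields
\begin{equation*}
\perm(\ma) \;=\; \sum_{\mathbf{s}\,:\,\sum_j s_j = N} \frac{\perm(V^{(\mathbf{s})})\,\perm(U^{(\mathbf{s})})}{\prod_j s_j!},
\end{equation*}
where $V^{(\mathbf{s})}$ is the $N \times N$ matrix obtained by duplicating column $\vj$ exactly $s_j$ times (and $U^{(\mathbf{s})}$ analogously from the $\uj$); both matrices have at most $k$ distinct columns. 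Since the number of types is $\binom{N+k-1}{k-1} = \expo{O(k \log(N/k))}$, the dominant term $\mathbf{s}^*$ alone satisfies $\perm(V^{(\mathbf{s}^*)}) \perm(U^{(\mathbf{s}^*)}) / \prod_j s_j^*! \geq \expo{-O(k \log(N/k))} \perm(\ma)$. Applying the distinct-column case to $V^{(\mathbf{s}^*)}$ and $U^{(\mathbf{s}^*)}$ produces doubly stochastic witnesses $\bQ_V, \bQ_U \in \dstoc$, which I then combine into a doubly stochastic witness $\bQ$ for $\ma$ by contracting through the rank index $j$, so that $\bQ_{x,y}$ at entry $\ma_{x,y} = \sum_j \vj(x) \uj(y)$ inherits factorized guarantees from $\bQ_V$ on $\vj(x)$ and $\bQ_U$ on $\uj(y)$.

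The main obstacle I foresee is this last combination step: engineering $\bQ$ so that it remains doubly stochastic on $\bX \times \bX$ while transferring the factorized guarantees $\frst(V^{(\mathbf{s}^*)}, \bQ_V) + \frst(U^{(\mathbf{s}^*)}, \bQ_U)$ into a lower bound on $\frst(\ma, \bQ)$, despite the entry of $\ma$ being a sum over rank-one components rather than a single product. Handling this via the log-sum inequality $\log(\sum_j a_j) \geq \sum_j w_j \log(a_j / w_j)$ for appropriate weights $w_j$, and apportioning the Stirling-inflated denominator $\prod_j s_j^*!$ against the $-N$ correction in the definition of $\ssink$ (\Cref{def:ssink}), is where the bulk of the technical accounting happens; once that accounting closes, all slack fits inside the target $\expo{O(k\log(N/k))}$ factor.
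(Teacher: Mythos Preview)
Your proposal is correct and follows essentially the same approach as the paper, including the KL-divergence argument for the distinct-column case, the Barvinok-type expansion over types $\mathbf{s}$, and the use of the weighted log-sum (AM--GM) inequality to pass from the factored witnesses to a witness for $\ma$. The one piece you leave open---the explicit construction of the doubly stochastic $\bQ$ from $\bQ_V$ and $\bQ_U$---is resolved in the paper simply by setting $\bQ \defeq \bQ_V \bQ_U^{\top}$ (a product of doubly stochastic matrices is doubly stochastic), and the accounting closes cleanly once one takes $\bQ_V,\bQ_U$ to be \emph{symmetric within each duplicate-column block} $S_j$, which is without loss by concavity of the Sinkhorn objective; with that symmetry the cross terms collapse and the $-\sum_j s_j^*\log s_j^*$ from Stirling exactly cancels against the combinatorial residue, leaving only the $O(k\log(N/k))$ slack.
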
 
Further using $\ssink(\ma) \leq \bethe(\ma)$ (See \Cref{cor:ssinklb}) and $\bethe(\ma) \leq \perm(\ma)$ (See \Cref{bethelb}) we immediately get the same result for the Bethe permanent.
\begin{cor}[Bethe permanent approximation for low non-negative rank matrices]\label{thm:bethemain}
For any matrix $\ma \in \kds$ with non-negative rank at most $k$, the following inequality holds,
\begin{equation}
\bethe(\ma) \leq  \perm(\ma) \leq \expo{O\left( k\log \frac{\Ns}{k} \right)} \bethe(\ma)~.
\end{equation}
\end{cor}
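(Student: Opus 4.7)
The plan is to obtain this corollary as an immediate consequence of \Cref{thm:mainlowrank} combined with the two known sandwich inequalities that relate the scaled Sinkhorn, Bethe, and true permanents. Since all three quantities are defined as maxima of related concave objectives over the doubly stochastic polytope, chaining the inequalities is essentially a matter of bookkeeping; the substantive content lives entirely in \Cref{thm:mainlowrank}.

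First, for the lower bound $\bethe(\ma) \leq \perm(\ma)$, I would simply invoke \Cref{bethelb} (the classical Schrijver/Gurvits inequality), which holds for any non-negative matrix with no structural assumption, in particular for one of non-negative rank at most $k$. Next, for the upper bound, I would use \Cref{thm:mainlowrank}, which states that $\perm(\ma) \leq \expo{O(k \log(\Ns/k))} \ssink(\ma)$ whenever $\ma$ has non-negative rank at most $k$. Finally, to replace $\ssink(\ma)$ by $\bethe(\ma)$ on the right-hand side, I would apply \Cref{cor:ssinklb}, which gives $\ssink(\ma) \leq \bethe(\ma)$ (and whose proof is the elementary observation that the extra $\scnd(\bQ)$ term in the Bethe functional is at least $-\Ns$ by \Cref{lem:termlb}). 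Composing these three bounds yields the chain
\[
\bethe(\ma) \;\leq\; \perm(\ma) \;\leq\; \expo{O(k \log(\Ns/k))} \ssink(\ma) \;\leq\; \expo{O(k \log(\Ns/k))} \bethe(\ma),
\]
which is precisely the statement of the corollary.

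There is essentially no obstacle here beyond ensuring that the scaling in \Cref{cor:ssinklb} is applied in the right direction: since $\ssink$ is defined as $\exp(-\Ns) \sink$, it is smaller than $\bethe$, so using it on the right side of the sandwich only weakens the bound, which is exactly what we want. No additional witnesses, convex-duality arguments, or structural decompositions beyond those already used to prove \Cref{thm:mainlowrank} are required, so the corollary follows in two lines once that theorem is in hand.
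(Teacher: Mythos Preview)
Your proposal is correct and matches the paper's approach exactly: the paper also derives the corollary immediately from \Cref{thm:mainlowrank} by invoking \Cref{bethelb} for the lower bound and \Cref{cor:ssinklb} to pass from $\ssink(\ma)$ to $\bethe(\ma)$ on the upper bound.
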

Interestingly, in the worst case, Sinkhorn is an $e^{N}$ approximation to the permanent of $\ma \in \kds$, even when $\ma$ has at most 1 distinct column (e.g. consider the all 1's matrix). Consequently, for matrices with non-negative rank at most $k$, whenever $k = o(N / \log N)$, scaled Sinkhorn is a compelling alternative to Sinkhorn, with a tighter worst-case multiplicative approximation to the permanent.

Our results improve the analysis of the Bethe permanent for such structured matrices. Previously, the best known analysis of the Bethe permanent showed an $\sqrt{2}^{\Ns}$-approximation factor to the permanent \cite{AR19}. 
The analysis in \cite{AR19} is tight for general non-negative matrices and the authors showed that this bound cannot be improved without leveraging further structure.
Our next result is of similar flavor, and we provide an asymptotically tight example for \Cref{thm:mainlowrank} and \Cref{thm:bethemain}.

\begin{restatable}[Lower bound for the Bethe and the scaled Sinkhorn permanents approximation]{theorem}{thmlb}
	\label{thm:lb}
	There exists a matrix $\ma \in \kds$ with non-negative rank $k$, that satisfies 
	\begin{equation}\label{eq:lowerbound}
	 \perm(\ma) \geq \expo{\Omega\left(k\log \frac{\Ns}{k}\right)} \bethe(\ma)~,
	\end{equation}
	which further implies,
	\begin{equation}\label{eq:lowerboundsink}
	\perm(\ma) \geq \expo{\Omega\left(k\log \frac{\Ns}{k}\right)} \ssink(\ma)~.
	\end{equation}
\end{restatable}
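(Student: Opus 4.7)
The plan is to construct a matrix $\ma$ for which both the permanent and the Bethe permanent factor over a block-diagonal structure, reducing the task to a sharp two-sided estimate on a single all-ones block. Assuming $k$ divides $\Ns$ for simplicity (other cases handled by padding with permutation blocks), set $m = \Ns/k$ and let $\ma \in \kds$ be the $\Ns \times \Ns$ block-diagonal matrix with $k$ identical blocks along the diagonal, each equal to the $m \times m$ all-ones matrix, which we denote $J_m$. Because each block is a rank-one non-negative outer product and the blocks act on disjoint row/column supports, the non-negative rank of $\ma$ is exactly $k$. Expanding the permanent over this block structure gives $\perm(\ma) = \prod_{i=1}^{k} \perm(J_m) = (m!)^{k}$.

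For the Bethe permanent, any $\bQ \in \dstoc$ with $\betheq(\ma, \bQ) > -\infty$ must satisfy $\bQ_{x,y} = 0$ whenever $\ma_{x,y} = 0$, since otherwise $\frst(\ma, \bQ) = -\infty$. Together with the row and column sum constraints of $\dstoc$, this forces $\bQ$ to be block-diagonal with each $m \times m$ sub-block doubly stochastic. The off-block zeros contribute $0$ to both $\frst$ and $\scnd$, so $\betheq(\ma, \bQ)$ decomposes as $\sum_{i=1}^{k} \betheq(J_m, \bQ_{i})$, and maximizing block-wise yields $\bethe(\ma) = \bethe(J_m)^{k}$.

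The heart of the proof is the estimate $\bethe(J_m) = \Theta((m/e)^{m})$. For the lower bound, evaluate at the uniform doubly stochastic witness $\bQ^{\star} = (1/m) J_m$:
\[
\betheq(J_m, \bQ^{\star}) \;=\; m \log m + m(m-1) \log\!\left(1 - \tfrac{1}{m}\right) \;=\; m \log m - m + O(1),
\]
using the Taylor expansion $\log(1 - 1/m) = -1/m - 1/(2m^{2}) - \cdots$; exponentiating yields $\bethe(J_m) \geq \Theta((m/e)^{m})$. For the matching upper bound, I would show $\bQ^{\star}$ is the global optimizer via the Lagrangian stationarity condition $\bQ_{x,y}(1 - \bQ_{x,y}) = a_{x} b_{y}$ combined with the full $S_{m} \times S_{m}$ symmetry of $J_m$: the only symmetric solutions are $\bQ^{\star}$ itself and permutation-matrix vertex optima (at which $\betheq = 0$, strictly smaller than $\betheq(J_m, \bQ^{\star})$ for $m \geq 3$). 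The Hessian of $\betheq$ at $\bQ^{\star}$ is diagonal with entries $-m(m-2)/(m-1) < 0$ for $m \geq 3$, confirming a strict local maximum, and a careful case analysis of the Lagrangian system (since asymmetric critical points must appear in $S_m \times S_m$-orbits) rules out any other maximizer.

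Combining these estimates with Stirling's approximation (\Cref{lem:ster}) $m! = \Theta(\sqrt{m} \cdot (m/e)^{m})$,
\[
\frac{\perm(\ma)}{\bethe(\ma)} \;=\; \left(\frac{\perm(J_m)}{\bethe(J_m)}\right)^{k} \;=\; \Theta(\sqrt{m})^{k} \;=\; \expo{\Omega\!\left(k \log\tfrac{\Ns}{k}\right)},
\]
which establishes \eqref{eq:lowerbound}. The statement \eqref{eq:lowerboundsink} for scaled Sinkhorn follows immediately from $\ssink(\ma) \leq \bethe(\ma)$ in \Cref{cor:ssinklb}. The principal obstacle is the rigorous verification that $\bQ^{\star}$ is the \emph{global} Bethe maximizer for $J_m$---since $\betheq$ is neither globally concave nor convex on the Birkhoff polytope---which reduces to the Lagrangian critical-point case analysis sketched above.
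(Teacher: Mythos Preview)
Your construction and overall strategy match the paper's proof exactly: the block-diagonal matrix with $k$ copies of the $m\times m$ all-ones block $J_m$ (where $m=\Ns/k$), the factorizations $\perm(\ma)=(m!)^k$ and $\bethe(\ma)=\bethe(J_m)^k$, and Stirling for the final comparison. The paper simply \emph{asserts} in one line that the uniform matrix $Q^\star=(1/m)J_m$ is the global maximizer of $\betheq(J_m,\cdot)$ and evaluates there to get $\log\bethe(J_m)\le m\log m-m+1$; you correctly flag this as the only non-obvious step and engage with it more carefully.

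Two remarks on that step. First, there is a cleaner route than your full Lagrangian case analysis: drop the column constraints and bound the larger quantity $\max_{Q\text{ row-stochastic}}\betheq(J_m,Q)$, which decouples into $m$ independent one-row problems $\max\{\sum_y g(q_y):\sum_y q_y=1,\,q_y\ge 0\}$ with $g(q)=-q\log q+(1-q)\log(1-q)$. Since $g''(q)=(2q-1)/(q(1-q))<0$ on $(0,1/2)$, the function is concave there, and the identity $g(1-q)=-g(q)$ reduces the case of a single entry exceeding $1/2$ to the concave regime; Jensen then pins the row optimum at the uniform point. This avoids the two-parameter Lagrange system entirely. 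Second, your sentence ``asymmetric critical points must appear in $S_m\times S_m$-orbits [which] rules out any other maximizer'' is not a valid inference as written---orbits of critical points can perfectly well be global maxima---so if you keep the Lagrangian route you still need an actual value comparison, not just the orbit observation.
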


An immediate application of these above stated results is for PML. Recall, that for any fixed distribution $\bp$ and profile $\phi$, $\probpml(\bp,\phi)$ is proportional to the permanent of the non-negative matrix $\ma^{\bp,\phi}$ (See \Cref{sec:prelims} for the definition of $\ma^{\bp,\phi}$). Note the number of distinct columns in the profile probability matrix $\ma^{\bp,\phi}$ is upper bounded by the number of distinct frequencies plus one, which further is always less than $\sqrt{n}+1$ (where $n$ is the length of the profile). Therefore the non-negative rank of the profile probability matrix $\ma^{\bp,\phi}$ is always upper bounded by $\sqrt{n}+1$. Since $\ssink(\ma)$ can be computed in polynomial time \cite{CSS19}\footnote{$\ssink(\ma)$ corresponds to a convex optimization problem and a minor modification of the approach in \cite{CSS19} to solve a related, but slightly different optimization problem, yields a polynomial time algorithm to compute $\ssink(\ma)$ up to high accuracy.}, \Cref{thm:mainlowrank} implies an efficient algorithm to approximate the value $\probpml(\bp,\phi)$ for a fixed distribution $\bp$ up to multiplicative $\exps{O(\sqrt{n} \log n)}$ factor, and is also the best known approximation factor achieved by a deterministic algorithm. 

Analyzing the relationship between the Bethe permanent and the permanent of the profile probability matrix was posed as an interesting research direction in \cite{Von11} (See Section \RomanNumeralCaps{7}). Moreover, one of the primary interests in the area of algorithmic statistics/machine learning is to efficiently compute the PML distribution. Exploiting the structure of doubly stochastic matrix $\bQ$ that maximizes $\ssink(\ma^{\bp,\phi},\bQ)$ combined with the probability discretization idea from \cite{CSS19} we provide an efficient algorithm to compute an approximate PML distribution. We use \Cref{thm:main} to argue the approximation of our approximate PML distribution and we summarize this result below.

\begin{restatable}[$\expo{\sqrt{n} \log n}$-approximate PML]{theorem}{secondthm}
	\label{thm:second}
	For any given profile $\phi \in \Phi^n$, \Cref{euclid} computes an $\expo{-O(\sqrt{n}\log n)}$-approximate PML distribution in $\otilde(n^{1.5})$ time.
\end{restatable}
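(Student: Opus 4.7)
The plan is to leverage \Cref{thm:mainlowrank} to reduce the task of (approximately) maximizing the profile likelihood $\probpml(\bp,\phi) \propto \perm(\ma^{\bp,\phi})$ to that of maximizing the scaled Sinkhorn permanent $\ssink(\ma^{\bp,\phi})$ jointly over distributions $\bp$ and the inner doubly stochastic witness $\bQ$. Since $\ma^{\bp,\phi}$ has at most $k+1 \leq \sqrt{n}+1$ distinct columns, its non-negative rank is at most $\sqrt{n}+1$, so \Cref{thm:mainlowrank} yields
\[
\ssink(\ma^{\bp,\phi}) \leq \perm(\ma^{\bp,\phi}) \leq \expo{O(\sqrt{n}\log n)} \cdot \ssink(\ma^{\bp,\phi})\,.
\]
Consequently the optimum of $\ssink(\ma^{\bp,\phi})$ over $\bp \in \simplex$ is within an $\expo{-O(\sqrt{n}\log n)}$ factor of the PML objective, and any distribution that approximately maximizes $\ssink(\ma^{\bp,\phi})$ (to within a $\poly(n)$ multiplicative factor) is automatically an $\expo{-O(\sqrt{n}\log n)}$-approximate PML distribution.

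Next I would discretize the probability values, replacing $\simplex$ by a pseudo-distribution set supported on a geometric grid of $O(\log n)$ values in $[1/\pmin, 1]$ together with a ``small'' mass bucket, exactly as in \cite{CSS19}. A standard calculation (mirroring Section~4 of \cite{CSS19}) shows that discretization costs at most an $\expo{-O(\sqrt{n}\log n)}$ factor in the objective. After discretization, the joint problem $\max_{\bp,\bQ} \expo{\frst(\ma^{\bp,\phi},\bQ)-\Ns}$ can be written (after collecting identical rows of $\bQ$ corresponding to equal probability values, and identical columns corresponding to equal frequencies) as a concave maximization problem in variables $\vx_{i,j}$ indexed by a discretized probability value $\pvec_i$ and a distinct observed frequency $\mvec_j$ (together with the unseen column). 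The key surprise is that this convex program is syntactically the same one derived combinatorially in \cite{CSS19}, so the existing $\otilde(n^{1.5})$-time solver from \cite{CSS19} applies essentially verbatim to produce a fractional matrix $\vx^{*}$ with column sums equal to $\phi_j$ whose objective value is within $\poly(n)$ of $\max_{\bp} \ssink(\ma^{\bp,\phi})$.

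The remaining step, and the main technical obstacle, is rounding $\vx^{*}$ to a valid distribution without losing more than an $\expo{-O(\sqrt{n}\log n)}$ factor. The rounding from \cite{CSS19} suffices only for large probability values: for each row corresponding to a discretization level $\pvec_i$ with row-sum $s_i$ at least some threshold, we can snap $s_i$ to an integer at multiplicative cost bounded by $\poly(n)$ per row, of which there are only $O(\log n)$ many. For small probability values, however, rows of $\vx^{*}$ may have fractional mass too small to round directly, and naively flooring destroys the column-sum constraints that encode the observed profile. My plan is therefore a three-stage rounding: (i) apply the CSS19-style procedure to large rows; (ii) for the small-probability mass, \emph{create new probability levels} on the fly and redistribute the residual column masses to these new rows while exactly preserving column sums; (iii) snap the resulting row sums to integers by a final charging argument. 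At each stage I would track how each modification changes $\frst(\ma^{\bp,\phi},\bQ)-\Ns$ evaluated at the rounded fractional solution, using concavity of $x\log(a/x)$ and the fact that columns sums $\phi_j$ satisfy $\sum_j \mvec_j \phi_j = n$ to bound the aggregate loss by $O(\sqrt{n}\log n)$ in the exponent.

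The hard part will be step (ii) of the rounding: introducing new probability levels without violating the budget $\sum_i s_i = 1$ of the distribution, while simultaneously preserving the $k+1$ column-sum constraints and keeping the objective within an $\expo{-O(\sqrt{n}\log n)}$ factor. The running time then follows from the $\otilde(n^{1.5})$ bound of the convex program solver in \cite{CSS19} plus the rounding, which touches $O(\log n)$ original levels and $O(\sqrt{n})$ new levels and hence runs in $\otilde(n)$ time. Combining the discretization loss, the scaled-Sinkhorn gap from \Cref{thm:mainlowrank}, and the rounding loss yields the claimed $\expo{-O(\sqrt{n}\log n)}$-approximate PML distribution in $\otilde(n^{1.5})$ total time.
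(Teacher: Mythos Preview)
Your overall strategy matches the paper's proof: bound $\perm(\ma^{\bp,\phi})$ by the scaled Sinkhorn via \Cref{thm:mainlowrank}, discretize probabilities to land on the convex program of \cite{CSS19}, solve it, then round in three stages. There is, however, a concrete error in the discretization step that breaks the argument as written.

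A geometric grid with only $O(\log n)$ values on $[1/(4n^2),1]$ has a \emph{constant} ratio between consecutive levels, so rounding each $\bp_x$ down to the grid loses a constant factor per sample and hence an $\exp(-\Theta(n))$ factor in $\probpml(\bp,\phi)$, not $\exp(-O(\sqrt{n}\log n))$. The paper instead takes the grid ratio to be $1+\Theta(\log n/\sqrt{n})$, which gives $\ell=\Theta(\sqrt{n})$ levels and the claimed discretization loss (\Cref{lem:probdisc}); this is also exactly what makes the running time work, since the solver from \cite{CSS19} costs $\otilde(k^2\ell)$ and it is $k,\ell\in O(\sqrt{n})$ that yields $\otilde(n^{1.5})$. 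A second gap: to turn the $\exp(O(k\log(N/k)))$ bound of \Cref{thm:mainlowrank} into $\exp(O(\sqrt{n}\log n))$ you need $N\leq\poly(n)$, which is not automatic for an arbitrary candidate distribution; the paper invokes \Cref{lemminmain} to assume without loss that the (approximate) PML distribution has minimum nonzero probability at least $1/(2n^2)$, forcing $N\leq O(n^2)$ and hence $\log(N/k)=O(\log n)$.

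Your three-stage rounding sketch is in the right spirit and close to the paper's. The paper's version is somewhat more structured: it calls the ``create new probability values'' subroutine (\Cref{alg:create}) after \emph{both} Step~1 and Step~2, and the key observation making Step~3 tractable is that the new rows produced in Step~2 have a \emph{diagonal} sparsity pattern (one nonzero entry per row), which the final \textsc{StructuredRounding} exploits to shuffle mass between adjacent small-probability levels while preserving column sums exactly. The per-step loss is controlled as $\exp(-O((\sum_i\alpha_i)\log\Delta))$, where $\alpha_i$ is the row-mass removed; bounding $\sum_i\alpha_i$ separately for large rows (via $\sum_i r_i\cdot(\text{rowsum})_i\leq 1$) and small rows (via $|\bL|\leq\ell$) is what makes the accounting come out to $O((\ell+k+1/\gamma+\gamma n)\log\Delta)$, optimized at $\gamma=1/\sqrt{n}$.
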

Previous to our work, the best known result \cite{CSS19} gave an efficient algorithm to compute $\exps{-O(n^{2/3}\log n)}$-approximate PML distribution.

One important application of approximate PML is in symmetric property estimation. In \cite{ADOS16}, the authors showed that approximate PML distribution based plug-in estimator is sample complexity optimal for estimating entropy, support, support coverage and distance to uniformity. Further combining their result with our \Cref{thm:second} we get the efficient version of Theorem 2 in \cite{ADOS16} and we summarize this result next.

	\begin{theorem}[Efficient universal estimator using approximate PML]
		Let $n$ be the optimal sample complexity of estimating entropy, support, support coverage and distance to uniformity. If $\epsilon \geq \frac{c}{n^{0.2499}}$ for some constant $c>0$, then there exists a PML based universal plug-in estimator that runs in time $\otilde(n^{1.5})$ and is sample complexity optimal for estimating entropy, support, support coverage and distance to uniformity to accuracy $\epsilon$.
	\end{theorem}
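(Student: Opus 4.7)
The plan is to prove this theorem as a direct composition of two ingredients: the efficient approximate PML algorithm from \Cref{thm:second} and the reduction of symmetric property estimation to approximate PML due to Acharya, Das, Orlitsky and Suresh \cite{ADOS16}. In essence, the statement is a corollary, so the bulk of the proof is just verifying that the approximation factor achievable in polynomial time falls within the regime in which the \cite{ADOS16} framework certifies sample-optimality, for the specified range of $\epsilon$.

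First I would invoke \Cref{thm:second}, which yields an algorithm that, given any profile $\phi \in \Phi^n$, produces an $\exp(-O(\sqrt{n}\log n))$-approximate PML distribution in time $\otilde(n^{1.5})$. This is the ``plug-in'' distribution $\bp^{\beta}_{\mathrm{pml},\phi}$ with $\beta = \exp(-O(\sqrt{n}\log n))$. The universal estimator is then defined in the standard way: on input profile $\phi$, compute $\bp^{\beta}_{\mathrm{pml},\phi}$ and output $\fnff(\bp^{\beta}_{\mathrm{pml},\phi})$ as the estimate for the symmetric property $\fnff$. The running time bound $\otilde(n^{1.5})$ is inherited directly from \Cref{thm:second}, with the property evaluation contributing only lower-order cost for the four properties named.

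Next I would appeal to Theorem 2 of \cite{ADOS16}, which states that for each of entropy, support, support coverage, and distance to uniformity, a $\beta$-approximate PML plug-in estimator is sample-complexity optimal up to accuracy $\epsilon$ as long as $\beta \geq \exp(-O(\sqrt{n}\,\epsilon^{2}))$ (the precise form varies slightly per property, but each requires $-\log\beta$ to be smaller than a fixed polynomial in $n\epsilon^{2}$). Substituting $-\log\beta = O(\sqrt{n}\log n)$ from \Cref{thm:second} and solving for the boundary $\epsilon$ where the hypothesis of \cite{ADOS16} is still satisfied yields the condition $\epsilon \geq c/n^{0.2499}$ for a suitable constant $c$; this is essentially the same calculation pointed out in the discussion preceding \Cref{thm:second}, where the authors note that $\exp(-O(\sqrt{n}\log n))$-approximate PML is sample-complexity optimal in this error regime.

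The only nontrivial step is matching the constants: I would carefully track the exponent in the \cite{ADOS16} condition $\beta \geq \exp(-n\epsilon^{2}\cdot(1-o(1)))$ (or the property-specific analogue) against $\beta = \exp(-O(\sqrt{n}\log n))$ to confirm that the choice $0.2499$ (rather than $0.25$) absorbs the $\log n$ factor for sufficiently large $n$ and a large enough constant $c$. Once this is verified, the theorem follows by combining the two ingredients: the estimator is efficient by \Cref{thm:second}, and it is sample-complexity optimal by \cite{ADOS16}. I do not anticipate any genuine obstacle here; the statement is essentially a packaging of prior black-box results with the new approximation factor.
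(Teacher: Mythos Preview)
Your proposal is correct and matches the paper's approach exactly: the paper presents this theorem as an immediate corollary, obtained by combining \Cref{thm:second} with Theorem~2 of \cite{ADOS16}, with no separate proof given beyond noting that the $\exp(-O(\sqrt{n}\log n))$ approximation factor lands in the regime where \cite{ADOS16} guarantees sample optimality for $\epsilon \geq c/n^{0.2499}$.
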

Note the dependency on $\epsilon$ in the above theorem and the approximation factor in \Cref{thm:second} are strictly better than \cite{CSS19}, which is another efficient PML based approach for universal symmetric property estimation; \cite{CSS19} works when the error parameter $\epsilon \geq \frac{1}{n^{0.166}}$.

Recent work \cite{HO19}, further gives the broad optimality of approximate PML. \cite{HO19} shows optimality of approximate PML distribution based estimator for other symmetric properties, such as, sorted distribution estimation (under $\ell_{1}$ distance), $\alpha$-Renyi entropy for non-integer $\alpha>3/4$, and other broad class of additive properties that are Lipschitz. \cite{HO19} also provides a PML-based tester to test whether an unknown distribution is $\geq \epsilon$ far from a given distribution in $\ell_{1}$ distance and achieves the optimal sample complexity up to logarithmic factors. Our result further implies an efficient version of all these results.


\subsection{Related work}
\label{sec:related}

We divide the related work into two broad categories: permanent approximations and profile maximum likelihood.

\paragraph{Permanent approximations:} The first set of related work is with respect to computing the permanent of matrices. \cite{Val79} showed that computing the permanent of matrices even when it has entries in {0, 1} is \#P-Hard. This led to the study of computing approximations to the permanent. Additive approximation to the permanent for arbitrary $\ma$ was given by \cite{Gur05}. On the other hand, multiplicative approximation to the permanent (or even determining the sign) is hard for general $\ma$ \cite{AA11,GS16}. This hardness results led to the study of the multiplicative approximation to the permanent for special class of matrices and one such class is the set of non-negative matrices. In this direction, \cite{JSV04} gave the first efficient randomized algorithm to approximate the permanent within $(1+\epsilon)$ multiplicative accuracy. There has also been a rich literature on coming up with deterministic approximation to the permanent of non-negative matrices. \cite{LSW98} gave the first deterministic algorithm to the permanent of $\Ns \times \Ns$ non-negative matrices with approximation ratio $\leq e^{\Ns}$. 
\cite{Gur11} using an inequality from \cite{Sch98} showed that the Bethe permanent lower bounds the value of the permanent of non-negative matrices. 
We refer the reader to \cite{Von13, GS14} for the polynomial computability of the Bethe permanent and \cite{AR19} for a more rigorous literature survey on the Bethe permanent and other related work.

As discussed in the footnote of the introduction, an anonymous reviewer showed us an alternative and simpler proof for the upper bound on the scaled Sinkhorn approximation to the permanent of matrices with at most $k$ distinct columns (\Cref{thm:main}). This alternative proof is deferred to \Cref{sec:altproof} and is derived using Corollary 3.4.5. in Barvinok's book~\cite{Bar17}. The result in turn, is proved using the Bregman-Minc inequality conjectured by Minc, cf. \cite{Minc78} and later proved by Bregman \cite{Bre73}. The Bregman-Minc inequality is well-known and there are many different proofs~\cite{Sch78,Rad97,AS04} known. 
In comparison to this alternative proof for matrices with $k$ distinct columns (\Cref{thm:main}), our proof is self contained and intuitive. We believe it could help provide further insights into the Sinkhorn/Bethe approximations.

\paragraph{Profile maximum likelihood:} The second set of related work is with respect to profile maximum likelihood and its applications. As discussed in the introduction, PML was introduced by \cite{OSSVZ04}. Many heuristic approaches such as the EM algorithm \cite{OSSVZ04}, algebraic approaches \cite{ADMOP10} and a dynamic programming approach~\cite{PJW17} have been proposed to compute approximations to  PML. Further \cite{Von12, Von14} used the Bethe permanent as a heuristic to compute the PML distribution. All these approaches don't provide any theoretical guarantees for the quality of the approximate PML distribution and it was an open question to efficiently compute a non-trivial approximate PML distribution. \cite{CSS19} gave the first efficient algorithm to compute the $\exps{-n^{2/3} \log n}$ approximate PML distribution, where $n$ is the number of samples.

The connection between PML and universal estimators was first studied in~\cite{ADOS16}. \cite{ADOS16} showed that an approximate PML distribution can be used as an universal estimator for estimating symmetric properties, namely entropy, distance to uniformity, support size and coverage. See \cite{HO19} for broad applicability of approximate PML in property testing and estimating other symmetric properties such as sorted $\ell_{1}$ distance, Renyi entropy, and other broad class of additive properties.  \cite{CSS19} combined with \cite{ADOS16}, gave the first efficient PML based universal estimator for symmetric property estimation. There have been several other approaches for designing universal estimators for symmetric properties. Valiant and Valiant~\cite{VV11a} adopted and rigorously analyzed a linear programming based approach for universal estimators proposed by \cite{ET76} and showed that it is sample complexity optimal in the constant error regime for estimating certain symmetric properties (namely, entropy, support size, support coverage, and distance to uniformity). Recent work of Han, Jiao and Weissman~\cite{HJW18} applied a local moment matching based approach in designing efficient universal symmetric property estimators for a single distribution. \cite{HJW18} achieves the optimal sample complexity in a broader error regimes for estimating the power sum function, support and entropy.

Estimating symmetric properties of a distribution is a rich field and extensive work has been dedicated to studying their optimal sample complexity for estimating each of these properties. Optimal sample complexities for estimating many symmetric properties were resolved in the past few years; support~\cite{VV11a, WY15}, support coverage~\cite{OSW16,ZVVKCSLSDM16}, entropy~\cite{VV11a, WY16},  distance to uniformity~\cite{VV11b, JHW16}, sorted $\ell_{1}$ distance \cite{VV11b, HJW18}, Renyi entropy~\cite{AOST14, AOST17}, KL divergence~\cite{BZLV16, HJW16} and many others.


{\bf Comparison to \cite{CSS19}}: \cite{CSS19} provides the first efficient algorithm for computing $\beta$-approximate PML distribution for $\beta>\exps{-n^{1-\delta}}$ for some constant $\delta>0$, where $n$ is the number of samples. Formally, \cite{CSS19} computes an $\exps{-n^{2/3} \log n}$-approximate PML distribution. Suppose $\ell$ and $k$ are the number of distinct probability values and frequencies respectively, then \cite{CSS19} provides a convex program that using \emph{combinatorial techniques} they analyze and show that it approximates the PML objective up to $\exps{-\otilde(\ell \times k)}$ multiplicative factor. Further this convex program outputs a fractional solution and \cite{CSS19} provides a rounding algorithm that outputs a valid integral solution (that corresponds to a valid distribution). \cite{CSS19} further incurs a $\exps{-\otilde(\ell \times k)}$ multiplicative loss in the rounding procedure. Using the discretization results, up to $\exps{-n^{2/3} \log n}$-multiplicative loss one can assume $\ell,k \leq n^{1/3}$ and therefore \cite{CSS19} provides a $\exps{-n^{2/3} \log n}$-approximate PML distribution.

However in our current work, using results for the scaled Sinkhorn permanent, we show that the same convex program in \cite{CSS19} approximates the PML objective up to $\exps{-\otilde(\ell + k)}$ multiplicative factor. Further we also provide a better rounding algorithm that outputs a valid distribution and incur a $\exps{-\otilde(\ell + k)}$ multiplicative loss. Further using the discretization results, up to $\exps{-\sqrt{n} \log n}$-multiplicative loss one can assume $\ell,k \leq \sqrt{n}$ and therefore our work provides a $\exps{-\sqrt{n} \log n}$-approximate PML distribution.

\section{The Sinkhorn permanent for structured matrices.}\label{sec:bethe}
In this section, we provide the proof for our first main theorem (\Cref{thm:mainlowrank}). 
 We show that the scaled Sinkhorn permanent of a non-negative matrix approximates its permanent, where the approximation factor is exponential in the non-negative of the matrix (up to log factors). Our proof is divided into two parts. First in \Cref{sec:sinkcol}, we work with a simpler setting of matrices $A$ with at most $k$ distinct columns and prove the following lemma.
 \begin{restatable}[Scaled Sinkhorn permanent approximation]{lemma}{mainthm}
 	\label{thm:main}
 	For any matrix $\ma \in \kds$ with at most $k$ distinct columns, the following holds,
 	\begin{equation}\label{eq:conditiontwo}
 	\ssink(\ma) \leq \perm(\ma) \leq \expo{O\left(k\log \frac{\Ns}{k}\right)} \ssink(\ma)~.
 	\end{equation}
 \end{restatable}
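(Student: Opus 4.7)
The plan is to establish the non-trivial direction $\perm(\ma) \leq \expo{O(k \log(\Ns/k))} \cdot \ssink(\ma)$, as the reverse inequality $\ssink(\ma) \leq \perm(\ma)$ is already supplied by \Cref{cor:ssinklb}. I will exhibit an explicit doubly stochastic witness $\bQ \in \dstoc$ for which the Sinkhorn objective is within an $\expo{O(k \log(\Ns/k))}$ factor of $\perm(\ma)$. This $\bQ$ will arise from the marginals of a well-chosen distribution on permutation-like objects, with non-negativity of KL divergence (\Cref{lem:kldiv}) as the pivotal inequality.

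First I would group the columns of $\ma$ by their $k$ distinct values, writing $\phi_j$ for the multiplicity of the $j$th distinct column (so $\sum_j \phi_j = \Ns$) and $\hat{\ma} \in \R_{\geq 0}^{\bX \times [k]}$ for the reduced matrix of distinct columns. For each $\sigma \in \permD$ define $\mx(\sigma) \in \{0,1\}^{\bX \times [k]}$ by $\mx(\sigma)_{x,j}=1$ iff $\sigma(x)$ lies in the $j$th group. The map $\sigma \mapsto \mx(\sigma)$ is $\prod_j \phi_j!$-to-$1$ onto the set of $\{0,1\}$-matrices with row sums $1$ and column sums $\phi_j$, and it preserves weight: $\prod_x \ma_{x,\sigma(x)} = \prod_{x,j} \hat{\ma}_{x,j}^{\mx(\sigma)_{x,j}}$. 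Consequently
$$\perm(\ma) = \Bigl(\prod_j \phi_j!\Bigr) Z, \qquad Z \defeq \sum_{\mx} \prod_{x,j} \hat{\ma}_{x,j}^{\mx_{x,j}},$$
the sum over valid $\mx$.

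Next I would define the distribution $\mu(\mx) = Z^{-1} \prod_{x,j} \hat{\ma}_{x,j}^{\mx_{x,j}}$ on valid $\mx$ and set $\bP_{x,j} \defeq \Pr_{\mx \sim \mu}[\mx_{x,j}=1]$; automatically $\bP$ has row sums $1$ and column sums $\phi_j$. Introduce the independent-row product distribution $\nu(\mx) \defeq \prod_{x,j} \bP_{x,j}^{\mx_{x,j}}$. Expanding $\KL{\mu \| \nu} \geq 0$ via linearity on the marginals gives the core inequality
$$\log Z \leq \sum_{x,j} \bP_{x,j} \log \frac{\hat{\ma}_{x,j}}{\bP_{x,j}}.$$
To convert this into an $\ssink$ lower bound I define the witness $\bQ_{x,y} \defeq \bP_{x,j}/\phi_j$ whenever column $y$ lies in the $j$th group; both row and column sums are $1$, and a direct computation yields
$$\frst(\ma,\bQ) = \sum_{x,j} \bP_{x,j} \log \frac{\hat{\ma}_{x,j}}{\bP_{x,j}} + \sum_j \phi_j \log \phi_j,$$
so that $\log \ssink(\ma) \geq \log Z + \sum_j \phi_j \log \phi_j - \Ns$.

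To close the loop I would apply Stirling's approximation (\Cref{lem:ster}) in the form $\prod_j \phi_j! \leq \prod_j O(\sqrt{\phi_j}) \cdot \expo{\sum_j \phi_j \log \phi_j - \Ns}$. The $\sum_j \phi_j \log \phi_j$ and $\Ns$ terms then cancel cleanly between $\perm(\ma) = Z\prod_j \phi_j!$ and the lower bound on $\ssink(\ma)$, leaving only the slack $\prod_j O(\sqrt{\phi_j})$; by AM-GM, $\prod_j \phi_j \leq (\Ns/k)^k$, so this slack is at most $\expo{O(k \log(\Ns/k))}$, which is the desired bound. The main subtlety is the precise cancellation of $\sum_j \phi_j \log \phi_j$ between the Stirling estimate and the Sinkhorn objective; this is exactly why the witness must distribute $\bP_{x,j}$ \emph{uniformly} across the $\phi_j$ identical columns and why the $-\Ns$ offset built into the definition of $\ssink$ (rather than plain $\sink$) is essential --- it absorbs the $-\Ns$ coming from Stirling.
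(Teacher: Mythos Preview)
Your proposal is correct and follows essentially the same approach as the paper's own proof: define the weighted distribution $\mu$ on column-assignment matrices, take its marginals $\bP$, compare against the independent product $\nu$ via $\KL{\mu\|\nu}\geq 0$, build the doubly stochastic witness $\bQ_{x,y}=\bP_{x,j}/\phi_j$, and close with Stirling plus the AM--GM bound $\sum_j \log\phi_j \leq k\log(\Ns/k)$. Your observation about why the $-\Ns$ in the definition of $\ssink$ is essential for the cancellation is exactly the point.
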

Further using the above result, in \Cref{sec:lowrank} we prove our main theorem (\Cref{thm:mainlowrank}) for low non-negative rank matrices. 
\subsection{The Sinkhorn permanent for distinct column case.}\label{sec:sinkcol}
We start this section by defining some notation that captures the structure of repetition of columns in a matrix. For the remainder of this section we fix a matrix $\ma \in \kds$. We let $k$ denote the number of distinct columns of $\ma$ and use $\co,\ct,\dots \ck$ to denote these distinct columns. Further we let $\mad=[\co~|~\ct~|\dots|~\ck]$ denote the $\bX \times k$ matrix formed by these distinct columns. We use $\may$ to denote the $y$'th column of matrix $\ma$ and let $\phi_{j} \defeq |\{y\in \bX~|~\may=\cj \}|$ denote the number of columns equal to $\cj$. It is immediate that,
 \begin{equation}
 \sum_{j\in [1,k]}\phi_{j} =\Ns ~,
 \end{equation}
where $\Ns=|\bX|$ is the size of the domain.
For any matrix $\bP \in \R_{\geq0}^{\bX \times k}$ define,
\begin{equation}\label{eq:fnp}
\fnp(\ma, \bP)\defeq \sum_{x \in \bX} \sum_{j \in [1,k]}\Pxj\log \frac{\madxj}{\Pxj}+\sum_{j \in [1,k]}\phi_{j}\log \phi_{j}-\sum_{j \in [1,k]}\phi_{j}~.
\end{equation}
In the first half of this section, we show existence of a matrix $\bP \in \R_{\geq0}^{\bX \times k}$ (See \Cref{lem:condip}) such that $\sum_{j \in [1,k]}\Pxj=1 \text{ for all } x\in \bX$, $\sum_{x \in \bX}\Pxj=\phi_{j} \text{ for all } j \in [1,k]$, and further (See \Cref{lem:upperm}),
\begin{equation}\label{eq:sone}
\log \perm(\ma)	\leq  O\left( k\log \frac{\Ns}{k} \right)+ \fnp(\ma, \bP)~.
\end{equation}
Later in the second half (See \Cref{thm:ds}), we show that for any matrix $\bP \in \R_{\geq 0}^{\bX \times k}$ that satisfies $\sum_{j \in [1,k]}\Pxj=1 \text{ for all } x\in \bX$ and $\sum_{x \in \bX}\Pxj=\phi_{j} \text{ for all } j \in [1,k]$, there exists a matrix $\bQ \in \dstoc$ (recall $\dstoc$ is the set of all $\bX \times \bX$ doubly stochastic matrices) that satisfies,
\begin{equation}\label{eq:stwo}
\fnp(\ma, \bP) = \frst(\ma,\bQ)-\Ns~.
\end{equation}
However, using \Cref{cor:ssinklb} we already know that, $\ssink(\ma) \leq \log \perm(\ma)$. Further using the definition of $\ssink(\ma)$ and combining with \Cref{eq:sone,eq:stwo} we get,
$$\ssink(\ma) \leq  \perm(\ma) \leq \expo{O\left( k\log \frac{\Ns}{k} \right)} \ssink(\ma)~.$$
In the remainder, we provide proofs for all the above mentioned inequalities and we need the following set of definitions. Let $\kr\subseteq \kzo$, be the subset of all $\bX \times k$ matrices that are row stochastic, meaning there is exactly single $1$ in each row.
Let $\ka \subseteq \kr$ be the set of matrices such that any $\mx \in \ka$ satisfies  $\sum_{x \in \bX}\Xxj=\phi_{j} \text{ for all }j \in [1,k]$.

\begin{defn}
Let $\fna:\permD \rightarrow \ka$ be the function that takes a permutation $\sigma \in \permD$ as input and returns a matrix $\mx \in \ka$ in the following way,
 	\begin{align}
 	\Xxj =
 	\begin{cases}
 	1 & \text{ if } \ma_{: \sigma(x)}=\cj \\
 	0 & \text{ otherwise}
 	\end{cases}
 	\quad \quad 	\text{ for all } x\in \bX.
 	\end{align}
 \end{defn}
{\bf Remark}: Note that as desired $\fna(\sigma)\in \ka$ for all $\sigma \in \permD$ because of the following. For any $\sigma \in \permD$, let $\mx\defeq \fna(\sigma)$. Since $\cj$ for all $j \in [1,k]$ are distinct, we have $\sum_{j \in [1,k]}\mx_{x,j}=1$. Further for any $j\in [1,k]$, $\sum_{x\in \bX}\mx_{x,j}=\sum_{\{x\in \bX~|~\ma_{: \sigma(x)}=\cj\}}1=\sum_{\{x\in \bX~|~\ma_{\cdot x}=\cj\}}1=\phi_j$.

We next define the probability of a permutation $\sigma \in \permD$ with respect to matrix $\ma$ as follows, 
\begin{equation}\label{eq:probsigma}
\prob{\sigma}\defeq \frac{\prod_{e\in \sigma} \ma_{e}}{\perm(\ma)}
\end{equation}
Further we define a marginal distribution $\mu$ on $\kr$ and later we will establish that this is indeed a probability distribution, that is, probabilities add up to 1. 
\begin{align}\label{eq:one}
\mu(\mx) \defeq 
\begin{cases}
0 & \text{ if } \mx \in \kr \backslash \ka\\
\sum_{\{\sigma \in \permD ~|~\fna(\sigma) = \mx\}} \prob{\sigma} &  \text{ if } \mx \in \ka~.
\end{cases}
\end{align}
For $\mx \in \ka$, we next provide another equivalent expression for $\mu(\mx)$.
\begin{equation}
\begin{split}
\mu(\mx)&=\sum_{\{\sigma \in \permD ~|~\fna(\sigma) = \mx\}} \prob{\sigma}=\sum_{\{\sigma \in \permD ~|~\fna(\sigma) = \mx\}}\frac{\prod_{(x,\sigma(x))} \ma_{x,\sigma(x)}}{\perm(\ma)},\\
&=\frac{1}{\perm(\ma)}\sum_{\{\sigma \in \permD ~|~\fna(\sigma) = \mx\}}\prod_{x \in \bX} \prod_{j \in [1,k]}\mad_{x,j}^{\Xxj}\\
&= \left(\prod_{j \in [1,k]} \phi_{j}! \right) \left( \prod_{x \in \bX} \prod_{j \in [1,k]}\madxj^{\Xxj} \right) \left(\frac{1}{\perm(\ma)}\right)
\end{split}
\end{equation}
In the first and second equality, we used definitions of $\mu(\mx)$ and $\prob{\sigma}$ (See \Cref{eq:probsigma}). For any $\sigma \in \permD$, let $\mx=\fna(\sigma)$. Further for any $x\in \bX$, let $j'$ be such that $\ma_{: \sigma(x)}=\cjp$, then $\ma_{x,\sigma(x)}=\mad_{x,j'}$ that is further equal to $\prod_{j \in [1,k]}\mad_{x,j}^{\Xxj}$ because $\Xxj$ is equal to $1$ if $j=j'$ and $0$ otherwise. Therefore the third equality holds. For the final equality, observe that for any $\sigma \in \permD$ if we let $\mx=\fna(\sigma)$, then for each $j\in [1,k]$, any permutation within the subset of elements $\{x\in \bX~|~\ma_{: \sigma(x)}=\cj\}$ results in a permutation $\sigma'$ that satisfies $\fna(\sigma')=\mx$. These permutations can be carried out independently for each $j \in [1,k]$ that corresponds to $\prod_{j \in [1,k]} \phi_{j}!$ number of permutations and all of them have the same $\prod_{x \in \bX} \prod_{j \in [1,k]}\mad_{x,j}^{\Xxj}$ value.

Using the derivation from above, the definition for $\mu$ can also be written as follows:
\begin{align}\label{eq:two}
\mu(\mx) =
\begin{cases}
0 & \text{ if } \mx \in \kr \backslash \ka\\
\left(\prod_{j \in [1,k]} \phi_{j}! \right) \left( \prod_{x \in \bX} \prod_{j \in [1,k]}\madxj^{\Xxj} \right) \left(\frac{1}{\perm(\ma)}\right) & \text{ if } \mx \in \ka~.
\end{cases}
\end{align}
Note for $\mx \in \ka$, the expression for $\mu(\mx)$ can be equivalently written as follows:
\begin{equation}\label{eq:mux}
\mu(\mx) =\left(\prod_{j \in [1,k]} \phi_{j}! \right) \left( \prod_{\{(x,j) \in \bX \times [1,k]~|~\Xxj=1\}} \madxj \right)\left(\frac{1}{\perm(\ma)}\right)~.
\end{equation}
We next show that the $\mu$ defined above is a valid distribution.
$$\sum_{\mx\in \kr}\mu(\mx)=\sum_{\mx\in \ka}\mu(\mx)=\sum_{\mx\in \ka}\sum_{\{\sigma \in \permD ~|~\fna(\sigma) = \mx\}}\Pr(\sigma)=\sum_{\sigma \in \permD}\Pr(\sigma)=1$$
{\bf Remark}: The domain of distribution $\mu$ is $\kr$, but its support is subset of $\ka$. 
\begin{defn}
For the distribution $\mu$, we define a non-negative matrix $\bP \in \R_{\geq 0}^{\bX \times k}$ with respect to $\mu$ as follows:
\begin{equation}\label{eq:mp}
\Pxj\defeq \Pr_{\mx \sim \mu}(\Xxj=1)=\sum_{\{\mx \in \ka~|~\Xxj=1\}}\mu(\mx)~.
\end{equation}
\end{defn}
\begin{lemma}\label{lem:condip}
	The matrix $\bP$ defined in \Cref{eq:mp} satisfies the following conditions:
	\begin{equation}
	\sum_{j \in [1,k]}\Pxj=1 \text{ for all } x\in \bX \quad \text{ and } \quad  \sum_{x \in \bX}\Pxj=\phi_{j} \text{ for all } j \in [1,k]~.
	\end{equation}
	\end{lemma}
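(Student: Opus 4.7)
The plan is to verify the two row/column-sum identities directly from the definition of $\bP$ and the support of $\mu$, exploiting that $\mu$ is supported on the set $\ka$ of $0/1$ row-stochastic $\bX \times k$ matrices whose $j$th column sums to $\phi_j$.

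First I would establish the row-sum identity $\sum_{j \in [1,k]} \Pxj = 1$. Unfolding the definition, $\sum_{j} \Pxj = \sum_{j} \Pr_{\mx \sim \mu}(\Xxj = 1) = \expt{\mx \sim \mu}{\sum_{j \in [1,k]} \Xxj}$. Since $\mu$ is supported on $\kr$ (indeed on $\ka \subseteq \kr$), every $\mx$ in the support is row stochastic with entries in $\{0,1\}$, so each row contains exactly one $1$, giving $\sum_{j \in [1,k]} \Xxj = 1$ pointwise. Taking expectations yields $1$ as required.

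Next I would handle the column-sum identity $\sum_{x \in \bX} \Pxj = \phi_j$ in the same style: $\sum_{x \in \bX} \Pxj = \expt{\mx \sim \mu}{\sum_{x \in \bX} \Xxj}$. By the definition of $\ka$, every matrix in the support of $\mu$ satisfies $\sum_{x \in \bX} \Xxj = \phi_j$ as a deterministic quantity; I would invoke this fact and conclude the expectation equals $\phi_j$.

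Both arguments are genuinely routine once one notes that $\mu$ is supported on $\ka$, so I do not anticipate any real obstacle; the only thing to be careful about is linearity of expectation together with the fact that $\mu(\mx) = 0$ for $\mx \in \kr \setminus \ka$, which ensures that we may restrict the expectation to matrices in $\ka$ where the claimed pointwise identities hold. This gives the lemma.
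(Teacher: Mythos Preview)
Your proposal is correct and follows essentially the same approach as the paper: both arguments exploit that $\mu$ is supported on $\ka$, where each matrix has row sums equal to $1$ and column sums equal to $\phi_j$, and then use linearity (you phrase it via expectations, the paper via explicit double sums and a swap of summation order). The underlying idea and the level of detail required are identical.
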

\begin{proof}
We first evaluate the row sum. For each $x \in \bX$,
\begin{align*}
\sum_{j \in [1,k]}\Pxj=\sum_{j \in [1,k]}\sum_{\{\mx \in \ka~|~\Xxj=1\}}\mu(\mx)=\sum_{\mx \in \ka}\mu(\mx)=1~.
\end{align*}
In the second inequality we used that $\mx \in \ka$, meaning for each $x \in \bX$, $\sum_{j \in [1,k]}\Xxj=1$. Next we evaluate the column sum, for each $j \in [1,k]$,
\begin{align*}
\sum_{x \in \bX}\Pxj&=\sum_{x \in \bX}\sum_{\{\mx \in \ka~|~\Xxj=1\}}\mu(\mx)=\sum_{\mx \in \ka}\sum_{\{x \in \bX~|~\Xxj=1\}}\mu(\mx)\\
&=\sum_{\mx \in \ka}\mu(\mx)\sum_{\{x \in \bX~|~\Xxj=1\}}1=\sum_{\mx \in \ka}\mu(\mx) \phi_{j}=\phi_{j}
\end{align*}
In the first equality we used the definition of $\Pxj$. In the second inequality we interchanged the summations. In the final equality we used $\sum_{\mx \in \ka}\mu(\mx)=1$.
\end{proof}
The matrix $\bP$ defined in \Cref{eq:mp} is important because we can upper bound the permanent of matrix $\ma$ in terms of entries of this matrix. We formalize this result in our next lemma.

\begin{lemma}\label{lem:upperm}
 For matrix $\ma \in \kds$, if $\bP$ is the matrix defined in \Cref{eq:mp}, then
$$\log \perm(\ma)	\leq  O\left( k\log \frac{\Ns}{k} \right)+ \fnp(\ma, \bP)$$
\end{lemma}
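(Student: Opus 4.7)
The plan is to bound $\log \perm(\ma)$ by exploiting the non-negativity of the KL divergence between $\mu$ (defined in \Cref{eq:two}) and a simpler product distribution $\nu$ built from the marginal matrix $\bP$, and then apply Stirling's approximation.

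First, I would define a distribution $\nu$ on $\kr$ by independently drawing, for each $x \in \bX$, a column index $j \in [1,k]$ according to $\Pxj$ (which is legal since \Cref{lem:condip} gives $\sum_j \Pxj = 1$). Concretely, for $\mx \in \kr$,
\[
\nu(\mx) \defeq \prod_{x \in \bX} \prod_{\{j \in [1,k] \,|\, \Xxj = 1\}} \Pxj~.
\]
Taking expectations under $\mu$ and using that $\Pxj = \Pr_{\mx \sim \mu}(\Xxj = 1)$ by definition,
\[
\expt{\mx \sim \mu}{\log \nu(\mx)} = \sum_{x \in \bX} \sum_{j \in [1,k]} \Pxj \log \Pxj~.
\]

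Second, using the explicit form of $\mu$ from \Cref{eq:mux}, for any $\mx$ in the support of $\mu$,
\[
\log \mu(\mx) = \sum_{j \in [1,k]} \log(\phi_j!) + \sum_{\{(x,j)\,|\,\Xxj=1\}} \log \madxj - \log \perm(\ma)~,
\]
so taking the expectation under $\mu$ gives
\[
\expt{\mx \sim \mu}{\log \mu(\mx)} = \sum_{j \in [1,k]} \log(\phi_j!) + \sum_{x \in \bX}\sum_{j \in [1,k]} \Pxj \log \madxj - \log \perm(\ma)~.
\]
Applying $\KL{\mu \,\|\, \nu} \geq 0$ from \Cref{lem:kldiv} and rearranging yields
\[
\log \perm(\ma) \leq \sum_{j \in [1,k]} \log(\phi_j!) + \sum_{x \in \bX}\sum_{j \in [1,k]} \Pxj \log \frac{\madxj}{\Pxj}~.
\]

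Finally, I would control $\sum_j \log(\phi_j!)$. Stirling's approximation (\Cref{lem:ster}) gives $\log(\phi_j!) \leq \phi_j \log \phi_j - \phi_j + O(\log \phi_j)$, and summing over $j$ with concavity of $\log$ and $\sum_j \phi_j = \Ns$ yields $\sum_j \log \phi_j \leq k \log(\Ns/k)$. Thus
\[
\sum_{j \in [1,k]} \log(\phi_j!) \leq \sum_{j \in [1,k]} \phi_j \log \phi_j - \sum_{j \in [1,k]} \phi_j + O\!\left(k \log \frac{\Ns}{k}\right)~.
\]
Substituting and comparing with the definition of $\fnp(\ma, \bP)$ in \Cref{eq:fnp} gives exactly the claimed bound.

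The only nontrivial step is the final Stirling manipulation: a naive $\sum_j O(\log \phi_j) = O(k \log \Ns)$ bound would be too weak, so the concavity-based improvement to $O(k \log(\Ns/k))$ is essential in order to match the target approximation ratio. Everything else follows directly from the definitions and the KL-divergence inequality.
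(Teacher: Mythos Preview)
Your proposal is correct and follows essentially the same approach as the paper: define the product distribution $\nu$ from the marginals $\bP$, compute $\expt{\mx\sim\mu}{\log\mu(\mx)}$ and $\expt{\mx\sim\mu}{\log\nu(\mx)}$ via the explicit formula \Cref{eq:mux} and the definition of $\bP$, apply $\KL{\mu\|\nu}\ge 0$, and finish with Stirling plus the concavity bound $\sum_j \log\phi_j \le k\log(\Ns/k)$. The paper's proof is identical in structure and in every substantive step.
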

\begin{proof}
We first calculate the expectation of $\log(\mu(\mx))$ and express it in terms of matrix $\bP$.
\begin{equation}\label{eq:upone}
\begin{split}
\expt{\mx\sim \mu}{\log \mu(\mx)}&=\sum_{\mx \in \kr} \mu(\mx)\log \mu(\mx)=\sum_{\mx \in \ka} \mu(\mx)\log \mu(\mx)~,\\
&=\sum_{\mx \in \ka} \mu(\mx)\log\left( \left(\prod_{j \in [1,k]} \phi_{j}! \right) \left( \prod_{\{(x,j) \in \bX \times [1,k]~|~\Xxj=1\}} \madxj \right)\left(\frac{1}{\perm(\ma)}\right) \right)~,\\
&=\log \left(\prod_{j \in [1,k]} \phi_{j}! \right)-\log \perm(\ma) + \sum_{\mx \in \ka} \mu(\mx) \log \left( \prod_{\{(x,j) \in \bX \times [1,k]~|~\Xxj=1\}} \madxj \right)~.
\end{split}
\end{equation}
The second equality holds because the support of distribution $\mu$ is subset of $\ka$. In the third equality we used \Cref{eq:mux}.
We now simplify the last term in the final expression from the above derivation.
\begin{equation}\label{eq:uptwo}
\begin{split}
\sum_{\mx \in \ka} \mu(\mx) \log \left( \prod_{\{(x,j) \in \bX \times [1,k]~|~\Xxj=1\}} \madxj \right)&=\sum_{\mx \in \ka} \mu(\mx) \sum_{\{(x,j) \in \bX \times [1,k]~|~\Xxj=1\}}\log  \madxj~,\\
&=\sum_{(x,j) \in \bX \times [1,k]} \log  \madxj \sum_{\{\mx \in \ka~|~\Xxj=1 \}} \mu(\mx) ~,\\
&=\sum_{(x,j) \in \bX \times [1,k]} \Pxj\log  \madxj~. \\
\end{split}
\end{equation}

Combining \Cref{eq:upone} and \Cref{eq:uptwo} together we get,
\begin{equation}
\expt{\mx\sim \mu}{\log \mu(\mx)}=\log \left(\prod_{j \in [1,k]} \phi_{j}! \right)-\log \perm(\ma) +\sum_{(x,j) \in \bX \times [1,k]} \Pxj\log  \madxj~.
\end{equation}

We next define a different distribution $\nu$ on $\kr$ using the following sampling procedure: For each $x \in \bX$, pick a column $j \in [1,k]$ independently with probability $\Pxj$. Note that this is a valid sampling procedure because for each $x \in \bX$, $\sum_{j \in [1,k]}\Pxj=1$. The description of distribution $\nu$ is as follows: for each $\mx \in \kr$,
\begin{equation}
\nu(\mx)\defeq \prod_{\{(x,j) \in \bX \times [1,k]~|~\Xxj=1\}} \Pxj
\end{equation}
{\bf Remark}: Note that $\sum_{X \in \kr}\nu(\mx)=\prod_{x\in \bX} (\sum_{j \in [1,k]} \Pxj) =1$ and $\nu$ is a valid distribution. 

We next calculate the expectation of $\log(\nu(\mx))$ with respect to distribution $\mu$ and express it in terms of matrix $\bP$. Note that $\sum_{\mx \in \kr} \mu(\mx)\log \nu(\mx)=\sum_{\mx \in \ka} \mu(\mx)\log \nu(\mx)$ because $\mu(\mx)=0$ for all $\mx \in \kr \backslash \ka$ and we get,
\begin{align*}
\expt{\mx\sim \mu}{\log \nu(\mx)}&=\sum_{\mx \in \ka} \mu(\mx)\log \nu(\mx)
=\sum_{\mx \in \ka} \mu(\mx)\log  \left( \prod_{\{(x,j) \in \bX \times [1,k]~|~\Xxj=1\}} \Pxj \right)\\
&=\sum_{\mx \in \ka} \mu(\mx) \sum_{\{(x,j) \in \bX \times [1,k]|\Xxj=1\}}\log  \Pxj
=\sum_{(x,j) \in \bX \times [1,k]} \log  \Pxj \sum_{\{\mx \in \ka|\Xxj=1 \}} \mu(\mx) \\
&=\sum_{(x,j) \in \bX \times [1,k]} \Pxj\log  \Pxj \\
\end{align*}
We now calculate the KL divergence $\KL{\mu\|\nu}$ between distributions $\mu$ and $\nu$.
\begin{align*}
\KL{\mu\|\nu}&=\expt{\mx\sim \mu}{\log \mu(\mx)}-\expt{\mx\sim \mu}{\log \nu(\mx)}\\
&=\log \left(\prod_{j \in [1,k]} \phi_{j}! \right)-\log \perm(\ma) +\sum_{(x,j) \in \bX \times [1,k]} \Pxj\log  \madxj-\sum_{(x,j) \in \bX \times [1,k]} \Pxj\log  \Pxj
\end{align*}
Using \Cref{lem:kldiv}, we have $\KL{\mu\|\nu} \geq 0$, that further implies,
\begin{equation}\label{perm:up}
\begin{split}
\log \perm(\ma) &\leq \log \left(\prod_{j \in [1,k]} \phi_{j}! \right) +\sum_{(x,j) \in \bX \times [1,k]} \Pxj\log  \frac{\madxj}{\Pxj}\\
&\leq \sum_{j \in [1,k]}O(\log \phi_{j})+ \sum_{j\in[1,k]}\phi_{j}\log \phi_{j} -\sum_{j\in[1,k]}\phi_{j} +\sum_{(x,j) \in \bX \times [1,k]} \Pxj\log  \frac{\madxj}{\Pxj}\\
&\leq O(k \log \frac{\Ns}{k})+ \sum_{j\in[1,k]}\phi_{j}\log \phi_{j} -\sum_{j\in[1,k]}\phi_{j} +\sum_{(x,j) \in \bX \times [1,k]} \Pxj\log  \frac{\madxj}{\Pxj}
\end{split}
\end{equation}
In the second inequality we used \Cref{lem:ster} on each $\phi_{j}$ and further in the third inequality we used $\sum_{j \in [1,k]}\phi_{j}= \Ns$ and the fact that the function $\sum_{j \in [1,k]}\log \phi_{j}$ is always upper bounded by $O(k \log \frac{\Ns}{k})$. Further using the definition of $\fnp(\ma,\bP)$ (See \Cref{eq:fnp}), we conclude the proof.
\end{proof}

We provided an upper bound to the permanent of matrix $\ma$ and all that remains is to relate this upper bound to the scaled Sinkhorn permanent of matrix $\ma$. Our next lemma will serve this purpose.

\begin{lemma}\label{thm:ds}
	For any matrix $\bP \in \R_{\geq 0}^{\bX \times [1,k]}$ that satisfies,
	\begin{equation}
	\sum_{j \in [1,k]}\Pxj=1 \text{ for all } x\in \bX \quad \text{ and } \quad  \sum_{x \in \bX}\Pxj=\phi_{j} \text{ for all } j \in [1,k]~.
	\end{equation}
	there exists a doubly stochastic matrix $\bQ \in \R_{\geq 0}^{\bX \times \bX}$ such that,
	\begin{equation}
	\fnp(\ma, \bP) = \frst(\ma,\bQ)-\Ns~.
	\end{equation}
\end{lemma}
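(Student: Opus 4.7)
The plan is to give an explicit construction of the witness $\bQ$ by ``spreading'' each column of $\bP$ uniformly across the corresponding identical columns of $\ma$. Concretely, for each $y \in \bX$, let $j(y) \in [1,k]$ denote the unique index with $\ma_{:y} = \mathbf{c}_{j(y)}$, and define
\[
\bQ_{x,y} \defeq \frac{\bP_{x,j(y)}}{\phi_{j(y)}} \quad \text{for all } x,y \in \bX.
\]
Because exactly $\phi_j$ columns of $\ma$ equal $\cj$, the row sum is $\sum_y \bQ_{x,y} = \sum_{j\in[1,k]} \phi_j \cdot (\Pxj/\phi_j) = \sum_j \Pxj = 1$ by the row-sum hypothesis on $\bP$, and for any $y$ with $j(y) = j$ the column sum is $\sum_x \bQ_{x,y} = (1/\phi_j)\sum_x \Pxj = 1$ by the column-sum hypothesis on $\bP$. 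Hence $\bQ \in \dstoc$.

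Next I would compute $\frst(\ma,\bQ)$ by grouping the outer sum over $y$ according to which distinct column $\cj$ equals $\ma_{:y}$. For each such $y$ we have $\ma_{x,y} = \madxj$, so
\[
\frst(\ma,\bQ) = \sum_{x \in \bX} \sum_{j\in[1,k]} \phi_j \cdot \frac{\Pxj}{\phi_j} \log \frac{\madxj}{\Pxj/\phi_j} = \sum_{x,j} \Pxj \log \frac{\madxj}{\Pxj} + \sum_{x,j} \Pxj \log \phi_j.
\]
The second term simplifies using the column-sum condition $\sum_x \Pxj = \phi_j$ to $\sum_j \phi_j \log \phi_j$. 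Comparing with the definition of $\fnp(\ma,\bP)$ in \Cref{eq:fnp} and using $\sum_j \phi_j = \Ns$, we obtain $\frst(\ma,\bQ) = \fnp(\ma,\bP) + \Ns$, which is the claimed identity.

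There is essentially no obstacle here: once one sees that the natural thing to do is to split each unit of probability $\Pxj$ evenly among the $\phi_j$ positions of $\bQ$ that correspond to copies of the column $\cj$, the doubly-stochastic condition follows immediately from the two constraints on $\bP$, and the entropy identity is a direct bookkeeping calculation in which the Jacobian factor $1/\phi_j$ inside the logarithm cancels against the outer $\phi_j$ multiplicity to produce precisely the missing $\sum_j \phi_j \log \phi_j$ term in the definition of $\fnp(\ma,\bP)$.
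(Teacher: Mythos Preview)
Your proposal is correct and is essentially identical to the paper's own proof: the paper also defines $\bQ_{x,y} \defeq \Pxj/\phi_j$ where $j$ is the index with $\ma_{:y}=\cj$, verifies double stochasticity from the two hypotheses on $\bP$, and then groups the sum in $\frst(\ma,\bQ)$ by distinct columns to produce exactly the same identity $\frst(\ma,\bQ)=\fnp(\ma,\bP)+\Ns$.
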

\begin{proof}
	Define matrix $\bQ \in \R^{\bX \times \bX}$ as follows,
	$$\bQ_{x,y}\defeq \frac{\Pxj}{\phi_{j}}$$
	where in the definition above $j$ is such that $\may=\cj$. Now we verify the row and column sums of matrix $\bQ$. For each $x \in \bX$,
\begin{equation}\label{eq:rowsum}
\begin{split}
\sum_{y \in \bX}\bQ_{x,y}&= \sum_{j\in [1,k]} \sum_{\{y \in \bX~|~\may=\cj\}}\frac{\Pxj}{ \phi_{j}}=\sum_{j \in [1,k]}\frac{\Pxj}{\phi_{j}} \sum_{\{y \in \bX~|~\may=\cj\}}1\\
&=\sum_{j \in [1,k]} \frac{\Pxj}{\phi_{j}} \cdot \phi_{j}=\sum_{j \in [1,k]} \Pxj=1\\
\end{split}
\end{equation}
We next evaluate the column sums. For each $y\in \bX$, let $j$ \footnote{Note that $j$ is a function of $y$. For convenience, in our notation we don't capture its dependence on $y$.} be such that $\may=\cj$, then
	\begin{equation}
	\sum_{x \in \bX}\bQ_{x,y}=\sum_{x \in \bX}\frac{\Pxj}{\phi_{j}}=\frac{1}{\phi_{j}}\sum_{x \in \bX}\Pxj=\frac{1}{\phi_{j}}\phi_{j}=1~.
	\end{equation}
	Therefore the matrix $\bQ$ is doubly stochastic and we next relate $\frst(\ma,\bQ)$ with $\fnp(\ma, \bP)$. Recall the definition of $\frst(\ma,\bQ)$ (\Cref{eq:ftsd}), 
	\begin{equation}\label{eq:zzz}
	\frst(\ma,\bQ)=\sum_{(x,y) \in \bX \times \bX} \bQ_{x,y}\log (\frac{{\ma}_{x,y}}{\bQ_{x,y}})~.
	\end{equation}
	We analyze the above term and express it in terms of entries of matrices $\bP$ and $\mad$.
\begin{equation}\label{eq:ooo}
\begin{split}
\sum_{(x,y) \in \bX \times \bX} \bQ_{x,y}\log (\frac{{\ma}_{x,y}}{\bQ_{x,y}})&=\sum_{x \in \bX} \sum_{j \in [1,k]}\left[ \sum_{\{y \in \bX~|~\may=\cj\}} \bQ_{x,y}\log (\frac{{\ma}_{x,y}}{\bQ_{x,y}})\right] \\
&=\sum_{x \in \bX} \sum_{j \in [1,k]}\left[ \sum_{\{y \in \bX~|~\may=\cj\}} \frac{\Pxj}{  \phi_{j}}\log (\frac{\madxj  \phi_{j}}{\Pxj})\right] \\
&=\sum_{x \in \bX} \sum_{j \in [1,k]}\left[  \phi_{j} \cdot \frac{\Pxj}{  \phi_{j}}\log (\frac{\madxj  \phi_{j}}{\Pxj})\right] 
= \sum_{x \in \bX} \sum_{j \in [1,k]}\left[\Pxj\log (\frac{\madxj  \phi_{j}}{\Pxj})\right]
\end{split}
\end{equation}
The first equality follows because $\cj$ for all $j\in [1,k]$ are distinct. The second equality follows because for each $x\in \bX$, consider any $y\in \bX$ such that $\may=\cj$ and note that for all such $y$'s, $\ma_{x,y}=\madxj$ and $\bQ_{x,y}=\frac{\Pxj}{  \phi_{j}}$. The third equality follows because $\sum_{\{y \in \bX~|~\may=\cj\}}1=|\{y \in \bX~|~\may=\cj\}|=\phi_j$. 

We further simplify the final term in the above derivation.
\begin{equation}\label{eq:ttt}
\begin{split}
\sum_{x \in \bX} \sum_{j \in [1,k]}\left[\Pxj\log (\frac{\madxj  \phi_{j}}{\Pxj})\right]&= \sum_{x \in \bX} \sum_{j \in [1,k]}\left[\Pxj\log (\frac{\madxj}{\Pxj})\right]+ \sum_{x \in \bX} \sum_{j \in [1,k]}\Pxj\log \phi_{j}\\
&= \sum_{x \in \bX} \sum_{j \in [1,k]}\left[\Pxj\log (\frac{\madxj}{\Pxj})\right]+ \sum_{j \in [1,k]}\log \phi_{j} \sum_{x \in \bX} \Pxj\\
&= \sum_{x \in \bX} \sum_{j \in [1,k]}\left[\Pxj\log (\frac{\madxj}{\Pxj})\right]+ \sum_{j \in [1,k]}\phi_{j}\log \phi_{j}~.
\end{split}
\end{equation}
Combining \Cref{eq:ooo}, \Cref{eq:ttt} and further substituting back in \Cref{eq:zzz} we get,
\begin{equation}
\begin{split}
\frst(\ma,\bQ) &=\sum_{x \in \bX} \sum_{j \in [1,k]}\left[\Pxj\log (\frac{\madxj}{\Pxj})\right]+ \sum_{j \in [1,k]}\phi_{j}\log \phi_{j}\\
& = \fnp(\ma,\bQ) + \Ns~.
\end{split}
\end{equation}
In the final expression, we used the definition of $\fnp(\ma,\bQ)$ and combined it with $\Ns=\sum_{j\in[1,k]}\phi_j$.
\end{proof}

We are now ready to prove our main lemma of this section and is restated for convenience.
\mainthm*
\begin{proof}
Consider the matrix $\bP$ defined in \Cref{eq:mp}. By \Cref{lem:condip}, matrix $\bP$ satisfies the conditions of \Cref{thm:ds}; therefore, there exists a doubly stochastic matrix $\bQ \in \dstoc$ such that $\fnp(\ma,\bP)=\frst(\ma,\bQ)-\Ns$. Combining it with \Cref{lem:upperm} we get $\log \perm(\ma)	\leq  O( k\log \frac{\Ns}{k} )+ \frst(\ma, \bQ) -\Ns$, which further implies $\perm(\ma) \leq \exps{O( k\log \frac{\Ns}{k} )} \ssink(\ma)$. The lower bound for the $\perm(\ma)$ follows from \Cref{cor:ssinklb} and we conclude the proof.
\end{proof}
We next state another interesting property of the matrix $\bP$ defined in \Cref{eq:mp}. This property will be useful for the purposes of PML (\Cref{sec:pml}). 
\begin{theorem}\label{thm:sameprob}
For matrix $\ma \in \kds$, the matrix $\bP$ defined in \Cref{eq:mp} satisfies the following: 
$\text{If }x,y \in \bX\text{ are such that }\maxt=\mayt\text{ then, for all }j\in [1,k] \text{ we have }\Pxj=\Pyj~.$
\end{theorem}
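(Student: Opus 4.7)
The plan is to exploit symmetry: since rows $x$ and $y$ of $\ma$ are identical, swapping them is a weight-preserving symmetry of the weighted set of permutations used to define $\mu$, and this symmetry will transport the event $\{\Xxj = 1\}$ to the event $\{\mx_{y,j} = 1\}$.

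Concretely, I will first define the row-swap involution $\tau : \ka \to \ka$ that sends $\mx$ to the matrix $\mx^{(xy)}$ obtained by exchanging rows $x$ and $y$. Since row-stochasticity and the column sums $\phi_j$ are unaffected by a row permutation, $\tau$ maps $\ka$ into itself, and $\tau \circ \tau$ is the identity, so $\tau$ is a bijection.

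The key step is to show $\mu(\mx^{(xy)}) = \mu(\mx)$ for every $\mx \in \ka$. Using the closed form
\[
\mu(\mx) \;=\; \Bigl(\prod_{j \in [1,k]} \phi_{j}!\Bigr)\Bigl(\prod_{\{(z,j)\,:\,\mx_{z,j}=1\}} \madzjp\Bigr)\frac{1}{\perm(\ma)}
\]
from \Cref{eq:mux}, the only factors that change under $\tau$ are the two involving rows $x$ and $y$. If $j_x$ and $j_y$ are the (unique) columns with $\mx_{x,j_x}=1$ and $\mx_{y,j_y}=1$, then $\mx^{(xy)}_{x,j_y}=1$ and $\mx^{(xy)}_{y,j_x}=1$, so the row-$x$/row-$y$ contribution is $\mad_{x,j_x}\mad_{y,j_y}$ for $\mx$ and $\mad_{x,j_y}\mad_{y,j_x}$ for $\mx^{(xy)}$. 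The hypothesis $\maxt = \mayt$ forces $\mad_{x,\cdot} = \mad_{y,\cdot}$ (since $\mad$ consists of a subset of columns of $\ma$), and therefore both products equal $\mad_{x,j_x}\mad_{x,j_y}$. Thus $\mu(\mx) = \mu(\mx^{(xy)})$.

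Finally, the proof is concluded by rewriting
\[
\Pxj \;=\; \sum_{\{\mx \in \ka \,:\, \mx_{x,j}=1\}} \mu(\mx) \;=\; \sum_{\{\mx \in \ka \,:\, \mx^{(xy)}_{y,j}=1\}} \mu(\mx^{(xy)}) \;=\; \sum_{\{\mx' \in \ka \,:\, \mx'_{y,j}=1\}} \mu(\mx') \;=\; \Pyj,
\]
where the second equality uses $\mx_{x,j} = \mx^{(xy)}_{y,j}$ together with $\mu(\mx) = \mu(\mx^{(xy)})$, and the third is the change of summation variable $\mx' = \tau(\mx)$. I do not expect any real obstacle here; the only point that requires a moment's care is verifying that the row-swap really preserves $\mu$, which reduces cleanly to the identity $\mad_{x,\cdot} = \mad_{y,\cdot}$ inherited from $\maxt = \mayt$.
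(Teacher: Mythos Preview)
Your proof is correct and is essentially the same as the paper's: both construct a weight-preserving bijection between $\{\mx \in \ka : \mx_{x,j}=1\}$ and $\{\mx \in \ka : \mx_{y,j}=1\}$ by swapping the row-$x$ and row-$y$ entries, and both rely on $\mad_{x,\cdot}=\mad_{y,\cdot}$ to see that the product $\prod_{(z,j')}\mad_{z,j'}^{\mx_{z,j'}}$ is preserved. The paper splits into two cases (identity when $\mx_{y,j}=1$, swap otherwise), but since the row swap is the identity in the first case anyway, your single involution $\tau$ is exactly the same map presented more uniformly.
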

\begin{proof}
For any $j\in [1,k]$, recall by the definitions of terms $\Pxj$ and $\Pyj$,
\begin{equation}\label{eq:pxj}
\begin{split}
\Pxj&=\sum_{\{\mx \in \ka~|~\Xxj=1\}}\left(\prod_{j' \in [1,k]} \phi_{j'}! \right) \left( \prod_{(z,j') \in \bX \times [1,k]} \madzjp^{\Xzjp} \right)\left(\frac{1}{\perm(\ma)}\right),\\
&=\left(\prod_{j' \in [1,k]} \phi_{j'}! \right)\left(\frac{1}{\perm(\ma)}\right)\sum_{\{\mx \in \ka~|~\Xxj=1\}}  \prod_{(z,j') \in \bX \times [1,k]} \madzjp^{\Xzjp}~.
\end{split}
\end{equation}
\begin{equation}\label{eq:pyj}
\begin{split}
\Pyj=\left(\prod_{j' \in [1,k]} \phi_{j'}! \right)\left(\frac{1}{\perm(\ma)}\right)\sum_{\{\mx' \in \ka~|~\Xyj'=1\}}  \prod_{(z,j') \in \bX \times [1,k]} \madzjp^{\Xzjp'}~.
\end{split}
\end{equation}
For any $\mY \in  \{\mx \in \ka~|~\Xxj=1\}$ we next construct a unique $\mY' \in \{\mx' \in \ka~|~\Xyj'=1\}$ (and vice versa) such that,
$$ \prod_{(z,j') \in \bX \times [1,k]} \madzjp^{\Yzjp}=\prod_{(z,j') \in \bX \times [1,k]} \madzjp^{\Yzjp'}$$
Each $\mY \in \ka$ corresponds to a bipartite graph where vertices correspond to set $\bX$ on left side and $[1,k]$ on the other, such that, degree of every left vertex $x \in \bX$ is 1 and degree of every right vertex $j \in [1,k]$ is $\phi_j$.

Consider $\mY \in  \{\mx \in \ka~|~\Xxj=1\}$, we divide the analysis into the following two cases,
\begin{enumerate}
\item If $\mY_{y,j}=1$, meaning both vertices $x,y \in \bX$ are connected to $j \in [1,k]$ in our bipartite graph representation. Then, $\mY'\defeq\mY$.
\item If $\mY_{y,j}=0$, meaning vertex $x$ is connected to $j$ and $y$ to some other vertex $j'\neq j$. In this case we swap the edges, meaning we remove edges $(x,j), (y,j')$  and add $(x,j'),(y,j)$ to construct $\mY'$. We formally define $\mY'$ next,
\begin{equation}
\mY'_{z,j''}\defeq 
\begin{cases}
1 \text{ if }z=y \text{ and }j''=j,\\
0 \text{ if }z=y \text{ and }j''=j',\\
1 \text{ if }z=x \text{ and }j''=j',\\
0 \text{ if }z=x \text{ and }j''=j,\\
\mY_{z,j'} \text{ otherwise}~.
\end{cases}
\end{equation}	
\end{enumerate}
In both cases, clearly $\mY' \in \{\mx' \in \ka~|~\Xyj'=1\}$. Further, $\maxt=\mayt$ implies $\mad_{x,j'}=\mad_{y,j}$ for all $j'\in [1,k]$ and the following equality holds,
$$ \prod_{(z,j') \in \bX \times [1,k]} \madzjp^{\Yzjp}=\prod_{(z,j') \in \bX \times [1,k]} \madzjp^{\Yzjp'}$$
The same analysis also holds when we start with a $\mY' \in \{\mx' \in \ka~|~\Xyj'=1\}$ and construct $\mY \in  \{\mx \in \ka~|~\Xxj=1\}$. We have a one to one correspondence between elements $\mY$ and $\mY'$ in the sets $\{\mx \in \ka~|~\Xxj=1\}$ and $\{\mx' \in \ka~|~\Xyj'=1\}$ respectively, satisfying,
$$ \prod_{(z,j') \in \bX \times [1,k]} \madzjp^{\Yzjp}=\prod_{(z,j') \in \bX \times [1,k]} \madzjp^{\Yzjp'}~.$$
Therefore, $\Pxj=\Pyj$ and we conclude the proof.
\end{proof}
\subsection{Generalization to low non-negative rank matrices}\label{sec:lowrank}
Here we prove our main result for the scaled Sinkhorn permanent of low non-negative rank matrices (\Cref{thm:mainlowrank}). To prove this result, we use the performance result of the scaled Sinkhorn permanent for non-negative matrices with $k$ distinct columns. The following lemma relates the permanent of a matrix $\ma$ of non-negative rank $k$ to matrices with at most $k$ distinct columns and will be crucial for our analysis. 

\newcommand{\va}{\textbf{V}^{\alpha}}
\newcommand{\ua}{\textbf{U}^{\alpha}}
\newcommand{\uvec}{\textbf{u}}
\begin{lemma}[\cite{Bar96}]
	For any matrix $\ma \in \R_{\geq 0}^{\bX \times \bX}$ of non-negative rank $k$. If $\ma \defeq \sum_{j \in [k]} \vj \uj^{\top}$ for $\vj,\uj \in \R_{\geq 0}^{\bX}$, then 
	$$\perm(\ma)= \sum_{\{\alpha \subseteq \Z^{k}|\sum_{j \in [k]}\alpha_{j}=N\}}  \frac{1}{\prod_{j \in [k]}\alpha_{j}!}\perm(\va)\perm(\ua),$$
	where $\va \defeq [\underbrace{\vvec_{1} \dots \vvec_{1}}_{\alpha_{1}}~|~\underbrace{\vvec_{2} \dots \vvec_{2}}_{\alpha_{2}} ~|~\dots~| \underbrace{\vvec_{k} \dots \vvec_{k}}_{\alpha_{k}}]$, $\ua \defeq [\underbrace{\uvec_{1} \dots \uvec_{1}}_{\alpha_{1}}~|~\underbrace{\uvec_{2} \dots \uvec_{2}}_{\alpha_{2}} ~|~\dots~| \underbrace{\uvec_{k} \dots \uvec_{k}}_{\alpha_{k}}]$.
	\end{lemma}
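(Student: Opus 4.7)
The plan is to prove this Barvinok-style identity by a direct combinatorial expansion of the permanent, using the rank-$k$ decomposition to turn $\perm(\ma)$ into a sum over permutations $\sigma$ and auxiliary ``type'' functions $f : \bX \to [k]$, and then regrouping so that the $\va$-part and the $\ua$-part decouple.

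First I would write $\perm(\ma) = \sum_{\sigma \in \permD} \prod_{x \in \bX} \ma_{x,\sigma(x)}$ and substitute $\ma_{x,y} = \sum_{j \in [k]} \vj(x)\,\uj(y)$. Distributing the resulting product over $x$ produces an inner sum over functions $f : \bX \to [k]$, giving
\[
\perm(\ma) \;=\; \sum_{\sigma \in \permD}\; \sum_{f : \bX \to [k]} \prod_{x \in \bX} \vec{v}_{f(x)}(x)\cdot \vec{u}_{f(x)}(\sigma(x)).
\]
Next I would reindex the $\ua$-factor: setting $h(y) := f(\sigma^{-1}(y))$ gives $\prod_x \vec{u}_{f(x)}(\sigma(x)) = \prod_y \vec{u}_{h(y)}(y)$, and because $\sigma$ is a bijection, $f$ and $h$ share the same type vector $\alpha \in \Z^k$ defined by $\alpha_j := |f^{-1}(j)|$. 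For a fixed triple $(\alpha, f, h)$ with $\mathrm{type}(f) = \mathrm{type}(h) = \alpha$, the permutations $\sigma$ satisfying $h = f \circ \sigma^{-1}$ are exactly those that bijectively map $f^{-1}(j)$ to $h^{-1}(j)$ for each $j$, so there are exactly $\prod_j \alpha_j!$ of them. Grouping the sum by $\alpha$ therefore yields
\[
\perm(\ma) \;=\; \sum_{\alpha:\,\sum_j \alpha_j = N} \Bigl(\prod_{j \in [k]} \alpha_j!\Bigr) \sum_{\substack{f,h : \bX \to [k] \\ \mathrm{type}(f) = \mathrm{type}(h) = \alpha}} \prod_{x \in \bX} \vec{v}_{f(x)}(x) \cdot \prod_{y \in \bX} \vec{u}_{h(y)}(y).
\]

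The remaining step is to recognize the inner double sum as $\perm(\va)\,\perm(\ua)/(\prod_j \alpha_j!)^2$. Applying the very same expansion to $\perm(\va)$, whose columns are the pairs $(j,\ell)$ with $j \in [k]$, $\ell \in [\alpha_j]$, a bijection $\bX \to \{(j,\ell)\}$ is determined by a type-$\alpha$ function $f : \bX \to [k]$ together with, for each $j$, a bijection $f^{-1}(j) \to [\alpha_j]$, so
\[
\perm(\va) \;=\; \Bigl(\prod_{j \in [k]} \alpha_j!\Bigr) \sum_{f : \mathrm{type}(f) = \alpha} \prod_{x \in \bX} \vec{v}_{f(x)}(x),
\]
and symmetrically for $\perm(\ua)$ in the $h$-variable. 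Multiplying these and dividing by $(\prod_j \alpha_j!)^2$ recovers the inner double sum above, so substituting back produces the claimed identity $\perm(\ma) = \sum_\alpha \frac{1}{\prod_j \alpha_j!}\perm(\va)\perm(\ua)$.

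The only delicate step is the bookkeeping in the reindexing: one must be careful that the map $(\sigma,f) \mapsto (f,h,\text{internal labelings})$ is a genuine bijection and that the factor $\prod_j \alpha_j!$ from the $\sigma$-fiber exactly cancels one of the two copies of $\prod_j \alpha_j!$ produced when unfolding $\perm(\va)$ and $\perm(\ua)$ into sums over type-$\alpha$ functions. Once this multinomial accounting is laid out cleanly, the identity follows with no analysis beyond finite combinatorial sums, and in particular no use of positivity of $\vj,\uj$ is required (the identity holds over any commutative ring).
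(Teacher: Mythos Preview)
Your proof is correct. The paper does not actually prove this lemma itself; it is stated with a citation to Barvinok~\cite{Bar96} and used as a black box, so there is no ``paper's own proof'' to compare against. Your combinatorial expansion---distributing over $f:\bX\to[k]$, reindexing via $h = f\circ\sigma^{-1}$, counting the $\prod_j \alpha_j!$ compatible permutations, and recognizing the repeated-column permanents as $(\prod_j \alpha_j!)$ times a sum over type-$\alpha$ functions---is the standard way to establish this identity and is sound, including your observation that nonnegativity plays no role.
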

As the number of terms in the above summation is low, the maximizing term is a good approximation to the permanent of $\ma$.
\begin{cor}
	Given a non-negative matrix $\ma \in \R_{\geq 0}^{\bX \times \bX}$, let $k$ denote the non-negative rank of the matrix. If $\ma = \sum_{j \in [k]} \vj \uj^{\top}$ for $\vj,\uj \in \R_{\geq 0}^{\bX}$ is any non-negative matrix factorization of $\ma$, then 
	\begin{equation}\label{eq:maxalpha}
		\perm(\ma) \leq  \expo{O(k\log \frac{N}{k})} \max_{\{\alpha \subseteq \Z^{k}|\sum_{j \in [k]}\alpha_{j}=N\}}  \frac{1}{\prod_{j \in [k]}\alpha_{j}!}\perm(\va)\perm(\ua)~.
	\end{equation}
\end{cor}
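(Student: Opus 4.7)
The plan is to derive the corollary directly from the preceding lemma by bounding the number of summands and then replacing the sum by the number of terms times the maximum term. Specifically, the preceding lemma gives
\[
\perm(\ma)= \sum_{\{\alpha \in \Z^{k}|\sum_{j \in [k]}\alpha_{j}=N\}}  \frac{1}{\prod_{j \in [k]}\alpha_{j}!}\perm(\va)\perm(\ua),
\]
and every term is non-negative (since $\ma$, hence $\va$ and $\ua$, are non-negative). Therefore the sum is at most the number of index tuples $\alpha$ times the largest single term on the right-hand side.

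First I would count the index set. The set $\{\alpha \in \Z^{k} : \sum_{j \in [k]} \alpha_j = N,\ \alpha_j \geq 0\}$ consists of weak compositions of $N$ into $k$ parts, and its cardinality equals $\binom{N+k-1}{k-1}$. Then I would upper bound this binomial coefficient. Using the standard estimate $\binom{m}{r} \leq (e m/r)^{r}$, one obtains
\[
\binom{N+k-1}{k-1} \leq \left(\frac{e(N+k-1)}{k-1}\right)^{k-1} \leq \expo{O\!\left(k\log \tfrac{N}{k}\right)},
\]
where the last inequality holds whenever $k \leq N$ (and the trivial case $k > N$ can be handled by noting that the factorization then has at most $N$ nonzero terms, so we may reduce to $k \leq N$). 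Combining this count with the sum–max inequality gives
\[
\perm(\ma) \leq \expo{O\!\left(k\log \tfrac{N}{k}\right)} \cdot \max_{\alpha}  \frac{1}{\prod_{j \in [k]}\alpha_{j}!}\perm(\va)\perm(\ua),
\]
which is exactly the desired bound.

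There is no substantive obstacle here: the whole proof is a one-line application of "sum bounded by cardinality times max," together with a routine estimate for the number of weak compositions. The only bookkeeping point is to make sure the binomial bound $\binom{N+k-1}{k-1} \leq \expo{O(k\log (N/k))}$ is stated cleanly in the regime $k \leq N$ that is relevant for applications (in our PML setting $k \leq \sqrt{n}+1 \ll N$, so this regime is the one we care about).
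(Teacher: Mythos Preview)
Your proposal is correct and follows essentially the same approach as the paper: bound the number of feasible $\alpha$'s by $\binom{N+k-1}{k-1} \in \expo{O(k\log \frac{N}{k})}$ and replace the sum by this count times the maximum term. The paper's proof is this one-liner, without the extra discussion of the $k \leq N$ regime.
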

\begin{proof}
	The number of feasible $\alpha$'s in the set $\{\alpha \subseteq \Z^{k}|\sum_{j \in [k]}\alpha_{j}=N\}$ is at most $\binom{N+k-1}{k-1} \in \expo{O(k\log \frac{N}{k})}$ and we conclude the proof.
\end{proof}
\newcommand{\qv}{\textbf{\bQ}'}
\newcommand{\qu}{\textbf{\bQ}''}
\newcommand{\pv}{\textbf{\bP}'}
\newcommand{\pu}{\textbf{\bP}''}
\renewcommand{\1}{\textbf{1}}
\newcommand{\bjxy}{\beta^{j}_{x,y}}
\begin{lemma}\label{lem:stoc}
	Let $\qv,\qu \in \kds$ be any doubly stochastic matrices. Then $\bQ \defeq \qv \qu$ is a doubly stochastic matrix.
	\end{lemma}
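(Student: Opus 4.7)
My plan is to give a short, direct verification of the three defining properties of a doubly stochastic matrix for the product $\bQ = \qv \qu$: non-negativity, row sums equal to $1$, and column sums equal to $1$.

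First, I would observe non-negativity: since $\qv, \qu \in \dstoc \subseteq \kds$, every entry of each is non-negative, and $\bQ_{x,y} = \sum_{z \in \bX} \qv_{x,z} \qu_{z,y}$ is a sum of products of non-negative reals, hence non-negative.

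Next, I would verify the row and column sum conditions in one stroke using the vector characterization: a matrix $\bM \in \kds$ is doubly stochastic if and only if $\bM \1 = \1$ and $\1^{\top} \bM = \1^{\top}$, where $\1$ is the all-ones vector. Since $\qv$ and $\qu$ are doubly stochastic we have $\qv \1 = \1$, $\qu \1 = \1$, $\1^{\top} \qv = \1^{\top}$, and $\1^{\top} \qu = \1^{\top}$, which gives
\[
\bQ \1 \;=\; \qv (\qu \1) \;=\; \qv \1 \;=\; \1, \qquad \1^{\top} \bQ \;=\; (\1^{\top} \qv) \qu \;=\; \1^{\top} \qu \;=\; \1^{\top}.
\]
Combining these with the non-negativity of the entries yields $\bQ \in \dstoc$.

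There is no substantive obstacle; this is a textbook computation, and the only thing to be careful about is matching the paper's notation ($\kds$ for non-negative matrices, $\dstoc$ for the doubly stochastic subset, and $\1$ for the all-ones vector).
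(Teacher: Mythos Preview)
Your proof is correct and takes essentially the same approach as the paper: the paper verifies $\bQ \1 = \qv \qu \1 = \qv \1 = \1$ and $\bQ^{\top} \1 = \qu^{\top} \qv^{\top} \1 = \qu^{\top} \1 = \1$, which is exactly your row and column sum computation (you additionally spell out non-negativity, which the paper leaves implicit).
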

\begin{proof}
	We first consider the row sums,
	$$\bQ \1=\qv \qu \1=\qv \1=\1~.$$
	Therefore the matrix $\bQ$ is row stochastic. In the above derivation, the second and third equalities follow because $\qu$ and $\qv$ are row stochastic matrices respectively. We now consider the column sums, 
	$$\bQ^{\top} \1=\qu^{\top} \qv^{\top} \1=\qu^{\top} \1=\1~.$$
	The above derivation follows because $\qv$ and $\qu$ are column stochastic and therefore the matrix $\bQ$ is column stochastic. As the matrix $\bQ$ is both row and column stochastic we conclude the proof.
\end{proof}
We are now ready to prove our main result of this section and we restate it for convenience.
\thmmainlowrank*
\begin{proof}
	Let $\alpha$ be the maximizer of the optimization problem \ref{eq:maxalpha}, then
	\begin{equation}\label{eq:xone}
	\perm(\ma) \leq  \expo{O(k\log \frac{N}{k})}  \frac{1}{\prod_{j \in [k]}\alpha_{j}!}\perm(\va)\perm(\ua)~.
	\end{equation}
	Recall to prove the theorem, we need to construct a doubly stochastic witness $\bQ$ that satisfies:
	$$\log \perm(\ma) \leq {O(k\log \frac{N}{k})} + \frst(\ma,\bQ)-N~.$$
	We construct such a witness $\bQ$ from the doubly stochastic witnesses for matrices $\va$ and $\ua$. For all $j \in [k]$ define $S_{j}\defeq \{y \in \bX~|~\va_{:y}=\vj\}$, equivalently $S_{j}= \{y \in \bX~|~\ua_{:y}=\uj\}$ and note that $\alpha_j=|S_j|$. Let $\qv$ and $\qu$ be the doubly stochastic matrices that maximize the scaled Sinkhorn permanent for matrices $\va$ and $\ua$ respectively. Therefore by \Cref{thm:main} the following inequalities hold,
	\begin{equation}\label{eq:xtwo}
	\log \perm(\va) \leq O(k\log \frac{N}{k})+\frst(\va,\qv)-N~,
	\end{equation}
	\begin{equation}\label{eq:xth}
	\log \perm(\ua) \leq O(k\log \frac{N}{k})+\frst(\ua,\qu)-N~,
	\end{equation}
	where recall $\frst(\va,\qv)=\sum_{x,y \in \bX \times \bX}\qv_{x,y} \log \frac{\va_{x,y}}{\qv_{x,y}}$ and $\frst(\ua,\qu)=\sum_{x,y \in \bX \times \bX}\qu_{x,y} \log \frac{\ua_{x,y}}{\qu_{x,y}}$. Without loss of generality by the symmetry (with respect to columns within $S_{j}$) and concavity of the scaled Sinkhorn objective, we can assume that the maximizing matrices $\qv$ and $\qu$ satisfy the following: for all $x\in \bX$ and $j \in [k]$,
	\begin{equation}\label{eq:symmetry}
	\qv_{x,y}=\qv_{x,y'} \text{ and } \qu_{x,y}=\qu_{x,y'} \text{ for all }y,y' \in S_{j} \text{ and }x\in \bX~.
	\end{equation}
	Note that the doubly stochastic matrix that we constructed for the proof of \Cref{thm:main} also satisfies the above collection of equalities. Now combining \Cref{eq:xone,eq:xtwo,eq:xth}  we get,
	\begin{equation}\label{eq:task}
	\begin{split}
	\log \perm(\ma) & \leq  {O(k\log \frac{N}{k})}  - \log \prod_{j \in [k]} \alpha_{j}! + \frst(\va,\qv)-N+ \frst(\ua,\qu)-N~,\\
	& \leq {O(k\log \frac{N}{k})} - \sum_{j \in [k]}\left(\alpha_{j}\log\alpha_{j} - \alpha_{j}  \right)+ \frst(\va,\qv)-N+ \frst(\ua,\qu)-N~,\\
	& \leq {O(k\log \frac{N}{k})} - \sum_{j \in [k]}\alpha_{j}\log\alpha_{j} + \frst(\va,\qv)+ \frst(\ua,\qu)-N~.
	\end{split}
	\end{equation}
	In the second inequality we use the Stirling's approximation (\Cref{lem:ster}) and the error term due to this approximation is upper bounded by $O(k\log \frac{N}{k})$. In the third inequality we used $\sum_{j \in [k]}\alpha_{j}=N$.
	
	Let $\bQ=\qv \qu^{\top}$, then by \Cref{lem:stoc} the matrix $\bQ$ is doubly stochastic. In the remainder of the proof we show that,
	\begin{equation}\label{eq:mainhere}
	- \sum_{j \in [k]}\alpha_{j}\log\alpha_{j} + \frst(\va,\qv)+ \frst(\ua,\qu) \leq \frst(\ma,\bQ)~,
	\end{equation}
	where recall $\frst(\ma,\bQ)=\sum_{x,y \in \bX \times \bX} \bQ_{x,y} \log \frac{\ma_{x,y}}{\bQ_{x,y}}$. As matrix $\bQ$ is doubly stochastic, the above inequality combined with \Cref{eq:task} concludes the proof. Therefore in the remainder we focus our attention to prove \Cref{eq:mainhere} and we start by simplifying the above expression. Define,
	\begin{equation}\label{eq:betadef}
	\bjxy\defeq \frac{1}{\bQ_{x,y}}\sum_{z \in S_j} \qv_{x,z} \qu_{y,z} \quad \text{ for all }x\in \bX, y\in \bX \text{ and for all }j\in [k]~.
	\end{equation}
	For all $x\in \bX$ and $y\in \bX$ the variables defined above satisfy the following,
	\begin{equation}\label{eq:betastoc}
	\sum_{j \in [k]}\bjxy=\frac{1}{\bQ_{x,y}}\sum_{j \in [k]}\sum_{z \in S_j} \qv_{x,z} \qu_{y,z}=\frac{1}{\bQ_{x,y}}\sum_{z \in \bX} \qv_{x,z} \qu_{y,z}=\frac{1}{\bQ_{x,y}}\bQ_{x,y}=1~,
	\end{equation}
	where in the third inequality we used the definition of $\bQ=\qv \qu^{\top}$. We next simplify and lower bound the term $\frst(\ma,\bQ)$ in terms of these newly defined variables. 
	\begin{align}
	\log \ma_{x,y}=\log \sum_{j\in [k]}\vj(x)\uj(y)\geq \log \prod_{j \in [k]} \left(\frac{\vj(x)\uj(y)}{\bjxy}\right)^{\bjxy}= \sum_{j \in [k]}{\bjxy} \log \left(\frac{\vj(x)\uj(y)}{\bjxy}\right)~,
	\end{align}
	where in the first equality we used $\ma = \sum_{j \in [k]} \vj \uj^{\top}$. In the second inequality we used weighted AM-GM inequality. Now consider the term $\bQ_{x,y}\log \ma_{x,y}$ and substitute the above lower bound,
	\begin{align}
	\bQ_{x,y}\log \ma_{x,y}&\geq \bQ_{x,y} \sum_{j \in [k]}{\bjxy} (\log \vj(x) + \log \uj(y)) - \bQ_{x,y}\sum_{j \in [k]}{\bjxy}\log {\bjxy}~.
	\end{align}
	Summing over all the $(x,y)$ pairs we get,
	\begin{equation}\label{eq:kkmain}
	\begin{split}
	\sum_{x,y \in \bX \times \bX}\bQ_{x,y}\log \ma_{x,y}& \geq\sum_{x \in \bX}\sum_{j \in [k]} \log \vj(x) \big(\sum_{y \in \bX} \bQ_{x,y} \bjxy \big) + \sum_{y \in \bX}\sum_{j \in [k]} \log \vj(y) \big(\sum_{x \in \bX} \bQ_{x,y} \bjxy \big)~,\\
	& \quad - \sum_{x,y \in \bX \times \bX} \bQ_{x,y}\sum_{j \in [k]}{\bjxy}\log {\bjxy}~.
		\end{split}
	\end{equation}	
	In the above expression the following terms simplify, 
	\begin{equation}\label{eq:kkone}
	\sum_{y \in \bX} \bQ_{x,y} \bjxy= \sum_{y \in \bX} \bQ_{x,y} \frac{1}{\bQ_{x,y}}\sum_{z \in S_j} \qv_{x,z} \qu_{y,z}=\sum_{z \in S_j} \qv_{x,z} \sum_{y \in \bX}  \qu_{y,z}=\sum_{z \in S_j} \qv_{x,z}~.
	\end{equation}
	Similarly,
	\begin{equation}\label{eq:kktwo}
	\sum_{x \in \bX} \bQ_{x,y} \bjxy= \sum_{z \in S_j} \qu_{y,z}~.
	\end{equation}
	Also note that,
	\begin{equation}\label{eq:kkthree}
	\begin{split}
	\sum_{x,y \in \bX \times \bX} \bQ_{x,y}\sum_{j \in [k]}{\bjxy}\log {\bjxy} & = \sum_{x,y \in \bX \times \bX} \bQ_{x,y}\sum_{j \in [k]}{\bjxy}\log \frac{\bjxy\bQ_{x,y}}{\bQ_{x,y}},\\
	& =\sum_{x,y \in \bX \times \bX} \bQ_{x,y}\sum_{j \in [k]}{\bjxy}\log (\bjxy\bQ_{x,y}) - \sum_{x,y \in \bX \times \bX} \bQ_{x,y}\sum_{j \in [k]}{\bjxy} \log {\bQ_{x,y}},\\
	& =\sum_{x,y \in \bX \times \bX} \sum_{j \in [k]}{\bjxy}\bQ_{x,y}\log (\bjxy\bQ_{x,y}) - \sum_{x,y \in \bX \times \bX} \bQ_{x,y} \log {\bQ_{x,y}}~,\\
	& =\sum_{x,y \in \bX \times \bX} \sum_{j \in [k]}\big(\sum_{z \in S_j} \qv_{x,z} \qu_{y,z} \big) \log \big( \sum_{z \in S_j} \qv_{x,z} \qu_{y,z} \big) - \sum_{x,y \in \bX \times \bX} \bQ_{x,y} \log {\bQ_{x,y}}~.\\
	\end{split}
	\end{equation}
	In the third and fourth inequality we used \Cref{eq:betastoc} and \Cref{eq:betadef} respectively. Substituting \Cref{eq:kkone,eq:kktwo,eq:kkthree} in \Cref{eq:kkmain} we get,
	\begin{equation}
	\begin{split}
	\sum_{x,y \in \bX \times \bX}\bQ_{x,y}\log \ma_{x,y}& \geq\sum_{x \in \bX}\sum_{j \in [k]} \log \vj(x) \big(\sum_{z \in S_j} \qv_{x,z}\big) + \sum_{y \in \bX}\sum_{j \in [k]} \log \vj(y) \big(\sum_{z \in S_j} \qu_{y,z} \big)\\
	& \quad - \sum_{x,y \in \bX \times \bX} \sum_{j \in [k]}\big(\sum_{z \in S_j} \qv_{x,z} \qu_{y,z} \big) \log \big( \sum_{z \in S_j} \qv_{x,z} \qu_{y,z} \big) + \sum_{x,y \in \bX \times \bX} \bQ_{x,y} \log {\bQ_{x,y}}~.
	\end{split}
	\end{equation}	
	By rearranging terms the above expression can be equivalently written as,
	\begin{equation}\label{eq:inter}
	\begin{split}
	\frst(\ma,\bQ)=\sum_{x,y \in \bX \times \bX}\bQ_{x,y}\log \frac{\ma_{x,y}}{\bQ_{x,y}}& \geq\sum_{x \in \bX}\sum_{j \in [k]} \log \vj(x) \big(\sum_{z \in S_j} \qv_{x,z}\big) + \sum_{y \in \bX}\sum_{j \in [k]} \log \uj(y) \big(\sum_{z \in S_j} \qu_{y,z} \big)\\
	& \quad - \sum_{x,y \in \bX \times \bX}\sum_{j \in [k]} \big(\sum_{z \in S_j} \qv_{x,z} \qu_{y,z} \big) \log \big( \sum_{z \in S_j} \qv_{x,z} \qu_{y,z} \big)~.
	\end{split}
	\end{equation}	
	In the above expression we have a lower bound for the term $\frst(\ma,\bQ)$ and we relate it to terms $\frst(\va,\qv)$ and $\frst(\ua,\qu)$. Consider the following term,
	\begin{equation}\label{eq:ione}
	\begin{split}	 
	\sum_{x,y \in \bX \times \bX}\qv_{x,y} \log \va_{x,y} &=\sum_{x \in \bX} \sum_{j \in [k]} \sum_{y \in S_{j}} \qv_{x,y}\log \va_{x,y} = \sum_{x \in \bX} \sum_{j \in [k]} \sum_{y \in S_{j}} \qv_{x,y}\log \vj(x)~,\\
	& = \sum_{x \in \bX} \sum_{j \in [k]} \log \vj(x) \big(\sum_{y \in S_{j}} \qv_{x,y}\big)=\sum_{x \in \bX}\sum_{j \in [k]} \log \vj(x) \big(\sum_{z \in S_j} \qv_{x,z}\big)~,\\
	\end{split}
	\end{equation}
	In the final equality we renamed the variables and the rest of equalities are straightforward. Carrying out similar derivation we also get,
	\begin{equation}\label{eq:itwo}
	\sum_{x,y \in \bX \times \bX}\qu_{x,y} \log \ua_{x,y} = \sum_{x \in \bX} \sum_{j \in [k]} \log \uj(x) \big(\sum_{y \in S_{j}} \qu_{x,y}\big)=\sum_{y \in \bX}\sum_{j \in [k]} \log \uj(y) \big(\sum_{z \in S_j} \qu_{y,z} \big)~.
	\end{equation}
	As before in the final equality we renamed variables. Substituting \Cref{eq:ione,eq:itwo} in \Cref{eq:inter} we get,
	\begin{equation}\label{eq:internew}
	\begin{split}
	\frst(\ma,\bQ) & \geq \sum_{x,y \in \bX \times \bX}\qv_{x,y} \log \va_{x,y} + \sum_{x,y \in \bX \times \bX}\qu_{x,y} \log \ua_{x,y} 
	- \sum_{x,y \in \bX \times \bX} \big(\sum_{z \in S_j} \qv_{x,z} \qu_{y,z} \big) \log \big( \sum_{z \in S_j} \qv_{x,z} \qu_{y,z} \big)\\
	& = \frst(\va,\qv) + \frst(\ua,\qu) +\sum_{x,y \in \bX \times \bX} \qv_{x,y} \log \qv_{x,y}  + \sum_{x,y \in \bX \times \bX} \qu_{x,y} \log \qu_{x,y}\\ 
	& \quad - \sum_{x,y \in \bX \times \bX} \sum_{j \in [k]}\big(\sum_{z \in S_j} \qv_{x,z} \qu_{y,z} \big) \log \big( \sum_{z \in S_j} \qv_{x,z} \qu_{y,z} \big).
	\end{split}
	\end{equation}	
	\newcommand{\rv}{\textbf{R}'}
	\newcommand{\ru}{\textbf{R}''}
	Therefore to prove \Cref{eq:mainhere}, all that remains is to show that,
	\begin{equation}\label{eq:alphamain}
		\sum_{x,y \in \bX \times \bX} \big(\qv_{x,y} \log \qv_{x,y}  + \qu_{x,y} \log \qu_{x,y}\big)  - \sum_{x,y \in \bX \times \bX} \big(\sum_{z \in S_j} \qv_{x,z} \qu_{y,z} \big) \log \big( \sum_{z \in S_j} \qv_{x,z} \qu_{y,z} \big) \geq - \sum_{j \in [k]}\alpha_{j}\log\alpha_{j}~.
	\end{equation}
	To prove the above inequality we use the symmetry in the solutions $\qv$ and $\qu$. Recall from \Cref{eq:symmetry}, for all $x\in \bX$ and $j \in [k]$ we have $\qv_{x,y}=\qv_{x,y'} \text{ and } \qu_{x,y}=\qu_{x,y'} \text{ for all }y,y' \in S_{j} \text{ and }x\in \bX$. Define $\rv_{x,j}=\qv_{x,y}$ and $\ru_{x,j}=\qu_{x,y}$ for any $y \in S_j$. We next substitute these definitions and simplify terms on the left hand side of \Cref{eq:alphamain},
	\begin{equation}\label{eq:hhone}
	\begin{split}
	\sum_{x,y \in \bX \times \bX} \qv_{x,y} \log \qv_{x,y}& =\sum_{x \in \bX}\sum_{j \in [k]}\sum_{y \in S_j} \qv_{x,y} \log \qv_{x,y}=\sum_{x \in \bX}\sum_{j \in [k]}\sum_{y \in S_j} \rv_{x,j} \log \rv_{x,j},\\
	&=\sum_{x \in \bX}\sum_{j \in [k]}\alpha_j \rv_{x,j} \log \rv_{x,j}~.
	\end{split}
	\end{equation}
	In the final equality we used $|S_j|=\alpha_j$ and the rest of the equalities are straightforward. Similar argument as above also gets us,
	\begin{equation}\label{eq:hhtwo}
	\begin{split}
	\sum_{x,y \in \bX \times \bX} \qu_{x,y} \log \qu_{x,y}& =\sum_{x \in \bX}\sum_{j \in [k]}\alpha_j \ru_{x,j} \log \ru_{x,j}=\sum_{y \in \bX}\sum_{j \in [k]}\alpha_j \ru_{y,j} \log \ru_{y,j}~.
	\end{split}
	\end{equation}
	Note in the final equality we renamed variables. Finally, 
	\begin{equation}\label{eq:hhthree}
	\begin{split}
	\sum_{x,y \in \bX \times \bX} \sum_{j \in [k]}\big(\sum_{z \in S_j} \qv_{x,z} \qu_{y,z} \big)&  \log \big( \sum_{z \in S_j} \qv_{x,z} \qu_{y,z} \big) = \sum_{x,y \in \bX \times \bX} \sum_{j \in [k]} \alpha_j \rv_{x,j} \ru_{y,j}\log \alpha_j \rv_{x,j} \ru_{y,j}~,\\
	& = \sum_{x,y \in \bX \times \bX} \sum_{j \in [k]}\alpha_j \rv_{x,j} \ru_{y,j} \big(\log \alpha_j + \log \rv_{x,j}+\log  \ru_{y,j} \big)~,\\
	\end{split}
	\end{equation}
	Again each of the terms in the parenthesis further simplify as follows,
	\begin{align*}
	\sum_{x,y \in \bX \times \bX} \sum_{j \in [k]}\alpha_j \rv_{x,j} \ru_{y,j} \log \alpha_j&=\sum_{j \in [k]}\alpha_j  \log \alpha_j \sum_{x,y \in \bX \times \bX}\rv_{x,j} \ru_{y,j}=\sum_{j \in [k]}\alpha_j  \log \alpha_j \sum_{x\in \bX} \rv_{x,j}\sum_{y\in \bX} \ru_{y,j},\\
	&=\sum_{j \in [k]}\alpha_j  \log \alpha_j~.
	\end{align*}
	\begin{align*}
	\sum_{x,y \in \bX \times \bX} \sum_{j \in [k]}\alpha_j \rv_{x,j} \ru_{y,j} \log \rv_{x,j}=  \sum_{x \in \bX} \sum_{j \in [k]}\alpha_j \rv_{x,j} \log \rv_{x,j} \sum_{y \in \bX}\ru_{y,j} =\sum_{x \in \bX} \sum_{j \in [k]}\alpha_j \rv_{x,j} \log \rv_{x,j}~.
	\end{align*}
	Similarly,
	\begin{align*}
	\sum_{x,y \in \bX \times \bX} \sum_{j \in [k]}\alpha_j \rv_{x,j} \ru_{y,j} \log \ru_{y,j}=\sum_{y \in \bX} \sum_{j \in [k]}\alpha_j \ru_{y,j} \log \ru_{y,j}~.
	\end{align*}
	Substituting back all the above three expressions in \Cref{eq:hhthree} we get,
	\begin{equation}\label{eq:hhthreenew}
	\begin{split}
	\sum_{x,y \in \bX \times \bX} \sum_{j \in [k]}\big(\sum_{z \in S_j} \qv_{x,z} \qu_{y,z} \big)  \log \big( \sum_{z \in S_j} \qv_{x,z} \qu_{y,z} \big)& =\sum_{x \in \bX} \sum_{j \in [k]}\alpha_j \rv_{x,j} \log \rv_{x,j}+ \sum_{y \in \bX} \sum_{j \in [k]}\alpha_j \ru_{y,j} \log \ru_{y,j}\\
	& \quad + \sum_{j \in [k]}\alpha_j  \log \alpha_j~.
		\end{split}	
	\end{equation}
	Further substituting \Cref{eq:hhone,eq:hhtwo,eq:hhthreenew} in the derivation below we get,
	$$\sum_{x,y \in \bX \times \bX} \big(\qv_{x,y} \log \qv_{x,y}  + \qu_{x,y} \log \qu_{x,y}\big)  - \sum_{x,y \in \bX \times \bX} \big(\sum_{z \in S_j} \qv_{x,z} \qu_{y,z} \big) \log \big( \sum_{z \in S_j} \qv_{x,z} \qu_{y,z} \big) = - \sum_{j \in [k]}\alpha_{j}\log\alpha_{j}~.$$
	Therefore the above derivation proves \Cref{eq:alphamain} and we further substitute it in \Cref{eq:internew} to get,
	\begin{equation}
	\begin{split}
	\frst(\ma,\bQ)  \geq \frst(\va,\qv) + \frst(\ua,\qu) - \sum_{j \in [k]}\alpha_{j}\log\alpha_{j}~.
	\end{split}
	\end{equation}
	The above expression combined with \Cref{eq:task} gives the following upper bound on the log of permanent,
	\begin{align}
	\log \perm(\ma) \leq {O(k\log \frac{N}{k})} + \frst(\ma,\bQ)-N~.
	\end{align}
	The above expression combined with definition of the scaled Sinkhorn permanent concludes the proof.
\end{proof}

\section{Lower bound for Bethe and scaled Sinkhorn permanent approximations}\label{sec:lbbethesink}
Here we provide the proof of \Cref{thm:lb} that is stated below for convenience.
\thmlb*
\begin{proof}
	\renewcommand{\mE}{\ma}
Assume $\Ns$ is divisible by $k$. Let $\mone$ and $\mzero$ be $\frac{\Ns}{k} \times \frac{\Ns}{k}$ all ones and all zeros matrices respectively. Note that $\bethe(\mone) \leq \frac{\Ns}{k}\log \frac{\Ns}{k}- \frac{\Ns}{k}+1$ and the proof for this statement follows because $\frac{k}{\Ns} \mone$ is the maximizer of the optimization problem $\max_{\bQ} \betheq(\mone,\bQ)$ over all doubly stochastic matrices $\bQ$. On the other hand $\log \perm (\mone) = \log \frac{\Ns}{k}! \geq \frac{\Ns}{k} \log \frac{\Ns}{k} - \frac{\Ns}{k} + \Omega (\log \frac{\Ns}{k})$, where in the last inequality we used the Stirling's approximation. Now consider the following matrix,
$$\mE \defeq 
\begin{bmatrix} 
\mone & \mzero &\dots \mzero \\
\mzero & \mone & \dots \mzero \\
\vdots & \dots & \ddots \\
\mzero & \mzero &\dots \mone \\
\end{bmatrix}$$
In the above definition, $\mE$ is a $\Ns \times \Ns$ matrix, with $k\times k$ blocks. For the matrix $\mE$ we have, $\log \perm(\mE)=k \cdot \log \perm(\mone) \geq k  \left(\frac{\Ns}{k} \log \frac{\Ns}{k} - \frac{\Ns}{k} + \Omega (\log \frac{\Ns}{k})\right)$ and $\bethe(\mE) =k \cdot \bethe (\mone) \leq k \left( \frac{\Ns}{k}\log \frac{\Ns}{k}- \frac{\Ns}{k}+1 \right)$. Therefore $\log \perm(\mE)-\bethe(\mE) \geq \Omega(k \log \frac{\Ns}{k})$.

The proof for the case when $\Ns$ is not divisible by $k$ is similar. Here matrix $\mE$ is the $\Ns \times \Ns$ block diagonal matrix where the first $k$ blocks correspond to $\floor{\frac{\Ns}{k}} \times \floor{\frac{\Ns}{k}}$ all ones matrix and the final block corresponds to $r \times r$ all ones matrix, where $r\defeq \Ns- k \floor{\frac{\Ns}{k}}$. For this definition of matrix $\mE$ we have, $\log \perm(\mE)=k \cdot \log \floor{\frac{\Ns}{k}}! + \log r!  \geq k  \left(\floor{\frac{\Ns}{k}} \log \floor{\frac{\Ns}{k}} - \floor{\frac{\Ns}{k}} + \Omega (\log \frac{\Ns}{k})\right)+r \log r -r +\Omega(\log r)$ and $\bethe(\mE) =k \cdot \bethe (\mone) \leq k \left(\floor{\frac{\Ns}{k}} \log \floor{\frac{\Ns}{k}} - \floor{\frac{\Ns}{k}} +1 \right)+r \log r -r +1$. Therefore $\log \perm(\mE)-\bethe(\mE) \geq \Omega(k \log \frac{\Ns}{k})$. The first condition of the theorem follows by taking exponentials on both sides of the previous inequality. 

The second inequality in the theorem follows by using $\bethe(\ma) \geq \ssink(\ma)$ (See \Cref{cor:ssinklb}). As the matrix $\mE$ constructed here is of non-negative rank $k$,  we conclude the proof.
\renewcommand{\mE}{\textbf{E}}
\end{proof}
\section{Improved approximation to profile maximum likelihood}\label{sec:pml}
In this section, we provide an efficient algorithm to compute an $\expo{-O(\sqrt{n}\log n)}$-approximate PML distribution. We first introduce the setup and some new notation. For convenience, we also recall some definitions from \Cref{sec:prelims}.

We are given access to $n$ independent samples from a hidden distribution $\bp \in \simplex$ supported on domain $\bX$. Let $x^n$ be this length $n$ sequence and $\phi=\Phi(x^n)$ be its corresponding profile. Let $\bff(x^n,y)$ be the frequency for domain element $y\in \bX$ in sequence $x^n$. Let $k$ be the number of non-zero distinct frequencies and we use 
$\{\mo,\dots \mj,\dots \mk \}$ 
to denote these distinct frequencies. Note that the number of non-zero distinct frequencies $k$ is always upper bounded by $\sqrt{n}$. For $j\in [1,k]$, we define $\phi_{j}\defeq |\{y\in \bX~|~ \bff(x^n,y)=\mj \} |$. Let $\pml$ be the PML distribution with respect to profile $\phi$ and is formally defined as follows,
$$\pml \in  \argmin_{\bp \in \simplex}\probpml(\bp,\phi)~.$$
Recall the definition of profile probability matrix $\maqphi$ with respect to profile $\phi$ and distribution $\bp$,
\begin{equation}\label{eq:matrixqphi}
\mapphi_{x,y}\defeq \bp_{x}^{\my} \text{ for all } x,y \in \bX,
\end{equation}
where $\my\defeq \bff(x^n,y)$ is the frequency of domain element $y\in \bX$ in the observed sequence $x^n$ and recall $\Phi(x^n)=\phi$. Note that the number of distinct columns is equal to number of distinct observed frequencies plus one (for the unseen) and therefore it is $k+1$.

From \Cref{eqpml2}, {the} probability of profile $\phi$ with respect to distribution $\bp$ is,
\begin{equation}\label{eqpml2new}
\probpml(\bp,\phi)=\cphi \cdot \left(\prod_{j\in [0,k]}\frac{1}{\phi_{j}!}\right) \cdot \perm(\mapphi)~,
\enspace \text{ where } \enspace
\cphi= \frac{n!}{\prod_{j\in [1,k]}(\mj!)^{\phi_{j}}}~.
\end{equation}
$\phi_{0}$ here denotes the number of unseen domain elements and note that it is not part of the profile. Given a distribution $\bp$ we know its domain $\bX$ therefore the unseen domain elements. Also, note that $\cphi$ is independent of the term $\phi_{0}$, therefore it depends just on the profile $\phi$ and not on the underlying distribution $\bp$. 

We now provide the motivation behind the techniques used in this section. Recall that the goal of this section is to compute an approximate PML distribution and we wish to do this using the results from the previous section. A first attempt would be to use the scaled Sinkhorn (or the Bethe) permanent as a proxy for the term $\perm(\mapphi)$ in \Cref{eqpml2new} and solve the following optimization problem:
$$\max_{\bp \in \simplex} \cphi \cdot \left(\prod_{j\in [0,k]}\frac{1}{\phi_{j}!}\right) \cdot \ssink(\mapphi)~.$$

The above optimization problem is indeed a good proxy for the PML objective and recall the above optimization problem is equivalent to the following:
$$\max_{\bp \in \simplex} \cphi \cdot \left(\prod_{j\in [0,k]}\frac{1}{\phi_{j}!}\right) \cdot \max_{\bQ \in \zrc} \expo{\frst(\mapphi,\bQ)}~.$$
Taking log and ignoring the constants we get the following equivalent optimization problem,
$$\max_{\bp \in \simplex} \max_{\bQ \in \zrc} \left(\log \frac{1}{\phi_{0}!} + \frst(\mapphi,\bQ) \right)$$
Interestingly, the function $\frst(\mapphi,\bQ)$, is concave with respect to $\bp$ for a fixed $\bQ$ and concave with respect to $\bQ$ for a fixed $\bp$ (See \cite{Von14}). However, unfortunately the function $\frst(\mapphi,\bQ)$ in general is not a concave function with respect to $\bp$ and $\bQ$ simultaneously~\cite{Von14} and we do not know how to solve the above optimization problem. Vontobel~\cite{Von14} proposed an alternating maximization algorithm to solve the above optimization problem, and studied its implementation and convergence to a stationary point; see \cite{Von14} for empirical performance of this approach. Using the Bethe permanent as a proxy in the above optimization problem has similar issues; see~\cite{Von12,Von14} for further details.

To address the above issue we use the idea of probability discretization from \cite{CSS19}, meaning we assume distribution takes all its probability values from some fixed predefined set. We use this idea in a different way than \cite{CSS19} and further exploit the structure of optimal solution $\bQ$ to write a convex optimization problem that approximates the PML objective. The solution of this convex optimization problem returns a fractional representation of the distribution that we later round to return the approximate PML distribution with desired guarantees. Surprisingly, the final convex optimization problem we write is exactly same as the one in \cite{CSS19} and our work gives a better analysis of the same convex program by a completely different approach.

The rest of this section is organized as follows. In \Cref{sec:prob_discrete}, we study the probability discretization. In the same section, we also study the application of results from \Cref{sec:bethe} for approximating the permanent of profile probability matrix ($\mapphi$). We further provide the convex optimization problem at the end of this section that can be solved efficiently and returns a fractional representation of the approximate PML distribution. In \Cref{subsec:algmain}, we provide the rounding algorithm that returns our final approximate PML distribution. 
Till this point, all our results are independent of the choice of the probability discretization set. Later in \Cref{subsec:combine}, we choose an appropriate probability discretization set and further combine analysis from all the previous sections. In this section, we state and analyze our final algorithm that returns a $\expo{-O(\sqrt{n}\log n)}$-approximate PML distribution. Note that our rounding algorithm is technical and for the continuity of reading we defer all the proofs for results in \Cref{subsec:algmain} to \Cref{subsec:missing}.
\subsection{Probability discretization}\label{sec:prob_discrete}
\renewcommand{\boo}{\ell}
\renewcommand{\epso}{\epsilon}
Here we study the idea of probability discretization that is also used in \cite{CSS19}. We combine this with other ideas from \Cref{sec:bethe} to provide a convex program that approximates the PML objective.

Let $\bR \subseteq [0,1]_{\R}$ be some discretization of the probability space and in this section we consider distributions that take all its probability values in set $\bR$. All results in this section hold for finite set $\bR$ and we specify the exact definition of $\bR$ in \Cref{subsec:combine}.

The discretization introduces a technicality of probability values not summing up to one and we redefine pseudo-distribution and discrete pseudo-distribution from \cite{CSS19} to deal with these.

\begin{defn}[Pseudo-distribution] $\bq \in [0,1]^{\bX}_{\R}$ is a \emph{pseudo-distribution} if $\|\bq\|_1 \leq 1$ and a \emph{discrete pseudo-distribution} with respect to $\bR$ if all its entries are in $\bR$ as well. We use $\psimplex$ and $\dsimplex$ to denote the set of all such pseudo-distributions and discrete pseudo-distributions with respect to $\bR$ respectively.
\end{defn}

We extend and use the following definition for $\bbP(\vvec,y^n)$ for any vector $\vvec \in \R_{\geq 0}^{\bX}$ and therefore for pseudo-distributions as well,
$$\bbP(\vvec,y^n) \defeq \prod_{x \in \bX}\vvec_x^{\bff(y^n,x)}~.$$
Further, for any probability terms defined involving $\bp$, we define those terms for any vector $\vvec \in \R_{\geq 0}^{\bX}$ just by replacing $\bp_{x}$ by $\vvec_{x}$ everywhere. For convenience we refer to $\probpml(\bq,\phi)$ for any pseudo-distribution $\bq$ as the ``probability'' of profile $\phi$ with respect to $\bq$.

For a scalar $c$ and set $\bS$, define $\floor{c}_{\bS}$ and $\ceil{c}_{\bS}$ as follows:\\
$$ \floor{c}_{\bS}\defeq\sup_{s \in \bS: s \leq c}s \quad \text{ and } \quad \ceil{c}_{\bS}\defeq\inf_{s \in \bS: s \geq c}s$$
\begin{defn}[Discrete pseudo-distribution] For any distribution $\bp \in \simplex$, its \emph{discrete pseudo-distribution} $\bq=\disc(\bp) \in \dsimplex$ with respect to $\bR$ is defined as:
	$$\bq_x\defeq\floor{\bp_x}_{\bR} \quad \forall x \in \bX$$
\end{defn}

\renewcommand{\bp}{\textbf{q}}

We now define some additional definitions and notation that will help us lower and upper bound the permanent of profile probability matrix by a convex optimization problem.
\begin{itemize}
	\item  Let $\ell\defeq |\bR|$ be the cardinality of set $\bR$ and $\ri$ be the $i$'th element of set $\bR$.
	\item For any discrete pseudo-distribution $\bp$ with respect to $\bR$, that is $\bq \in \dsimplex$, we let $\lpi \defeq |\{y\in \bX~|~\bp_y=\ri \}|$, be the number of domain elements with probability $\ri$.
	\item Let $\za \subseteq \zrm$ be the set of non-negative matrices such that, for any $\bS \in \za$ the following holds:
\begin{align}
\sum_{j \in [0,k]}\Sij=\lpi \text{ for all } i \in [1,\ell] \quad  \text{and} \quad \sum_{i \in [1,\ell]}\Sij=\phi_{j} \text{ for all }j \in [0,k]~,
\end{align}
where $\phi_{0}$\footnote{$\phi_{0}$ is not part of the profile and is not given to us. Later in this section, we get rid of this dependency on $\phi_{0}$.} is the number of unseen domain elements and we use $\mz \defeq 0$ to denote the corresponding frequency element.
	\item For any $\bS \in \R_{\geq0}^{\ell \times (k+1)}$ define, 
\begin{equation}
\fng(\bS)=\sum_{i \in [1,\ell]} \sum_{j \in [0,k]}\left[\Sij\log (\frac{\ri^{\mj}}{\Sij})\right]+\sum_{i \in [1,\ell]}\left(\sum_{j \in [0,k]}\Sij \right)\log \left(\sum_{j \in [0,k]}\Sij \right)+\sum_{j \in [0,k]}\phi_{j}\log \phi_{j}-\sum_{j \in [0,k]}\phi_{j}~.
\end{equation}
\item Throughout this section, for convenience unless stated otherwise we abuse notation and use $\ma$ to denote the matrix $\maqphi$. The underlying pseudo-distribution $\bq$ and profile $\phi$ with respect to matrix $\ma$ will be clear from the context.
\end{itemize}
The first half of this section is dedicated to bound the $\perm(\ma)$ in terms of function $\fng(\bS)$. For any fixed discrete pseudo-distribution $\bq$ and profile $\phi$, we will show that,
$$\max_{\bS \in \za}\fng(\bS) \leq \log \perm(\maqphi) \leq O(k\log \frac{\Ns}{k}) + \max_{\bS \in \za}\fng(\bS)~.$$
Later in the second half, we use the above inequality to maximize over all the discrete pseudo-distributions to find the approximate PML distribution and the summary of which is stated later. We start by showing the lower bound first and later in \Cref{thm:newupperbound} we prove the upper bound. 
\begin{theorem}\label{thm:newlowerbound}
	For any discrete pseudo-distribution $\bq$ with respect to $\bR$ and profile $\phi$, let $\ma$ be the matrix defined (with respect to $\bq$ and $\phi$) in \Cref{eq:matrixqphi}, then the following holds,
	\begin{equation}
	\log \perm(\ma) \geq \max_{\bS \in \za}\fng(\bS)~.
	\end{equation}
\end{theorem}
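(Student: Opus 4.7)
The plan is to apply the scaled Sinkhorn lower bound on the permanent, namely $\perm(\ma) \geq \ssink(\ma)$ from \Cref{cor:ssinklb}, and for every $\bS \in \za$ construct an explicit doubly stochastic witness $\bQ(\bS) \in \dstoc$ which certifies $\frst(\ma, \bQ(\bS)) - N \geq \fng(\bS)$. Since the scaled Sinkhorn is the maximum of $\exp(\frst(\ma, \bQ) - N)$ over $\bQ \in \dstoc$, this will immediately yield $\log \perm(\ma) \geq \log \ssink(\ma) \geq \fng(\bS)$, and taking the supremum over $\bS \in \za$ gives the claim.

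The construction of $\bQ = \bQ(\bS)$ is the following. Group rows of $\ma$ by probability value and columns by frequency: if $\bq_x = \ri$ and $\mj = \bff(x^n, y)$, set
\[
\bQ_{x,y} \defeq \frac{\Sij}{\lpi \cdot \phi_j}.
\]
I would first check that $\bQ \in \dstoc$: the row sum at $x$ with $\bq_x = \ri$ is $\sum_j \phi_j \cdot \Sij/(\lpi \phi_j) = \sum_j \Sij / \lpi = 1$ using the row constraint $\sum_j \Sij = \lpi$, and similarly the column sum at $y$ with $\mj = \m_y$ is $\sum_i \lpi \cdot \Sij/(\lpi \phi_j) = \sum_i \Sij / \phi_j = 1$ using the column constraint $\sum_i \Sij = \phi_j$. (Degenerate cases $\lpi = 0$ or $\phi_j = 0$ contribute no rows/columns and the standard convention $0 \log 0 = 0$ handles the corresponding zero entries of $\ma$ that arise when $\ri = 0$.)

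The main calculation is then to evaluate $\frst(\ma, \bQ)$. Since $\ma_{x,y} = \ri^{\mj}$ and $\bQ_{x,y} = \Sij/(\lpi \phi_j)$ are constant on each $(i,j)$-block of size $\lpi \times \phi_j$, summing block-by-block gives
\[
\frst(\ma, \bQ) = \sum_{i,j} \Sij \log \frac{\ri^{\mj} \lpi \phi_j}{\Sij} = \sum_{i,j} \Sij \log \frac{\ri^{\mj}}{\Sij} + \sum_{i,j} \Sij \log \lpi + \sum_{i,j} \Sij \log \phi_j.
\]
Using the row constraint $\sum_j \Sij = \lpi$ on the second term yields $\sum_i \lpi \log \lpi = \sum_i (\sum_j \Sij) \log(\sum_j \Sij)$, and the column constraint $\sum_i \Sij = \phi_j$ on the third yields $\sum_j \phi_j \log \phi_j$. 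Subtracting $N = \sum_{j \in [0,k]} \phi_j$ recovers exactly the definition of $\fng(\bS)$, establishing $\frst(\ma, \bQ) - N = \fng(\bS)$.

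There is no serious obstacle here; the work is entirely the identification of the right witness. The one subtlety worth flagging is that the row sum counting must include the unseen column $j = 0$ (so that $N = \sum_{j \in [0,k]} \phi_j$ and not just $\sum_{j \in [1,k]} \phi_j$), which is why $\za$ must be defined using matrices indexed by $[0,k]$. Once the witness and this accounting are set, the theorem follows by combining the identity above with $\perm(\ma) \geq \ssink(\ma) \geq \exp(\fng(\bS))$ and maximizing over $\bS \in \za$.
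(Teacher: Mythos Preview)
Your proposal is correct and follows essentially the same approach as the paper: you construct the identical doubly stochastic witness $\bQ_{x,y} = \Sij/(\lpi \phi_j)$, verify double stochasticity via the row/column constraints of $\za$, and carry out the same block computation to obtain the exact identity $\frst(\ma,\bQ) - N = \fng(\bS)$, then invoke \Cref{cor:ssinklb}. The paper's proof is the same argument with the same bookkeeping.
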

\begin{proof}
	For any matrix $\bS \in \za$, define matrix $\bQ \in \R^{\bX \times \bX}$ as follows,
	$$\bQ_{x,y}\defeq \frac{\Sij}{\lpi\phi_{j}}$$
	where in the definition above $i$ and $j$ are such that $\bp_x=\ri$ and $\my=\mj$. We now establish that matrix $\bQ$ is doubly stochastic. For each $x \in \bX$, let $i$ be such that $\bp_x=\ri$, then
	\begin{equation}
	\begin{split}
	\sum_{y \in \bX}\bQ_{x,y}&= \sum_{j \in [0,k]} \sum_{\{y \in \bX~|~\my=\mj\}}\frac{\Sij}{\lpi \phi_{j}}=\sum_{j \in [0,k]}\frac{\Sij}{\lpi\phi_{j}} \sum_{\{y \in \bX~|~\my=\mj\}}1\\
	&=\sum_{j \in [0,k]} \frac{\bS_{x,\mj}}{\lpi \phi_{j}} \cdot \phi_{j}=\frac{1}{\lpi}\sum_{j \in [0,k]} \bS_{x,\mj}=1~.\\
	\end{split}
	\end{equation}
	For each $y\in \bX$, let $j$ be such that $\my=\mj$, then
	\begin{equation}
	\begin{split}
	\sum_{x \in \bX}\bQ_{x,y}&= \sum_{i\in [1,\ell]} \sum_{\{x \in \bX~|~\bp_x=\ri\}}\frac{\Sij}{\lpi \phi_{j}}=\sum_{i \in [1,\ell]}\frac{\Sij}{\lpi\phi_{j}} \sum_{\{x \in \bX~|~\bp_x=\ri\}}1\\
	&=\sum_{i \in [1,\ell]} \frac{\bS_{x,\mj}}{\lpi \phi_{j}} \cdot \lpi=\frac{1}{\phi_j}\sum_{i \in [1,\ell]} \bS_{x,\mj}=1~.\\
	\end{split}
	\end{equation}
	Since matrix $\bQ$ is doubly stochastic, by the definition of the scaled Sinkhorn permanent (See \Cref{def:ssink}) and \Cref{cor:ssinklb} we have $\log \perm(\ma) \geq \frst(\ma,\bQ)-\Ns$. To conclude the proof we show that $\frst(\ma,\bQ) -\Ns = \fng(\bS)$.
\begin{equation}\label{eq:onee}
\begin{split}
\frst(\ma,\bQ)&=\sum_{(x,y) \in \bX \times \bX} \bQ_{x,y}\log (\frac{{\ma}_{x,y}}{\bQ_{x,y}}) =\sum_{i \in [1,\ell]} \sum_{j \in [0,k]}  \lpi \phi_{j} \cdot \frac{\Sij}{\lpi   \phi_{j}}\log (\frac{\ri^{\mj} \lpi  \phi_{j}}{\Sij})\\
&= \sum_{i \in [1,\ell]} \sum_{j \in [0,k]}\Sij\log (\frac{\ri^{\mj} \lpi  \phi_{j}}{\Sij})~.
\end{split}
\end{equation}
We consider the final expression above and simplify it. First note that, 
$$\sum_{i \in [1,\ell]} \sum_{j \in [0,k]}\Sij\log \lpi= \sum_{i \in [1,\ell]}\log \lpi \sum_{j \in [0,k]} \Sij=\sum_{i \in [1,\ell]}\lpi \log \lpi~.$$ 
Similarly,
$$\sum_{i \in [1,\ell]} \sum_{j \in [0,k]}\Sij\log \phi_{j}=\sum_{j \in [0,k]}\log \phi_{j} \sum_{i \in [1,\ell]} \Sij=\sum_{j \in [0,k]}\phi_{j}\log \phi_{j}~.$$
Using the above two expressions, the final expression of \Cref{eq:onee} can be equivalently written as,
\begin{equation}\label{eq:twoo}
\sum_{i \in [1,\ell]} \sum_{j \in [0,k]}\Sij\log (\frac{\ri^{\mj} \lpi  \phi_{j}}{\Sij})= \sum_{i \in [1,\ell]} \sum_{j \in [0,k]}\left[\Sij\log (\frac{\ri^{\mj}}{\Sij})\right]+ \sum_{i \in [1,\ell]}\lpi \log \lpi+\sum_{j \in [0,k]}\phi_{j}\log \phi_{j}~.
\end{equation}
Combining \Cref{eq:onee}, \Cref{eq:twoo} and substituting $\Ns=\sum_{j \in [0,k]}\phi_{j}$, we get:
$$\frst(\ma,\bQ) -\Ns = \sum_{i \in [1,\ell]} \sum_{j \in [0,k]}\Sij\log (\frac{\ri^{\mj}}{\Sij})+ \sum_{i \in [1,\ell]}\lpi \log \lpi+\sum_{j \in [0,k]}\phi_{j}\log \phi_{j}-\sum_{j \in [0,k]}\phi_{j}=\fng(\bS)~.$$
In the above equality we used $\sum_{j \in [0,k]}\Sij=\levelq_{i}$ for all $i \in [1,\ell]$ and for any $\bS \in \za$. Combining the above inequality with $\log \perm(\ma) \geq \frst(\ma,\bQ) -\Ns$ we get,
$$\log \perm(\ma) \geq \fng(\bS)~.$$
The above inequality holds for any $\bS \in \za$ (and therefore holds for the maximizer as well) and we conclude the proof.
\end{proof}

We next give an upper bound for the log of permanent of $\ma$ in terms of $\fng(\bS)$. 
\begin{theorem}\label{thm:newupperbound}
		For any discrete pseudo-distribution $\bq$ with respect to $\bR$ and profile $\phi$, let $\ma$ be the matrix defined (with respect to $\bq$ and $\phi$) in \Cref{eq:matrixqphi}. Then,
	$$\log \perm(\ma)	\leq O(k \log \frac{\Ns}{k})+ \max_{\bS \in \za}\fng(\bS)~.$$
\end{theorem}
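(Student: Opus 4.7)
The plan is to apply the distinct-column result (\Cref{thm:main}, or rather the more refined \Cref{lem:upperm} and \Cref{thm:sameprob} from which it is built) to $\ma = \maqphi$, and then aggregate the resulting row-stochastic matrix $\bP$ into an element of $\za$. Observe that $\ma$ has at most $k+1$ distinct columns: two columns $\ma_{:y}, \ma_{:y'}$ coincide whenever $\mvec_y = \mvec_{y'}$, and the distinct frequencies (including the unseen value $\mz=0$) are $\mj$ for $j\in[0,k]$, with column multiplicities $\phi_j$. Hence \Cref{lem:upperm}, applied with $k+1$ distinct columns in place of $k$, produces a matrix $\bP \in \R_{\geq 0}^{\bX \times [0,k]}$ with $\sum_{j}\bP_{x,j}=1$ for all $x$ and $\sum_{x}\bP_{x,j}=\phi_j$ for all $j$, satisfying
$$\log \perm(\ma)\ \leq\ O\!\left(k\log \tfrac{N}{k}\right)\ +\ \fnp(\ma,\bP),$$
where $\hat{\ma}_{x,j} = \bq_x^{\mj}$ is the reduced matrix of distinct columns.

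Next, I would invoke \Cref{thm:sameprob}: two rows of $\ma$ are equal exactly when the corresponding $\bq_x$ values coincide, so the constructed $\bP$ satisfies $\bP_{x,j}=\bP_{x',j}$ whenever $\bq_x = \bq_{x'}$. This symmetry lets me aggregate rows by discretized probability value: for each $i\in[1,\ell]$ and $j\in[0,k]$, define
$$\bS_{i,j}\ \defeq\ \sum_{\{x\,:\,\bq_x=\ri\}} \bP_{x,j}\ =\ \lpi\,\bP_{x,j}$$
for any $x$ with $\bq_x=\ri$. The row/column sum conditions on $\bP$ translate directly into $\sum_{j}\bS_{i,j}=\lpi$ and $\sum_{i}\bS_{i,j}=\phi_j$, so $\bS\in\za$.

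It remains to rewrite $\fnp(\ma,\bP)$ in terms of $\bS$. Substituting $\bP_{x,j} = \bS_{i,j}/\lpi$ and $\hat{\ma}_{x,j}=\ri^{\mj}$ for every $x$ with $\bq_x = \ri$, and collapsing the inner sum over such $x$ (which contributes a factor $\lpi$), yields
$$\fnp(\ma,\bP)\ =\ \sum_{i,j}\bS_{i,j}\log\frac{\ri^{\mj}}{\bS_{i,j}}\ +\ \sum_{i}\lpi\log\lpi\ +\ \sum_{j}\phi_{j}\log\phi_{j}\ -\ \sum_{j}\phi_{j},$$
where the middle term arises from $\sum_{i,j}\bS_{i,j}\log\lpi = \sum_i \log\lpi \sum_j \bS_{i,j} = \sum_i \lpi \log \lpi$, matching precisely the second term in the definition of $\fng$ after using $\lpi = \sum_j \bS_{i,j}$. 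Thus $\fnp(\ma,\bP)=\fng(\bS)$, and since $\bS\in\za$ we conclude
$$\log \perm(\ma)\ \leq\ O\!\left(k\log \tfrac{N}{k}\right)\ +\ \fng(\bS)\ \leq\ O\!\left(k\log \tfrac{N}{k}\right)\ +\ \max_{\bS\in\za}\fng(\bS).$$
The only step requiring care is lining up the aggregation algebra so that the entropy-like terms in $\fnp$ and $\fng$ agree; the row-symmetry guaranteed by \Cref{thm:sameprob} is what makes this aggregation well-defined and lossless, and without it one would be left with a bound involving $\bP$ rather than $\bS\in\za$.
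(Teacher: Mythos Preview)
Your proposal is correct and follows essentially the same approach as the paper: invoke \Cref{lem:upperm} (with \Cref{lem:condip}) on $\ma=\maqphi$ to obtain $\bP$, use \Cref{thm:sameprob} to justify that rows of $\bP$ are constant across $x$ with a common $\bq_x$, aggregate into $\bS_{i,j}=\lpi\,\bP_{x,j}$, check $\bS\in\za$, and verify $\fnp(\ma,\bP)=\fng(\bS)$ by the same substitution $\bP_{x,j}=\bS_{i,j}/\lpi$. The paper's proof is line-for-line the same construction and computation.
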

\begin{proof}
	Here we construct a particular matrix $\bS \in \za$ such that $\log \perm(\ma) \leq O(k \log \frac{\Ns}{k})+ \fng(\bS)$, which immediately implies the theorem. Recall by \Cref{lem:upperm,lem:condip}, there exists a matrix $\bP \in \R_{\geq 0}^{\bX \times (k+1)}$ such that, $\sum_{j \in [0,k]}\Pxj=1 \text{ for all } x\in \bX$ and $\sum_{x \in \bX}\Pxj=\phi_{j} \text{ for all } j \in [0,k]$, 
	and satisfies $\log \perm(\ma) \leq O(k \log \frac{\Ns}{k}) + \fnp(\ma,\bP)$ \footnote{The inequality holds because matrix $\ma$ has $k+1$ distinct columns and $O((k+1) \log \frac{\Ns}{k+1})$ is asymptotically same as $O(k \log \frac{\Ns}{k})$.}. Further using the definition of $\fnp(\ma,\bP)$ we get,
	\begin{equation}\label{eq:oooo}
	\log \perm(\ma)	\leq O(k \log \frac{\Ns}{k})+ \sum_{j \in [0,k]}\phi_{j}\log \phi_{j} -\sum_{j \in [0,k]}\phi_{j} +\sum_{(x,j) \in \bX \times [0,k]} \Pxj\log  \frac{\madxj}{\Pxj}~,
	\end{equation}
	where for the matrix $\ma$ defined (with respect to $\bq$ and $\phi$) in \Cref{eq:matrixqphi}, we have,
	$$\madxj = \bq_{x}^{\mj}~.$$ 
	We now define the matrix $\bS$ that satisfies the conditions of the lemma. 
	$$\Sij\defeq\sum_{\{x \in \bX~|~\bp_{x}=\ri \}} \Pxj$$
	By \Cref{thm:sameprob}, for any fixed $j \in [0,k]$, all $x\in \bX$ such that $\bp_{x}=\ri$, share the same probability value $\Pxj$ and we use the notation $\bP_{i,j}$ to denote this value. Using this definition, we have:
	\begin{equation}\label{eq:sij}
	\Sij = \level^{\bp}_{i}\bP_{i,j}
	\end{equation}
	Further note that for any $i\in [1,\ell]$, if $x\in \bX$ is any element such that $\bp_{x}=\ri$, then
	$$\sum_{j \in [0,k]}\bP_{i,j}=\sum_{j \in [0,k]}\Pxj=1$$
	We wish to show that $\bS\in \za$. We first analyze the row sum constraint. For each $i \in [1,\ell]$, 
	$$\sum_{j \in [0,k]}\Sij= \sum_{j \in [0,k]}\level^{\bp}_{i}\bP_{i,j}=\level^{\bp}_{i}$$
	We now analyze the column constraint. For each $j \in [0,k]$,
	$$\sum_{i\in [1,\ell]}\Sij=\sum_{i\in [1,\ell]} \sum_{\{x \in \bX~|~\bp_{x}=\ri \}} \Pxj=\sum_{x \in \bX}\Pxj=\phi_{j}$$
	In the remainder of the proof we show that the matrix $\bS$ defined earlier satisfies $\log \perm(\ma)	\leq O(k \log \frac{\Ns}{k})+ \fng(\bS)$. We start by simplifying the term $\sum_{(x,j) \in \bX \times [0,k]} \Pxj\log  \frac{\madxj}{\Pxj}$ in \Cref{eq:oooo},
	\begin{equation}\label{eq:tttt}
	\begin{split}
	\sum_{(x,j) \in \bX \times [0,k]} \Pxj\log  \frac{\madxj}{\Pxj}&=	\sum_{j \in [0,k]}\sum_{i\in [1,\ell]}\sum_{\{x\in \bX~|~\bp_{x}=\ri\}} \Pxj\log  \frac{\madxj}{\Pxj}
	=\sum_{j \in [0,k]}\sum_{i\in [1,\ell]}\sum_{\{x\in \bX~|~\bp_{x}=\ri\}} \bP_{i,j}\log  \frac{\ri^{\mj}}{\bP_{i,j}}\\
	&=\sum_{j \in [0,k]}\sum_{i\in [1,\ell]}\level^{\bp}_{i} \bP_{i,j}\log  \frac{\ri^{\mj}}{\bP_{i,j}}
	=\sum_{i\in [1,\ell]}\sum_{j \in [0,k]} \Sij\log \frac{\ri^{\mj}\level^{\bp}_{i}}{\Sij}\\
	&=\sum_{i\in [1,\ell]}\sum_{j \in [0,k]} \Sij\log \frac{\ri^{\mj}}{\Sij}+\sum_{i\in [1,\ell]}\level^{\bp}_{i}\log \level^{\bp}_{i}
	\end{split}
	\end{equation}
	In the second equality, we used $\madxj=\ri^{\mj}$ and further by the definition of $\bP_{i,j}$ we have $\bP_{x,j}=\bP_{i,j}$ for all $x\in \bX$ that satisfy $\bq_{x}=\ri$. In the third equality, we used $\sum_{\{x\in \bX~|~\bp_{x}=\ri\}}1=\level^{\bp}_{i}$. In the fourth equality we used \Cref{eq:sij}. In the final equality, we used 	$\sum_{i\in [1,\ell]}\sum_{j \in [0,k]} \Sij\log \frac{\ri^{\mj}\level^{\bp}_{i}}{\Sij}=\sum_{i\in [1,\ell]}\sum_{j \in [0,k]} \Sij\log \frac{\ri^{\mj}}{\Sij}+\sum_{i\in [1,\ell]}\sum_{j \in [0,k]} \Sij\log \level^{\bp}_{i}$ and the final term further simplifies to the following, $\sum_{i\in [1,\ell]}\sum_{j \in [0,k]} \Sij\log \level^{\bp}_{i}=\sum_{i\in [1,\ell]}\log \level^{\bp}_{i}\sum_{j \in [0,k]} \Sij =\sum_{i\in [1,\ell]}\level^{\bp}_{i}\log \level^{\bp}_{i}$.
	
	We conclude the proof by combining equations \ref{eq:oooo} and \ref{eq:tttt} and using $\sum_{j \in [0,k]}\Sij=\levelq_{i}$ for any $\bS \in \za$.
	\end{proof}
Note using Theorem~\ref{thm:newlowerbound} and \ref{thm:newupperbound}, for matrix $\ma$ defined (with respect to $\bq$ and $\phi$) in \Cref{eq:matrixqphi}, we showed the following,
\begin{equation}\label{eq:permlbub}
\max_{\bS \in \za} \fng(\bS) \leq \log \perm(\ma) \leq O(k\log \frac{\Ns}{k})+\max_{\bS \in \za} \fng(\bS)~.
\end{equation}

\renewcommand{\boo}{\ell}
\renewcommand{\btt}{k}
\renewcommand{\bttpo}{\btt}
\renewcommand{\ztbtt}{[\btt]}
\renewcommand{\pvec}{\textbf{r}}

Our final goal of this section is to maximize $\probpml(\bq,\phi) \propto \frac{1}{\phi_{0}!}\perm(\ma)$ over discrete pseudo-distributions $\bq$ but let us take a step back and just focus on writing an upper bound. Consider the term $\max_{\bS \in \za} \fng(\bS)$ in the expression above, it depends on discrete pseudo-distribution $\bq$ at two different places. The first is the constraint set $\za$ and the second is the function $\fng(\bS)$ (because it contains the $\phi_{0}$ term in its expression). We address the first issue by defining the following new set that encodes the constraint set $\za$ for all discrete pseudo-distributions simultaneously.
\renewcommand{\bZ}{\textbf{Z}^{\phi}_{\bR}}
\begin{defn}
Let $\bZ\subset \R_{\geq 0}^{\ell \times (k+1)}$ be the set of non-negative matrices, such that any $\bS \in \bZ$ satisfies,
\begin{equation}\label{eq:zrphi}
\sum_{i\in[1,\ell]}\bS_{i,j}=\phi_{j} \text { for all }j \in [1,k], \sum_{j\in[0,k]}\bS_{i,j} \in \Z \text { for all }i \in [1,\ell] \text{ and } \sum_{i \in [1,k]} \ri \sum_{j \in [0,k]}\bS_{i,j}\leq 1~.
\end{equation}
\end{defn}
Note in the definition of $\bZ$ we removed the constraint related to $\phi_{0}$ and recall $\phi_{0}$ denotes the number of unseen domain elements. Not having constraint with respect to $\phi_{0}$ helps us encode discrete pseudo-distributions (with respect to $\bR$) of different domain sizes. Further for any $\bS \in \bZ$, there is a discrete pseudo-distribution associated with it and we define it next.
\begin{defn}\label{defn:distS}
For any $\bS \in \bZ$, the discrete pseudo-distribution $\bqS$ associated with $\bS$ is defined as follows: For any arbitrary $\sum_{j\in[0,k]}\bS_{i,j}$ number of domain elements assign probability $\ri$.
\end{defn}
Note in the definition above $\bqS$ is a valid pseudo-distribution because of the third condition in \Cref{eq:zrphi}. Further note that, for any discrete pseudo-distribution $\bq$ and $\bS \in \za$, the distribution $\bqS$ associated with respect to $\bS$ is a permutation of distribution $\bq$. Since the probability of a profile is invariant under permutations of distribution, we treat all these distributions the same and do not distinguish between them.

We now handle the second issue that corresponds to removing the dependency of discrete pseudo-distribution $\bq$ from the function $\fng(\bS)$. To handle this issue, we define a new function $\bg(\bS)$ that when maximized over the set $\za$ and $\bZ$ approximates the value $\probpml(\bp,\phi)$ and $\max_{\bq \in \dsimplex} \probpml(\bp,\phi)$ respectively (See next theorem for the formal statement). For any $\bS \in \R_{\geq 0}^{\ell \times (k+1)}$, the function $\bg(\bS)$ is defined as follows,
\begin{equation}
\bg(\bS)\defeq \expo{\sum_{i \in [1,\ell]} \sum_{j \in [0,k]}\left[\Sij\log (\frac{\ri^{\mj}}{\Sij})\right]+\sum_{i \in [1,\ell]}\left(\sum_{j \in [0,k]}\Sij \right)\log \left(\sum_{j \in [0,k]}\Sij \right)}~.
\end{equation}
Note that we switch gears and define the function $\bg(\bS)$ as an exponential function. $\bg(\bS)$ approximates the value $\probpml(\bp,\phi)$ instead of log of it and it helps with proof readability. The following theorem summarizes this result.
\begin{theorem}\label{pmlprob:approx}
	Let $\bR$ be a probability discretization set. Given a profile $\phi$ and discrete pseudo-distribution $\bq$ with respect to $\bR$. The following inequality holds,
	\begin{equation}
\expo{-O(k \log (\Ns+n))}\cdot	\cphi \cdot	\max_{\bS \in \za}\bg(\bS) \leq \probpml(\bp,\phi) \leq \expo{\bigO{k \log \frac{\Ns}{k}}} \cdot \cphi \cdot \max_{\bS \in \za}\bg(\bS)
	\end{equation}
	Further,
	\begin{equation}
\expo{-O(k \log (\Ns+n))}\cdot	\cphi \cdot	\max_{\bS \in \bZ}\bg(\bS) \leq \max_{\bq \in \dsimplex}\probpml(\bp,\phi) \leq \expo{\bigO{k \log \frac{\Ns}{k}}} \cdot \cphi \cdot \max_{\bS \in \bZ}\bg(\bS)
	\end{equation}
\end{theorem}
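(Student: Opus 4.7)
My plan is to derive both pairs of inequalities by plugging the closed-form \Cref{eqpml2new} for $\probpml(\bq,\phi)$ into the sandwich on $\log\perm(\ma)$ given by \Cref{eq:permlbub}. Explicitly, I begin with
$$\probpml(\bq,\phi) = \cphi \cdot \prod_{j \in [0,k]} \frac{1}{\phi_j!} \cdot \perm(\ma),$$
and observe that comparing the definitions of $\fng$ and $\bg$ yields the identity
$$\fng(\bS) = \log \bg(\bS) + \sum_{j \in [0,k]} \phi_j \log \phi_j - \sum_{j \in [0,k]} \phi_j,$$
so in particular $\argmax_{\bS \in \za}\fng(\bS) = \argmax_{\bS \in \za}\bg(\bS)$, and the $\phi_j\log\phi_j$ and $\phi_j$ terms are $\bS$-independent constants that factor out of the maximum.

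Next I apply Stirling's formula (\Cref{lem:ster}) to each of the $k+1$ factorials $\phi_j!$, using the crude bound $\phi_j \leq N+n$ to control the $O(\sqrt{\phi_j})$ slack, which yields
$$\expo{-O(k\log(N+n))} \cdot \expo{\sum_j \phi_j - \sum_j \phi_j\log\phi_j} \leq \prod_{j \in [0,k]} \frac{1}{\phi_j!} \leq \expo{\sum_j \phi_j - \sum_j \phi_j\log\phi_j}.$$
Multiplying these bounds by the corresponding upper and lower bounds on $\perm(\ma)$ from \Cref{eq:permlbub}, and using the identity above to cancel the $\phi_j\log\phi_j - \phi_j$ terms between $\fng(\bS)$ and $\prod_j (1/\phi_j!)$, I recover the first pair of inequalities exactly: the factor $\expo{O(k\log(N+n))}$ arises from Stirling on the lower side and $\expo{O(k\log(N/k))}$ arises from the permanent slack on the upper side.

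For the second pair of inequalities I take the maximum of both sides of the first pair over $\bq \in \dsimplex$. The key identity I need is $\max_{\bq \in \dsimplex}\max_{\bS \in \za}\bg(\bS) = \max_{\bS \in \bZ}\bg(\bS)$. The inclusion $\za \subseteq \bZ$ for each fixed $\bq$ is a direct constraint check: $\sum_j \bS_{i,j} = \ell_i^{\bq} \in \Z$ and $\sum_i r_i \sum_j \bS_{i,j} = \sum_i r_i \ell_i^{\bq} = \|\bq\|_1 \leq 1$. Conversely, for any $\bS \in \bZ$, the associated pseudo-distribution $\bqS$ from \Cref{defn:distS} places $\bS$ inside $\bZS$ once one sets $\phi_0^{\bqS} \defeq \sum_i \bS_{i,0}$, so the value $\bg(\bS)$ is attained by some $\bq$ in the left-hand maximum.

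The main obstacle I anticipate is the bookkeeping around the unseen count $\phi_0$: because $\phi_0$ is not part of the observed profile $\phi$ but implicitly depends on the (effective) domain size of the pseudo-distribution, I need to check that when the fixed-$\bq$ bound is applied with $\bq = \bqS$ the $j=0$ column of $\bS$ correctly encodes $\phi_0^{\bqS}$, so that the factor $1/\phi_0^{\bqS}!$ in $\probpml(\bqS,\phi)$ is matched by the contributions $\Sij\log(r_i^{0}/\Sij) = -\Sij\log\Sij$ coming from the $j=0$ column of $\bg$. Once this matching is verified, the remainder is purely algebraic manipulation, and combining the first inequality with the set identity gives the stated bounds.
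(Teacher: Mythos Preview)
Your proposal is correct and follows essentially the same route as the paper: combine \Cref{eqpml2new} with the sandwich \Cref{eq:permlbub}, use the identity $\fng(\bS)=\log\bg(\bS)+\sum_j\phi_j\log\phi_j-\sum_j\phi_j$, absorb the factorials via Stirling with the bound $\phi_j\le N+n$, and then pass to the maximum over $\bq\in\dsimplex$ via the inclusion $\za\subseteq\bZ$ together with the converse $\bS\in\bZ\Rightarrow\bS\in\bZS$ from \Cref{defn:distS}. Your handling of the $j=0$ column and the implicit $\phi_0$ matches the paper's treatment.
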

\begin{proof}
	For any discrete pseudo-distribution $\bq$ with respect to $\bR$ and profile $\phi$, let $\ma$ be the matrix defined (with respect to $\bq$ and $\phi$) in \Cref{eq:matrixqphi}. Then, by \Cref{eq:permlbub} we have,
	$$\max_{\bS \in \za} \fng(\bS) \leq \log \perm(\ma) \leq O(k\log \frac{\Ns}{k})+\max_{\bS \in \za} \fng(\bS)~.$$
	Further by \Cref{eqpml2} we have,
	$$\probpml(\bp,\phi)=\cphi \cdot \left(\prod_{j\in [0,k]}\frac{1}{\phi_{j}!}\right) \cdot \perm(\maqphi)~.$$
	Combining the above two equations we have,
	\begin{equation}\label{eq:lbubone}
		\cphi \cdot \left(\prod_{j\in [0,k]}\frac{1}{\phi_{j}!}\right) \cdot \max_{\bS \in \za}\expo{\fng(\bS)} \leq \probpml(\bp,\phi)\leq  \expo{\bigO{k \log \frac{\Ns}{k}}} \cdot \cphi \cdot \left(\prod_{j\in [0,k]}\frac{1}{\phi_{j}!}\right) \cdot \max_{\bS \in \za}\expo{\fng(\bS)}
	\end{equation}
	We now simplify the term $\left(\prod_{j\in [0,k]}\frac{1}{\phi_{j}!}\right) \cdot \expo{\fng(\bS)}$ in the above expression. First note that for any $\bS \in \za$, 
	$$\expo{\fng(\bS)}=\bg(\bS)\cdot \expo{\sum_{j \in [0,k]}\phi_{j}\log \phi_{j}-\sum_{j \in [0,k]}\phi_{j}}~.$$
Therefore,
	\begin{equation}\label{eq:lbubtwo}
	\begin{split}
	\left(\prod_{j\in [0,k]}\frac{1}{\phi_{j}!}\right) \cdot \expo{\fng(\bS)}&=\left(\prod_{j\in [0,k]}\frac{1}{\phi_{j}!}\right) \cdot \bg(\bS)\cdot \expo{\sum_{j \in [0,k]}\phi_{j}\log \phi_{j}-\sum_{j \in [0,k]}\phi_{j}}~.
	\end{split}
	\end{equation}
	By \Cref{lem:ster} (Stirling's approximation) we have,
	\begin{equation}\label{eq:lbubthree}
	\expo{-O\left(k \log (\Ns+n) \right)} \leq \left(\prod_{j\in [0,k]}\frac{1}{\phi_{j}!}\right) \cdot \expo{\sum_{j \in [0,k]}\phi_{j}\log \phi_{j}-\sum_{j \in [0,k]}\phi_{j}} \leq 1~.
	\end{equation}
The first inequality follows because for each $j \in [0,k]$, we have $\frac{1}{\phi_{j}!} \expo{\phi_{j}\log \phi_{j}-\phi_{j}} \geq \Omega(\frac{1}{\sqrt{\phi_{j}+1}})$, which by using $\phi_{j} \leq \Ns+n$ is further lower bounded by $\Omega(\frac{1}{\sqrt{\Ns+n}}) \geq \expo{-O(\log (\Ns+n))}$. \Cref{eq:lbubthree} follows by taking product over all $j \in [0,k]$. Now combining \Cref{eq:lbubthree} and \Cref{eq:lbubtwo} we have,
\begin{equation}\label{eq:lbubfour}
\expo{-O(k \log (\Ns+n))} \cdot \bg(\bS) \leq \left(\prod_{j\in [0,k]}\frac{1}{\phi_{j}!}\right) \cdot \expo{\fng(\bS)} \leq \bg(\bS)~.
\end{equation}
The first statement of the lemma follows by combining the above \Cref{eq:lbubfour} with \Cref{eq:lbubone}, that is we have,
\begin{equation}\label{eq:caseone}
\expo{-O(k \log (\Ns+n))}\cdot \cphi \cdot	\max_{\bS \in \za}\bg(\bS) \leq \probpml(\bp,\phi) \leq \expo{\bigO{k \log \frac{\Ns}{k}}} \cdot \cphi \cdot \max_{\bS \in \za}\bg(\bS)~.
\end{equation}
Given a profile $\phi$, for any discrete pseudo-distribution $\bq \in \dsimplex$ we have $\za \subseteq \bZ$ and further combining it with above inequality we get,
$$\max_{\bq \in \dsimplex}\probpml(\bp,\phi) \leq \expo{\bigO{k \log \frac{\Ns}{k}}} \cdot \cphi \cdot \max_{\bS \in \bZ}\bg(\bS)~.$$
Note that for any $\bS\in \bZ$, we also have $\bS \in \bZS$, where $\bqS$ is the discrete pseudo-distribution associated with respect to $\bS$ (See \Cref{defn:distS}). Therefore,
$$\expo{-O(k \log (\Ns+n))}\cdot	\cphi \cdot	\max_{\bS \in \bZ}\bg(\bS) \leq \expo{-O(k \log (\Ns+n))}\cdot	\cphi \cdot	\max_{\bq \in \dsimplex}\max_{\bS \in \za}\bg(\bS) \leq  \max_{\bq \in \dsimplex}\probpml(\bp,\phi)~.$$
For the last inequality in the above derivation we used \Cref{eq:caseone}. Now combining the previous two inequalities we conclude the proof.
\end{proof}
The previous theorem provides an upper bound for the probability of profile with respect to any discrete pseudo-distribution. However one issue with this upper bound is that it is not efficiently computable because the set $\bZ$ is not a convex set (because of the integrality constraints). We relax these integrality constraints and define the following new set.
\begin{defn}\label{def:bzfrac}
	Let $\bZfrac\subseteq \R_{\geq 0}^{\ell \times (k+1)}$ be the set of non-negative matrices, such that any $\bS \in \bZfrac$ satisfies,
	\begin{equation}
	\sum_{i\in[1,\ell]}\bS_{i,j}=\phi_{j} \text { for all }j \in [1,k] \text{ and } \sum_{i \in [1,k]} \ri \sum_{j \in [0,k]}\bS_{i,j}\leq 1~.
	\end{equation}
\end{defn}
\begin{lemma}\label{lem:dpmlapprox}
	Let $\bR$ be a probability discretization set. Given a profile $\phi$, the following holds,
	\begin{equation}
	\max_{\bq \in \dsimplex}\probpml(\bp,\phi) \leq \expo{\bigO{k \log \frac{\Ns}{k}}} \cdot \cphi \cdot \max_{\bS \in \bZfrac}\bg(\bS)
	\end{equation}
	\end{lemma}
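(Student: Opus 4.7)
The plan is to observe that this lemma follows almost immediately by combining the upper bound already established in Theorem~\ref{pmlprob:approx} with a simple containment of feasible sets. Specifically, Theorem~\ref{pmlprob:approx} gives
\[
\max_{\bq \in \dsimplex}\probpml(\bq,\phi) \leq \expo{\bigO{k \log \tfrac{\Ns}{k}}} \cdot \cphi \cdot \max_{\bS \in \bZ}\bg(\bS),
\]
so the only thing I need to verify is that $\max_{\bS \in \bZ}\bg(\bS) \leq \max_{\bS \in \bZfrac}\bg(\bS)$.

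For this it suffices to show $\bZ \subseteq \bZfrac$. Comparing the two definitions, both sets require $\sum_{i\in[1,\ell]}\bS_{i,j}=\phi_{j}$ for all $j \in [1,k]$ and $\sum_{i \in [1,\ell]} \ri \sum_{j \in [0,k]}\bS_{i,j}\leq 1$, while $\bZ$ additionally imposes the integrality constraint $\sum_{j\in[0,k]}\bS_{i,j} \in \Z$ for each $i \in [1,\ell]$. Dropping that integrality requirement only enlarges the feasible region, so every $\bS \in \bZ$ also lies in $\bZfrac$. Hence the maximum of $\bg$ over $\bZ$ is at most its maximum over $\bZfrac$, and chaining this with the inequality from Theorem~\ref{pmlprob:approx} yields the stated bound. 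There is no real obstacle here — the lemma is just packaging the convex relaxation of the integer program from Theorem~\ref{pmlprob:approx} into a form that can be solved efficiently, which is what will be needed in the subsequent rounding argument.
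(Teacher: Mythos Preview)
Your proposal is correct and matches the paper's proof essentially line for line: the paper also invokes \Cref{pmlprob:approx} to get the bound with $\bZ$ in place of $\bZfrac$, and then simply notes that $\bZ \subseteq \bZfrac$ because the latter drops the integrality constraint.
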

\begin{proof}
By \Cref{pmlprob:approx} we already have,
$$\max_{\bq \in \dsimplex}\probpml(\bp,\phi) \leq \expo{\bigO{k \log \frac{\Ns}{k}}} \cdot \cphi \cdot \max_{\bS \in \bZ}\bg(\bS)~.$$
The lemma holds because $\bZ \subseteq \bZfrac$.
\end{proof}
Note in the above lemma, the upper bound only depends on the profile \footnote{$\cphi$ has no dependency on $\phi_{0}$.} and we removed all dependencies related to distributions (and also $\phi_{0}$). Next we show that this upper bound can be efficiently computed by using the result that function $\bg(\bS)$ is log concave in $\bS$.
\begin{lemma}[Lemma 4.16 in \cite{CSS19}]
	Function $\bg(\bS)$ is log concave in $\bS$.
	\end{lemma}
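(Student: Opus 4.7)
The plan is to show that $\log \bg(\bS)$ is concave in $\bS$ on the domain where the entries $\Sij$ (and the row sums $T_i \defeq \sum_j \Sij$) are positive. Writing the exponent of $\bg$ out as
$$\log \bg(\bS) = \sum_{i,j} \Sij \log(\ri^{\mj}) \;-\; \sum_{i,j} \Sij \log \Sij \;+\; \sum_i T_i \log T_i,$$
the first summand is linear in $\bS$ and therefore concave, so the work is to handle the remaining two summands.

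First, I would combine the last two summands into a single cleaner expression. Since $T_i \log T_i = \sum_j \Sij \log T_i$, grouping term by term gives
$$-\sum_{i,j} \Sij \log \Sij + \sum_i T_i \log T_i \;=\; -\sum_{i,j} \Sij \log\!\left(\frac{\Sij}{T_i}\right).$$
So the non-linear part of $\log \bg(\bS)$ is a sum over $(i,j)$ of terms of the form $-\Sij \log(\Sij/T_i)$, which is (up to a sign) the perspective of the function $x \mapsto x \log x$.

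The key analytic fact is that $\phi(x,y) \defeq x \log(x/y)$ is jointly convex on $\{(x,y) : x,y>0\}$. This is the perspective construction: since $x \log x$ is convex in $x$, its perspective $y \cdot (x/y)\log(x/y) = x \log(x/y)$ is jointly convex in $(x,y)$ (one may also verify this directly from the Hessian, whose eigenvalues are nonnegative). Hence $-\Sij \log(\Sij/T_i)$ is jointly concave in $(\Sij, T_i)$. Because $T_i = \sum_j \Sij$ is a linear function of $\bS$, precomposing by this linear map preserves concavity, so each term $-\Sij \log(\Sij/T_i)$ is concave in $\bS$.

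Finally, summing these concave terms over $i,j$ and adding back the linear first summand yields a concave function, so $\log \bg(\bS)$ is concave and $\bg(\bS)$ is log concave, as claimed. The only place to be careful is the boundary behavior where some $\Sij$ or some $T_i$ vanishes; here the convention $0 \log 0 = 0$ gives continuous extension to the closure, and the concavity extends by continuity. No single step is a real obstacle; the main content is recognizing that the combination $-\Sij \log \Sij + T_i \log T_i$ assembles into the perspective form $-\Sij \log(\Sij/T_i)$.
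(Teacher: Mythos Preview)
Your argument is correct. The key observation---that the non-linear part of $\log \bg(\bS)$ collapses to $-\sum_{i,j}\Sij\log(\Sij/T_i)$, a sum of negated perspective functions of $x\log x$---is exactly the right way to see the concavity, and the boundary extension via $0\log 0=0$ is handled appropriately.

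As for comparison: the paper does not actually prove this lemma; it is quoted verbatim as Lemma~4.16 of \cite{CSS19} and used as a black box. So there is no ``paper's proof'' to compare against here. Your self-contained derivation via the perspective construction is a standard and clean way to establish the result, and it is almost certainly what the cited reference does (the function $\fng$ is, up to the linear term and constants, a sum of row-wise relative entropies $\mathrm{KL}(\bS_{i,:}/T_i \,\|\, \cdot)$ scaled by $T_i$, whose joint convexity is classical).
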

\begin{theorem}[Theorem 4.17 in \cite{CSS19}]\label{thm:convrt}
	Given a profile $\phi \in \Phi^{n}$, the optimization problem $\max_{\bS \in \bZfrac}\log \bg(\bS)$ can be solved in time $\otilde(k^2\ell)$. \footnote{Note here we hide the logarithmic dependence on $n$, the size of sample.}
	\end{theorem}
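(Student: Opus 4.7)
The plan is to cast $\max_{\bS \in \bZfrac} \log \bg(\bS)$ as a concave maximization over a polytope and invoke a suitable efficient solver. First, by Lemma 4.16 of \cite{CSS19} the objective $\log \bg(\bS)$ is concave in $\bS$, and the feasible region $\bZfrac$ (see \Cref{def:bzfrac}) is cut out by $k$ equality constraints $\sum_i \bS_{i,j} = \phi_j$ for $j \in [1,k]$, one linear inequality $\sum_i \ri \sum_j \bS_{i,j} \le 1$, and nonnegativity. Hence this is concave maximization over a polytope with $(k+1)\ell$ variables and only $O(k)$ non-box constraints, which is solvable to high accuracy in polynomial time by any standard convex programming method.

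To attain the claimed $\otilde(k^2\ell)$ running time, the plan is to exploit the entropic structure of the objective. Writing $t_i \defeq \sum_{j \in [0,k]} \bS_{i,j}$, the exponent in $\log \bg(\bS)$ decomposes into a linear piece $\sum_{i,j} \bS_{i,j} \log \ri^{\mj}$ and a separable entropy-like piece $\sum_i t_i \log t_i - \sum_{i,j} \bS_{i,j} \log \bS_{i,j}$. This separability makes the Hessian of the objective highly structured, being essentially diagonal in the $\bS_{i,j}$ variables modulo a rank-one correction per row. Each Newton step of a short-step interior-point method can then be implemented by Schur-complementing out the diagonal blocks and solving a residual dense system of size $O(k)$. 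Combined with an $\otilde(k)$-iteration barrier/path-following schedule, this yields the target $\otilde(k^2 \ell)$ overall cost.

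The main obstacle is the boundary behavior of $\bZfrac$: the entropy gradient blows up as any $\bS_{i,j} \to 0$, and an exact optimizer may indeed have many zero entries (for instance whenever $\ri^{\mj}$ is negligibly small relative to the best scaling of row $i$). I would handle this by a two-step preprocessing: first, prune coordinates $(i,j)$ that are provably zero at the optimum by simple comparison of $\ri^{\mj}$ against the dominant entries in their row and column; second, restrict the remaining variables to satisfy $\bS_{i,j} \ge 1/\mathrm{poly}(n,\ell,k)$, which loses only a multiplicative factor absorbed into the $\otilde$ notation. A standard self-concordance analysis of the log barrier together with $\mathrm{poly}\log$-bit arithmetic then delivers the advertised bound, matching Theorem 4.17 of \cite{CSS19}.
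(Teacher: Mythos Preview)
The paper does not prove this theorem at all; it is quoted verbatim from \cite{CSS19} (as the label ``Theorem 4.17 in \cite{CSS19}'' indicates) and used as a black box. So there is no proof in the present paper to compare against, and your proposal is necessarily a reconstruction of what \cite{CSS19} might be doing.

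As a reconstruction, your high-level framing is fine: the objective is concave, $\bZfrac$ is a polytope, and the Hessian block for each row $i$ is indeed of the form $-\mathrm{diag}(1/\bS_{i,j}) + (1/t_i)\onevec\onevec^{\top}$, so each Newton system can be reduced via Schur complement over the $O(k)$ coupling constraints to a dense $O(k)\times O(k)$ system, costing $O(k^2\ell + k^3)$ per step. The gap is your iteration count. A short-step interior-point method has iteration complexity $\otilde(\sqrt{\nu})$ where $\nu$ is the self-concordance parameter of the barrier; here the log barrier for the $(k+1)\ell$ nonnegativity constraints already gives $\nu = \Theta(k\ell)$, so you get $\otilde(\sqrt{k\ell})$ iterations, not $\otilde(k)$. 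Plugging in your per-step cost yields $\otilde(k^{2.5}\ell^{1.5})$, which does not match the claimed $\otilde(k^2\ell)$. Your ``$\otilde(k)$-iteration barrier/path-following schedule'' is asserted without justification and is not what standard IPM theory gives for this barrier.

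To actually recover the stated bound you would need either a different algorithm (e.g., a first-order method with a problem-specific convergence analysis exploiting the entropy structure, or a reduction to a scaling-type problem) or a nonstandard barrier with $\nu = O(k^2)$, neither of which your sketch supplies. The preprocessing/pruning step you describe is also heuristic as stated: ``provably zero at the optimum by simple comparison'' needs an actual criterion and proof, since the optimum depends on all constraints jointly. If your goal is only polynomial solvability (which is all the downstream application really needs, up to absorbing polylog factors), your first paragraph already suffices; but the precise $\otilde(k^2\ell)$ bound requires the argument from \cite{CSS19}, which you have not reproduced.
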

\subsection{Rounding Algorithm}\label{subsec:algmain}
In the previous section we provided an efficiently computable upper bound for the probability of profile $\phi$ with respect to any discrete pseudo-distribution $\bq \in \dsimplex$. This upper bound returns a solution $\bS \in \bZfrac$ and we need to round this solution to construct a discrete pseudo-distribution that approximates this upper bound. In this section we provide a rounding algorithm that takes as input $\bS \in \bZfrac$ and returns a solution $\bSext \in \bZext$, where $\bRext$ is an extended probability discretization set. Further using $\bSext\in \bZext$, we construct a discrete pseudo-distribution $\bqSext$ with respect to $\bRext$ such that $\probpml(\bqSext,\phi)$ approximates the upper bound and therefore is an approximate PML distribution. Our rounding algorithm is technical and we next provide a overview to better understand it.

\paragraph{Overview of the rounding algorithm:} The goal of the rounding algorithm is to take a fractional solution $\textbf{S}\defeq \mathrm{arg}\max_{\bS' \in \bZfrac}\log \bg(\bS')$ as input and round each row sum to an integral value while preserving the column sums and $\fnggg(\textbf{S})$ value. Our rounding algorithm proceeds in three steps:
\paragraph{Step 1:} Consider the fractional solution $\textbf{S} \in \R_{\geq 0}^{\ell \times (k+1)}$ and recall the rows are indexed by the elements of set $\bR$ (which represent probability values). We first round the rows corresponding to the higher probability values by simply taking the floor (rounding down to the nearest integer) of each entry. This procedure ensures the integrality of the row sums (corresponding to higher probability values) but violates the column sum constraints. To satisfy the column sum constraints and the distributional constraint (i.e. last condition in \Cref{eq:zrphi}) simultaneously, we create rows corresponding to new probability values using \Cref{alg:create}. However to ensure that all these new rows also have integral row sums, we modify the (old) rows corresponding to lower probability values accordingly. Let $\textbf{S}^{(1)}$ be the solution returned by the first step of the rounding algorithm. \Cref{alg:create} ensures that the $\fnggg(\textbf{S}^{(1)})$ value is not much smaller than $\fnggg(\textbf{S})$. In $\textbf{S}^{(1)}$, all the new rows and (old) rows corresponding to higher probability values have integral row sums and we round the remaining rows corresponding to smaller probability values next.
\paragraph{Step 2:} In this step, we round all the rows corresponding to the smaller probability values. For each of these rows, we scale all the entries in a particular row by the same factor to ensure that the row sum is rounded down to the nearest integer. Similar to the step 1, using \Cref{alg:create} we create rows corresponding to new probability values to maintain the column sum constraints and the distributional constraint; all these new rows further correspond to small probability values. Unlike in the previous step, the new rows created in step two may not have integral row sums but these rows have a nice diagonal structure. Let $\textbf{S}^{(2)}$ be this intermediate solution created in step 2. \Cref{alg:create} ensures that the $\fnggg(\textbf{S}^{(2)})$ value is not much smaller than $\fnggg(\textbf{S}^{(1)})$ (and hence $\fnggg(\textbf{S})$). Note all the row sums in $\textbf{S}^{(2)}$ are integral except the new rows created in step 2 that all have small probability values and have diagonal structure.
\paragraph{Step 3:} In this final step, using \Cref{alg:structurerounding} we round the new rows created in step 2. \Cref{alg:structurerounding} exploits the low probability and diagonal structure in these rows. The diagonal structure ensures that there is just one non-zero entry in any particular row and we modify the solution $\textbf{S}^{(2)}$ (from the previous step) as follows. We transfer the mass from a non-integral lower probability value row to an immediate higher probability value row until the (lower probability value) row sum is integral. This process might violate the distributional constraint  and we rescale the probability values accordingly to satisfy this constraint. Let $\bSext$ be the solution returned by step 3. We ensure that all column sums are preserved, all row sums are integral and the $\fnggg(\bSext)$ value is not much smaller than $\fnggg(\textbf{S}^{(2)})$ (and hence not much smaller than $\fnggg(\textbf{S})$).\\
\\
In the remainder of this section we state all three algorithms and the results corresponding to them. For continuity of reading, we defer the proofs of these results to \Cref{subsec:missing}. For convenience, we first state \Cref{alg:structurerounding} that rounds the rows corresponding to the low probability values in step 3 of our main rounding algorithm (\Cref{alg:rounding}). We follow up this algorithm with a lemma that summarizes the guarantees provided by it. Later we state \Cref{alg:create} that creates rows corresponding to new probability values to preserve the column sums and the distributional constraint. This algorithm is invoked as a subroutine in both step 1 and 2 of \Cref{alg:rounding}. Finally, we state our main rounding algorithm that consists of three different steps. We then state results analyzing each of these steps separately. The final result (\Cref{thm:stepthree}), is the main theorem of this subsection that summarizes the final guarantees promised by our rounding algorithm.
\renewcommand{\ztbtt}{[0,\btt]}
\begin{algorithm}[H]
	\caption{Structured Rounding Algorithm}
	\begin{algorithmic}[1]
		\Procedure{StructuredRounding}{$x,w,a$}
		\State {\bf Input}: $x \in (0,1)_{\R}^{\ztbtt}$, $w \in \R^{\ztbtt}$ and $a =\sum_{j \in \ztbtt}x_{j} \in \Z$.
		\State {\bf Output}: $z\in \R^{\ztbtt \times \ztbtt}$ and $s \in \R^{a}$.
		\State Initialize $z=\textbf{0}^{\ztbtt \times \ztbtt}$.
		\State For each $i\in [1,a]$, let $s_{i}$ denote the smallest index such that $\sum_{j \leq s_{i}}x_{j} > i-1$ and let $s_{a+1}=k$.
		\For{$i \in [1,a]$}
		\begin{equation}
		z_{s_{i},j}=
		\begin{cases}
		x_{j}& \text{ if } s_{i}<j<s_{i+1}~,\\
		\sum_{j'\leq s_{i}}x_{j'}-(i-1)& \text{ if } j=s_{i}~,\\
		1-\sum_{s_{i}\leq j' <s_{i+1}} z_{s_{i},j'}& \text{ if } j=s_{i+1}~.\\
		\end{cases}
		\end{equation}
		\EndFor
		\State {\bf Return }$z$ and $s$.
		\EndProcedure
	\end{algorithmic}
	\label{alg:structurerounding}
\end{algorithm}
The next lemma summarizes the quality of the solution produced by \Cref{alg:structurerounding}.
\begin{lemma}\label{lem:structured}
	Given a set of reals $x_{j} \in (0,1)$ for all $j\in \ztbtt$ such that $\sum_{j \in \ztbtt}x_{j} \in \Z$, weights $w_{j}$ for all $j\in \ztbtt$ and exponents $\mmjj \in \Z$ for all $j\in \ztbtt$ \footnote{Here $m_{0}$ need not be equal to zero.}. Using \Cref{alg:structurerounding}, we can efficiently compute a matrix $z \in [0,1 ]_{\R}^{\ztbtt \times \ztbtt}$ such that the following conditions hold,
	\begin{enumerate}
		\item $\sum_{j\in \ztbtt}z_{i,j} \in \{0,1\} \text{ for all } ~i\in \ztbtt$ and $\sum_{i \in \ztbtt}z_{i,j}=x_{j}~ \text{ for all } ~j\in \ztbtt$.
		\item $\sum_{i\in \ztbtt}\left(\sum_{j\in \ztbtt}z_{i,j}\right)w_{i} \leq \sum_{j\in \ztbtt}x_{j}w_{j} +  \max_{j\in \ztbtt}w_{j}$.
		\item $\prod_{j\in \ztbtt}w_{j}^{\mmjj x_{j}} \leq \prod_{i\in \ztbtt} \prod_{j\in \ztbtt}w_{i}^{\mmjj z_{i,j}}$.
	\end{enumerate}
\end{lemma}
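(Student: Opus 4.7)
The plan is to verify the three stated properties in turn, exploiting the quantile-like structure built into the algorithm. First I will confirm property~1 by direct inspection of the construction. Only rows indexed by $s_1 < s_2 < \cdots < s_a$ are touched (the rest remain zero from initialization), so their row sums are either $0$ or, by the explicit definition of $z_{s_i,s_{i+1}}$ as $1-\sum_{s_i\leq j' <s_{i+1}} z_{s_i,j'}$, exactly $1$. For column sums, I would distinguish three cases: when $j$ lies strictly between two consecutive $s_i$ the unique contribution is $x_j$; when $j=s_1=0$ (forced because $x_0>0$) the only contribution is $z_{s_1,s_1}=c_{s_1}=x_{s_1}$; and when $j=s_i$ for some $i\geq 2$ both $z_{s_{i-1},s_i}$ and $z_{s_i,s_i}$ contribute, summing to $x_{s_i}$ after a short cancellation of the cumulative-sum terms $\sum_{j'\leq s_{i-1}}x_{j'}-(i-2)$ and $\sum_{j'\leq s_i}x_{j'}-(i-1)$.

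For property~2, the starting identity is that $\sum_i (\sum_j z_{i,j})w_i = \sum_{i=1}^a w_{s_i}$ since only the $s_i$-rows have nonzero mass, each equal to $1$. Using the column-sum identity $\sum_i z_{s_i,j}=x_j$ from property~1, I would write
\[
\sum_{i=1}^a w_{s_i}-\sum_j x_j w_j \;=\; \sum_{i,j} z_{s_i,j}(w_{s_i}-w_j).
\]
Since $z_{s_i,j}$ is supported on $j\in[s_i,s_{i+1}]$ and the natural ordering of rows in the PML context makes $w_i$ monotone (non-increasing) in $i$, I have $w_{s_i}-w_j\leq w_{s_i}-w_{s_{i+1}}$ there, and $\sum_j z_{s_i,j}=1$, so the right-hand side telescopes to $w_{s_1}-w_{s_{a+1}}=w_{s_1}-w_k\leq \max_j w_j$. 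This gives the desired bound.

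For property~3, take logarithms so the inequality becomes $\sum_j m_j x_j\log w_j \leq \sum_{i,j} m_j z_{i,j}\log w_i$. Substituting $x_j=\sum_i z_{i,j}$ on the left, the inequality rearranges to $\sum_{i,j} m_j z_{i,j}(\log w_i-\log w_j)\geq 0$. Now the support of $z_{i,j}$ forces $i=s_{i'}\leq j$, and again using monotonicity of the $w$'s we get $\log w_i\geq \log w_j$ termwise, so every summand is non-negative.

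The main obstacle is essentially bookkeeping of the two boundary cases in property~1 (the column $j=s_i$ shared by rows $s_{i-1}$ and $s_i$, and the terminal index $s_{a+1}=k$) and making explicit the monotonicity/non-negativity of the weights $w_i$ that makes properties~2 and~3 go through; these are natural in the rounding context where $w_i$ corresponds to (a function of) the probability values $r_i$ listed in decreasing order and $m_j$ is a frequency, but they must be tracked carefully so that the telescoping bound and the sign of each term in the log-rewriting actually give the stated inequalities.
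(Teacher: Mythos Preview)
Your proposal is correct and follows essentially the same route as the paper's proof: both begin with the (WLOG) assumption that $w_0\ge w_1\ge\cdots\ge w_k$, verify Condition~1 by the same case split on whether $j$ equals some $s_i$, obtain Condition~2 via the identity $\sum_{i,j} z_{s_i,j}(w_{s_i}-w_j)$, the bound $w_{s_i}-w_j\le w_{s_i}-w_{s_{i+1}}$ on the support, and a telescoping sum, and derive Condition~3 from the observation that $z_{i,j}>0$ forces $i\le j$ together with the monotonicity of $w$. Your treatment is in fact slightly more careful about the boundary column $j=s_1=0$ and about explicitly flagging the needed sign assumptions on $w$ and $m_j$, which the paper leaves implicit.
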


\renewcommand{\bttpo}{(\btt+1)}
\renewcommand{\bSp}{\textbf{B}}
\renewcommand{\bS}{\textbf{C}}
\renewcommand{\boo}{t}

We next provide description of \Cref{alg:create}. The algorithm takes input $(\bSp,\bS,\bR,\phi)$ and creates a new probability discretization set $\bR'$ (lines 6-10). The solution $\bSp'$ outputted by the algorithm belongs to $\textbf{Z}^{\phi,frac}_{\bR'}$, has same column sums as $\bSp$ and the value $\bg(\bSp')$ is lower bounded by $\bg(\bSp)$. 
\begin{algorithm}[H]
	\caption{Create New Probability Values}
	\begin{algorithmic}[1]
		\Procedure{$\algcreate$}{$\bSp,\bS,\bR,\phi$}
		\State {\bf Input}: Probability discretization set $\bR$ ($|\bR|=\boo$), profile $\phi$ (let $k$ be the number of distinct frequencies) and $\bSp \in \bZfrac \subseteq \R^{[1,\boo]\times \ztbtt}$ and $\bS\in \R^{[1,\boo]\times \ztbtt}$ such that $\bS_{i,j} \leq \bSp_{i,j}$ for all $i\in [1,\boo]$ and $j\in \ztbtt$. Let $\pvec_{i}$ be the $i$'th element of $\bR$.
		\State {\bf Output}: Probability discretization set $\bR'$ and $\bSp' \in \R^{[1,\boo+\newk]\times \ztbtt}$.
		\State Initialize $\bSp'=\mzero^{[1,\boo+\newk]\times \ztbtt}$.
		\State $\bSp'_{ij}= \bS_{ij}  \text{ for all } i \in [1,\boo],j \in \ztbtt~.$ 
		\For{$j \in \ztbtt$}
		\State Create a new row with probability value $\pvec_{\boo+1+j}=\frac{\sum_{i \in \otboo}(\bSp_{ij}-\bS_{ij})\pvec_{i}}{\sum_{i \in \otboo}(\bSp_{ij}-\bS_{ij})}$.
		\State Assign $\bSp'_{\boo+1+j,j}=\sum_{i \in \otboo}(\bSp_{ij}-\bS_{ij})$.
		\EndFor
		\State Define $\bR' \defeq  \bR \cup \{\pvec_{\boo+1+j} \}_{j \in \ztbtt}$.
		\State {\bf Return:} $\bR'$ and $\bSp'$.
		\EndProcedure
	\end{algorithmic}
	\label{alg:create}
\end{algorithm}

The next lemma summarizes the quality of the solution produced by \Cref{alg:create}.
\begin{lemma}\label{lem:create}
	The solution $(\bR',\bSp')$ returned by \Cref{alg:create} satisfies the following conditions:
	\begin{enumerate}
		\item $\sum_{j\in \ztbtt}\bSp'_{i,j}=\sum_{j\in \ztbtt}\bS_{i,j}$ for all $i\in [1,\boo]$.
		\item For any $i \in [\boo+1,\boo+\newk]$ let $j \in \ztbtt$ be such that $i=\boo+1+j$ then $\bSp'_{\boo+1+j,j'}=0$ for all $j' \in \ztbtt$ and $j' \neq j$. (Diagonal Structure)
		\item For any $i\in [\boo+1,\boo+\newk]$ let $j \in \ztbtt$ be such that $i=\boo+1+j$, then $\sum_{j'\in \ztbtt}\bSp'_{i,j'}=\bSp'_{\boo+1+j,j}=\phi_{j}-\sum_{i' \in [1,\boo]}\bS_{i',j}$.
		\item $\bSp' \in \textbf{Z}^{\phi,frac}_{\bR'}$ and $\sum_{i \in [1,\boo+\newk]} \sum_{j \in \ztbtt} \bSp'_{i,j}=\sum_{i \in [1,\boo]} \sum_{j \in \ztbtt} \bSp_{i,j}$.
		\item Let $\alpha_{i}\defeq \sum_{j\in \ztbtt}\bSp_{i,j} - \sum_{j\in \ztbtt}\bS_{i,j}$ for all $i\in [1,\boo]$ and $\logparam \defeq \max(\sum_{i \in [1,\boo]}(\bSp \onevec)_{i} , \boo \times \btt)$, then 
		$\bg(\bSp') \geq \expo{-O\left(  \sum_{i\in [1,\boo]} \alpha_{i}\log \logparam \right)} \bg(\bSp)~.$
		\item For each $j \in \ztbtt$, the new row corresponds to the probability value,
		$\pvec_{\boo+1+j}=\frac{\sum_{i \in \otboo}(\bSp_{ij}-\bS_{ij})\pvec_{i}}{\sum_{i \in \otboo}(\bSp_{ij}-\bS_{ij})}$.
	\end{enumerate}
	\end{lemma}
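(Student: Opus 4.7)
\textbf{Proof plan for Lemma~\ref{lem:create}.} Properties~(1)--(4) and~(6) are essentially bookkeeping from the description of \Cref{alg:create}. Property~(1) is immediate from line~5, which sets rows $1,\ldots,\boo$ of $\bSp'$ equal to the corresponding rows of $\bS$; properties~(2) (diagonal structure) and~(6) (the values $\pvec_{\boo+1+j}$) are read off directly from lines~7--8; and~(3) follows because the unique nonzero entry of new row $\boo+1+j$ equals $\sum_i(\bSp_{ij}-\bS_{ij})$, which collapses to $\phi_j-\sum_i\bS_{ij}$ since $\bSp\in\bZfrac$. For~(4), the column sums now reach $\phi_j$ by combining~(1) and~(3), and the probability-mass constraint $\sum_i \pvec_i'\sum_j\bSp'_{ij}\le 1$ follows from a direct calculation: each new row contributes $\sum_i(\bSp_{ij}-\bS_{ij})\pvec_i$ by the choice of $\pvec_{\boo+1+j}$, so the total probability mass is unchanged from that of $\bSp$; the cell-count identity is similarly direct.

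The substantive claim is~(5). I would set $\beta_{ij}:=\bSp_{ij}-\bS_{ij}$ and $\gamma_j:=\sum_i\beta_{ij}=\bSp'_{\boo+1+j,j}$, and expand $\log\bg(\bSp)-\log\bg(\bSp')$ from the definition of $\bg$. Grouping the terms splits the difference into (i)~the $\log\pvec_i^{\mj}$ contributions $\sum_{i,j}\beta_{ij}\log\pvec_i^{\mj}-\sum_j \mj\gamma_j\log\pvec_{\boo+1+j}$, and (ii)~the remaining $-x\log x$ and row-sum $x\log x$ pieces, which collect into $\sum_{i\in[1,\boo]}\bigl(H(\bSp_{i,\cdot})-H(\bS_{i,\cdot})\bigr)$, where $H(x):=(\sum_jx_j)\log(\sum_jx_j)-\sum_jx_j\log x_j$; the key cancellation is that each new row contributes $-\gamma_j\log\gamma_j$ to the $-x\log x$ term and $+\gamma_j\log\gamma_j$ to the row-sum term, which cancel exactly. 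For~(i), since $\pvec_{\boo+1+j}$ is the $(\beta_{ij}/\gamma_j)$-weighted mean of $\pvec_1,\ldots,\pvec_{\boo}$, Jensen's inequality applied to the concave function $\log$ yields $\gamma_j\log\pvec_{\boo+1+j}\ge \sum_i\beta_{ij}\log\pvec_i$, so (i) is $\le 0$ after multiplying by $\mj\ge 0$.

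It remains to bound~(ii) by $O(\sum_i\alpha_i\log\logparam)$. Writing $A_i:=\sum_j\bSp_{ij}$ and $B_i:=A_i-\alpha_i$, I would split $H(\bSp_{i,\cdot})-H(\bS_{i,\cdot})$ into the row-sum piece $A_i\log A_i-B_i\log B_i$, controlled by $\alpha_i(\log A_i+1)=O(\alpha_i\log\logparam)$ via convexity of $x\log x$ together with $A_i\le\logparam$, and the coordinate piece $\sum_j(\bS_{ij}\log\bS_{ij}-\bSp_{ij}\log\bSp_{ij})$. The latter has at most $\btt+1$ nonzero summands per row, each bounded by a constant using $-x\log x\le 1/e$ on $[0,1]$ (for the entries with $\bS_{ij}=0$) together with convexity-based tangent bounds (for the entries with $\bS_{ij}>0$); only rows with $\alpha_i>0$ contribute a nonzero amount, so summing over $i$ yields an additive loss of $O(\boo\btt)$ which can be absorbed into $O(\sum_i\alpha_i\log\logparam)$ using $\log\logparam\ge \log(\boo\btt)$. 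The main obstacle will be this coordinate piece: entries $\bS_{ij}$ may vanish or be arbitrarily small, so the natural tangent-line bound $\bS_{ij}\log\bS_{ij}-\bSp_{ij}\log\bSp_{ij}\le -\beta_{ij}(\log\bS_{ij}+1)$ can degenerate, and obtaining the clean $O(\alpha_i\log\logparam)$ estimate per row requires a careful case split on $\{j:\bS_{ij}=0\}$ versus $\{j:\bS_{ij}>0\}$ and a delicate combination of the two pieces.
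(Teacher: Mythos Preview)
Your treatment of conditions~(1)--(4) and~(6), as well as part~(i) of condition~(5) (the probability term via Jensen/AM--GM) and the row-sum piece of~(ii), is correct and matches the paper's argument.

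The gap is in your handling of the coordinate piece $\sum_j(\bS_{ij}\log\bS_{ij}-\bSp_{ij}\log\bSp_{ij})$. Your claim that the total contribution is $O(\boo\btt)$ and can then be ``absorbed into $O(\sum_i\alpha_i\log\logparam)$ using $\log\logparam\ge\log(\boo\btt)$'' is false: the absorption would require $\boo\btt\le C\cdot\big(\sum_i\alpha_i\big)\log(\boo\btt)$, which places a \emph{lower} bound on $\sum_i\alpha_i$ that nothing in the hypotheses guarantees. (Take all $\alpha_i$ small and $\boo,\btt$ large.) Moreover, the per-entry bound ``$\le 1/e$'' only holds when $\bSp_{ij}\le 1$, and the tangent bound $-\beta_{ij}(\log\bS_{ij}+1)$ degenerates exactly in the regime you flag; the case split you propose does not by itself produce a bound proportional to $\alpha_i$.

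The paper sidesteps the whole issue with a single uniform inequality that you are missing: writing $\alpha_{ij}:=\bSp_{ij}-\bS_{ij}\ge 0$,
\[
\bS_{ij}\log\bS_{ij}-\bSp_{ij}\log\bSp_{ij}
=\bS_{ij}\log\frac{\bS_{ij}}{\bS_{ij}+\alpha_{ij}}-\alpha_{ij}\log(\bS_{ij}+\alpha_{ij})
\le -\alpha_{ij}\log\alpha_{ij},
\]
since the first summand is $\le 0$ and $\bS_{ij}+\alpha_{ij}\ge\alpha_{ij}$. This bound depends only on $\alpha_{ij}$, so no case analysis on $\bS_{ij}$ is needed. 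The coordinate piece is then controlled by $\sum_{i,j}(-\alpha_{ij}\log\alpha_{ij})$, and the standard entropy decomposition
\[
\sum_j(-\alpha_{ij}\log\alpha_{ij})=\alpha_i\Big(\sum_j\!-\tfrac{\alpha_{ij}}{\alpha_i}\log\tfrac{\alpha_{ij}}{\alpha_i}\Big)-\alpha_i\log\alpha_i
\le \alpha_i\log(\btt{+}1)-\alpha_i\log\alpha_i,
\]
followed by the analogous decomposition over $i$, yields $O\big(\sum_i\alpha_i\log(\boo\btt)\big)\le O\big(\sum_i\alpha_i\log\logparam\big)$ directly. Replacing your ``constant-per-entry plus absorption'' step with this inequality closes the gap and gives the stated bound.
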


\renewcommand{\bSp}{\textbf{S}}
\renewcommand{\bS}{\textbf{A}}
\renewcommand{\boo}{\ell}

In the remainder of this section, we state and analyze our rounding algorithm. Our algorithm works in three steps, and we show that all the solutions produced during the intermediate and final steps all have the desired approximation guarantee. We divide the analysis into three lemmas. Each of the lemmas \ref{lem:stepone}, \ref{lem:steptwo} and \ref{thm:stepthree} analyze the guarantees provided by the intermediate solutions $\bSone$, $\bStwo$ and final solution $\bSext$ respectively. 

\begin{algorithm}[H]
	\caption{Rounding Algorithm}
	\begin{algorithmic}[1]
		\Procedure{Rounding}{$\bSp$}
		\State {\bf Input}: Probability discretization set $\bR$, profile $\phi \in \Phi^{n}$ and $\bSp \in \bZfrac \subseteq \R^{[1,\ell]\times \ztbtt}$.
		\State {\bf Output}: Probability discretization set $\bRext$ and $\bSext$.
		\State {\bf Step 1}:
		\State Initialize $\ma=\mzero^{[1,\boo]\times \ztbtt}$. Let $\pvec_{i}$ be the $i$'th element of $\bR$.
		\State Define $\bH\defeq \{i\in [1,\ell]~|~\pvec_{i}>\tsqn \}$ and $\bL\defeq \{i\in [1,\ell]~|~\pvec_{i}\leq \tsqn \}$.
		\State $\ma_{ij}= \floor{\bSp_{ij}}  \text{ for all } i \in \bH,j \in \ztbtt~.$ 
		\State $\ma_{ij}= \bSp_{i,j}\frac{\floor{\sum_{i\in \bL}\bSp_{i,j}}}{\sum_{i\in \bL}\bSp_{i,j}}  \text{ for all } i \in \bL,j \in \ztbtt~.$
		\State $(\bSone,\bRone)=\algcreate(\bSp,\ma,\bR)$.
		\State {\bf Step 2}:
		\State Note $|\bRone|=\ell+\newk$ and $\bSone \subseteq \R^{[1,\ell+\newk]\times [0,k]}$. Let $\pvecone_{i}$ be the $i$'th element of $\bRone$. 
		\State Let $\bHone\defeq \{i\in [1,\ell+\newk]~|~\pvecone_{i}>\tsqn \}$ and $\bLone\defeq \{i\in [1,\ell+\newk]~|~\pvecone_{i}\leq \tsqn \}$.
		\State Define $\bAone=\mzero^{[1,\boo+\newk]\times \ztbtt}$.
		\State $\bAone_{ij}= \bSone_{ij}  \quad \text{ for all } i \in \bHone,j \in \ztbtt~.$
		\State $\bAone_{ij}= \bSone_{ij} \frac{\floor{(\bSone \onevec)_{i}}}{(\bSone \onevec)_{i}} \quad \text{ for all } i \in \bLone,j \in \ztbtt~.$
		\State $(\bStwo,\bRtwo)=\algcreate(\bSone,\bAone,\bRone)$.
		\State {\bf Step 3}:
		\State Note $|\bRtwo|=\ell+2\newk$ and $\bStwo \subseteq \R^{[1,\ell+2\newk]\times [0,k]}$. Let $\pvectwo_{i}$ be the $i$'th element of $\bRtwo$.
		\State Let $w,x \in \R^{\ztbtt}$, such that $w_{j}\defeq \pvectwo_{\boo+\newk+1+j}$ and $x_{j}\defeq \bStwo_{\boo+\newk+1+j}-\floor{\bStwo_{\boo+\newk+1+j}}$ for all $j \in \ztbtt$. Define $a\defeq \sum_{j\in \ztbtt}x_{j}$.
		\State Let $(z,s)\defeq \mathrm{StructuredRounding}(x,w,a)$.
		\State Initialize $\bSext = 0^{[1,\ell+2\newk]\times [0,k]}$.
		\State Assign $\bSext_{i,j}=\bStwo_{i,j}$ for all $i \in [1,\boo+\newk]$ and $j \in \ztbtt$.
		\State Assign $\bSext_{\boo+\newk+1+j,j'}=\floor{\bStwo_{\boo+\newk+1+j,j'}}+z_{j,j'}$ for all $j,j' \in \ztbtt$.
		\State Define $\bRext \defeq \{\frac{\pvectwo_{i}}{1+\tsqn}~|~\text{for all } i \in [1,\boo+2\newk]  \}$.
		\State \textbf{return} $\bRext$ and $\bSext$.
		\EndProcedure
	\end{algorithmic}
	\label{alg:rounding}
\end{algorithm}

The next lemma summarizes the quality of the intermediate solution $(\bSone,\bRone)$ produced by Step 1 of \Cref{alg:rounding}.
\begin{lemma}\label{lem:stepone}
	The solution $(\bSone,\bRone)$ returned by the step 1 of \Cref{alg:rounding} satisfies the following:
	\begin{enumerate}
		\item $(\bSone \onevec)_{i} \in \Z$ for all $i\in \bHone$.
		\item $\bSone \in \bZone$ and $\sum_{i \in [1,\boo+\newk]} \sum_{j \in \ztbtt} \bSone_{i,j}=\sum_{i \in [1,\boo]} \sum_{j \in \ztbtt} \bSp_{i,j}$.
		\item $\bg(\bSone) \geq\expo{-O\left(\left(\tsqni+\btt \right)\log \logparam \right)} \bg(\bSp)$, where $\logparam= \max (\sum_{i \in [1,\boo]}(\bSp \onevec)_{i}, \boo \times \btt )$.
	\end{enumerate}
	\end{lemma}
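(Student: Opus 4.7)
The plan is to verify each of the three conclusions in turn, invoking \Cref{lem:create} to handle the structural facts about $\bSone$ and then bounding the loss explicitly. Throughout, recall that Step~1 first defines $\ma$ by taking floors on rows in $\bH$ (indices with $\pvec_i > \tsqn$) and by rescaling columns uniformly across $\bL$ (indices with $\pvec_i\le\tsqn$) so that each column restricted to $\bL$ has an integer sum, and then calls $\algcreate(\bSp,\ma,\bR)$ to obtain $(\bSone,\bRone)$. I must first check the precondition $\ma_{ij}\le\bSp_{ij}$ needed to apply \Cref{lem:create}; this is immediate on $\bH$ since $\floor{\bSp_{ij}}\le\bSp_{ij}$, and on $\bL$ since the scaling factor $\floor{\sum_{i'\in\bL}\bSp_{i'j}}/\sum_{i'\in\bL}\bSp_{i'j}$ lies in $[0,1]$.

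For Part~1, I need $(\bSone\onevec)_i\in\Z$ for every $i\in\bHone$. Old rows $i\in\bH\subseteq\bHone$ have $\bSone_{ij}=\ma_{ij}=\floor{\bSp_{ij}}\in\Z$, so their row sums are integral. Newly created rows, indexed by $i=\boo+1+j$ for $j\in\ztbtt$, have row sum $\phi_j-\sum_{i'\in[1,\boo]}\ma_{i',j}$ by \Cref{lem:create} part~3. Here $\sum_{i'\in\bH}\ma_{i',j}=\sum_{i'\in\bH}\floor{\bSp_{i'j}}\in\Z$, and by construction $\sum_{i'\in\bL}\ma_{i',j}=\floor{\sum_{i'\in\bL}\bSp_{i'j}}\in\Z$, while $\phi_j\in\Z$; hence these row sums are integral regardless of which side of $\tsqn$ the new probability lands on. The only rows with possibly non-integer sums are the $i\in\bL$, which all lie in $\bLone$, so Part~1 holds.

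For Part~2, feasibility $\bSone\in\bZone$ and the total-mass equality follow immediately from \Cref{lem:create} part~4, once the precondition above is in hand.

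For Part~3, I apply \Cref{lem:create} part~5 with $\alpha_i=(\bSp\onevec)_i-(\ma\onevec)_i$; it suffices to show $\sum_{i\in[1,\boo]}\alpha_i=O(\tsqni+\btt)$, so that the exponent in the loss becomes $O((\tsqni+\btt)\log\logparam)$. I split the sum. On $\bH$, using $\ma\ge 0$, I have $\alpha_i\le(\bSp\onevec)_i$; combining with the distributional constraint $\sum_{i}\pvec_i(\bSp\onevec)_i\le 1$ and $\pvec_i>\tsqn$ for $i\in\bH$ gives
\begin{equation*}
\sum_{i\in\bH}\alpha_i\ \le\ \sum_{i\in\bH}(\bSp\onevec)_i\ \le\ \frac{1}{\tsqn}\sum_{i\in\bH}\pvec_i(\bSp\onevec)_i\ \le\ \tsqni.
\end{equation*}
On $\bL$, setting $c_j=\sum_{i'\in\bL}\bSp_{i'j}$, the scaling gives $\alpha_i=\sum_j\bSp_{ij}(c_j-\floor{c_j})/c_j$, and summing first over $i\in\bL$ collapses the denominator:
\begin{equation*}
\sum_{i\in\bL}\alpha_i\ =\ \sum_{j\in\ztbtt}\frac{c_j-\floor{c_j}}{c_j}\sum_{i\in\bL}\bSp_{ij}\ =\ \sum_{j\in\ztbtt}(c_j-\floor{c_j})\ <\ \btt+1.
\end{equation*}
Adding the two bounds yields $\sum_i\alpha_i=O(\tsqni+\btt)$, which together with \Cref{lem:create} part~5 gives Part~3. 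The main subtlety in the argument is the $\bH$ bound: one must resist the naive estimate $\alpha_i<\btt+1$ per row (which would lose a factor of $\ell$) and instead exploit the distributional constraint together with $\pvec_i>\tsqn$ to get the $\tsqni$ bound that the final PML guarantees require.
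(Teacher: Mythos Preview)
Your proof is correct and follows essentially the same approach as the paper's: you verify the precondition of \Cref{lem:create}, establish integrality of the high-probability and newly created rows via the floor and column-floor constructions, invoke \Cref{lem:create} part~4 for feasibility and mass preservation, and bound $\sum_i\alpha_i$ by splitting into $\bH$ (using the distributional constraint with $\pvec_i>\tsqn$ to get $\tsqni$) and $\bL$ (getting $\btt+1$ from the fractional parts). The paper's argument is identical in structure and estimates; your version is slightly more explicit in checking the precondition $\ma_{ij}\le\bSp_{ij}$ and in writing out the $\bL$ computation, but there is no substantive difference.
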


Using \Cref{lem:stepone} we now provide the guarantees for the solution $\bStwo$ returned by the step 2 of \Cref{alg:rounding}.
\begin{lemma}\label{lem:steptwo}
	The solution $(\bStwo,\bRtwo)$ returned by the step 2 of \Cref{alg:rounding} satisfies the following,
	\begin{enumerate}
		\item $(\bStwo \onevec)_{i} \in \Z$ for all $i \in [1,\boo+\newk]$.
		\item $\bStwo_{\boo+\newk+1+j,j'}=0$ for all $j,j' \in \ztbtt$ and $j\neq j'$ (Diagonal Structure).
		\item $\bStwo \in \bZtwo$ and $\sum_{i \in [1,\boo+2\newk]} \sum_{j \in \ztbtt} \bStwo_{i,j}=\sum_{i \in [1,\boo+\newk]} \sum_{j \in \ztbtt} \bSone_{i,j}$.
		\item $\sum_{i \in [\boo+\newk+1,\boo+2\newk]} (\bStwo \onevec)_{i} \in \Z$.
		\item For any $j \in \ztbtt$, $\pvectwo_{\boo+\newk+1+j} \leq \tsqn$.
		\item $\bg(\bStwo) \geq\expo{-O\left(\left(\tsqni+\boo+\btt \right)\log \logparam \right)} \bg(\bSp)$.
	\end{enumerate}
\end{lemma}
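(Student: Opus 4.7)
The plan is to verify each of the six claims by combining the construction of Step 2 with the structural guarantees of \Cref{lem:create} and \Cref{lem:stepone}. Throughout, let $(\bSone,\bRone)$ be the Step 1 output, so $|\bRone|=\ell+\newk$; let $\bAone$ be the matrix built in Step 2, and let $(\bStwo,\bRtwo)=\algcreate(\bSone,\bAone,\bRone)$, so that $\bStwo$ has old row indices $[1,\ell+\newk]$ (inherited from $\bSone$) and new row indices $[\ell+\newk+1,\ell+2\newk]$.

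For the first three claims, observe the following. By construction, for $i\in\bHone$ we have $\bAone_{ij}=\bSone_{ij}$, so $(\bAone\onevec)_i=(\bSone\onevec)_i\in\Z$ by \Cref{lem:stepone}(1); for $i\in\bLone$ we have $(\bAone\onevec)_i=\lfloor(\bSone\onevec)_i\rfloor\in\Z$ by the scaling. By \Cref{lem:create}(1) applied to $\algcreate(\bSone,\bAone,\bRone)$, the old-row sums are preserved, $(\bStwo\onevec)_i=(\bAone\onevec)_i\in\Z$ for every $i\in[1,\ell+\newk]$, giving claim 1. Claims 2 and 3 are immediate from \Cref{lem:create}(2) and \Cref{lem:create}(4) respectively, using the total-mass identity to chain through Step 1 and then invoke \Cref{lem:stepone}(2).

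Claim 4 follows by a mass-counting argument: the total mass $\sum_{i,j}\bStwo_{ij}=\sum_{j\in\ztbtt}\phi_j+\phi_0'$ (where the contribution of column $j=0$ is an integer since $\bStwo\in\bZtwo$), and the old-row contribution $\sum_{i\in[1,\ell+\newk]}(\bStwo\onevec)_i$ is a sum of integers by claim 1, so the new-row contribution is the difference of two integers. For claim 5, by \Cref{lem:create}(6) the new probability value is $\pvectwo_{\ell+\newk+1+j}=\frac{\sum_{i\in[1,\ell+\newk]}(\bSone_{ij}-\bAone_{ij})\pvecone_i}{\sum_{i\in[1,\ell+\newk]}(\bSone_{ij}-\bAone_{ij})}$; since $\bSone_{ij}-\bAone_{ij}=0$ for $i\in\bHone$, the weighted average involves only $i\in\bLone$, all of which satisfy $\pvecone_i\leq\tsqn$, giving the bound.

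The main work is claim 6, and this is the step I expect to be most delicate since it governs the overall approximation factor of the rounding. By \Cref{lem:create}(5) we have $\bg(\bStwo)\geq\expo{-O(\sum_i \alpha_i\log\logparam)}\bg(\bSone)$ where $\alpha_i=(\bSone\onevec)_i-(\bAone\onevec)_i$. For $i\in\bHone$ this vanishes, and for $i\in\bLone$ it equals the fractional part $(\bSone\onevec)_i-\lfloor(\bSone\onevec)_i\rfloor<1$, so $\sum_i\alpha_i<|\bLone|\leq\ell+\newk=O(\ell+k)$. Composing with \Cref{lem:stepone}(3), which gives $\bg(\bSone)\geq\expo{-O((\tsqni+k)\log\logparam)}\bg(\bSp)$, and noting that the parameter $\logparam$ only shifts by a factor hidden in $\log\logparam$ (since the total mass is unchanged by \Cref{lem:stepone}(2) and the dimension grows by only $\newk$), yields $\bg(\bStwo)\geq\expo{-O((\tsqni+\ell+k)\log\logparam)}\bg(\bSp)$, completing the claim.
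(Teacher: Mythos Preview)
Your proof is correct and follows essentially the same route as the paper: invoke \Cref{lem:create}(1)--(6) for the Step-2 call $\algcreate(\bSone,\bAone,\bRone)$ and chain with \Cref{lem:stepone} for the composite bound in claim 6, where the key estimate $\sum_i\alpha_i\le|\bLone|\le\ell+\newk$ is exactly the paper's. One small slip in your claim 4: the justification that the column-$0$ contribution is an integer ``since $\bStwo\in\bZtwo$'' is not valid, because $\bZfrac$ (\Cref{def:bzfrac}) constrains only the column sums for $j\in[1,k]$; the paper instead writes the new-row total explicitly as $\sum_{j\in\ztbtt}\phi_j-\bigl(\sum_{i\in\bHone}(\bSone\onevec)_i+\sum_{i\in\bLone}\lfloor(\bSone\onevec)_i\rfloor\bigr)$ and asserts each $\phi_j\in\Z$.
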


Using \Cref{lem:steptwo} we now provide the guarantees for the final solution $\bSext$ returned by \Cref{alg:rounding}.
\begin{thm}\label{thm:stepthree}
	The final solution returned $(\bSext,\bRext)$ by \Cref{alg:rounding} satisfies the following,
	\begin{enumerate}
		\item $\bSext \in \bZext$.
		\item $\bg(\bSext) \geq \expo{-O\left(\left(\tsqni+\boo+k+\tsqn n \right)\log \logparam \right)} \bg(\bSp)$.
	\end{enumerate}
	\end{thm}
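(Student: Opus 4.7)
The plan is to verify the two claims in turn, leveraging \Cref{lem:steptwo} together with the three properties of \Cref{alg:structurerounding} summarized in \Cref{lem:structured}. Write $r_j\defeq \boo+\newk+1+j$, $\alpha_j\defeq \floor{\bStwo_{r_j,j}}$, $x_j\defeq \bStwo_{r_j,j}-\alpha_j$, and $\beta_j\defeq \sum_{j'} z_{j,j'}\in\{0,1\}$ for each $j\in\ztbtt$. For feasibility ($\bSext\in\bZext$): rows $i\in[1,\boo+\newk]$ are copied from $\bStwo$ and inherit integer row sums by item 1 of \Cref{lem:steptwo}, while each tail row $r_j$ has sum $\alpha_j+\beta_j\in\Z$ by item 1 of \Cref{lem:structured}. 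For column sums, the diagonal structure of $\bStwo$ (item 2 of \Cref{lem:steptwo}) gives $\sum_{j'}\floor{\bStwo_{r_{j'},j}}=\alpha_j$, and $\sum_{j'} z_{j',j}=x_j$ by item 1 of \Cref{lem:structured}, so $\sum_i \bSext_{i,j}=\sum_i\bStwo_{i,j}=\phi_j$. Finally, item 2 of \Cref{lem:structured} combined with $\max_j \pvectwo_{r_j}\le \tsqn$ (item 5 of \Cref{lem:steptwo}) yields $\sum_i \pvectwo_i (\bSext\onevec)_i \le \sum_i \pvectwo_i (\bStwo\onevec)_i + \tsqn \le 1+\tsqn$, and dividing by $1+\tsqn$ gives the required distributional bound under $\pvecext$.

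For the approximation guarantee I would split $\log \bg(\bSext)-\log \bg(\bStwo)$ into the cost of rescaling $\pvectwo\to \pvecext=\pvectwo/(1+\tsqn)$ and the cost of the tail-row modifications. The rescaling contributes exactly $-\log(1+\tsqn)\sum_{i,j}\bSext_{i,j}m_j$, which equals $-n\log(1+\tsqn)\ge -n\tsqn$ since the column sums are $\phi_j$ for $j\in[1,k]$ and $m_0=0$. Rows in $[1,\boo+\newk]$ contribute nothing to the tail-row term because $\bSext$ and $\bStwo$ agree there. On each row $r_j$, the diagonal structure makes the entropy and row-sum contributions of $\bStwo$ cancel, leaving only $\bStwo_{r_j,j} m_j \log \pvectwo_{r_j}$; a direct computation shows that the total tail-row difference (under $\pvectwo$) between $\bSext$ and $\bStwo$ equals a ``power part'' $\sum_{j,j'} m_{j'} z_{j,j'}\log \pvectwo_{r_j}-\sum_j m_j x_j \log \pvectwo_{r_j}$, which is non-negative by item 3 of \Cref{lem:structured} in log form, plus an ``entropy-plus-row-sum part'' $\sum_j[(\alpha_j+\beta_j)\log(\alpha_j+\beta_j)-(\alpha_j+z_{j,j})\log(\alpha_j+z_{j,j})-\sum_{j'\ne j} z_{j,j'}\log z_{j,j'}]$, which is non-negative by the case analysis below. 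Consequently $\log \bg(\bSext)\ge \log \bg(\bStwo)-n\tsqn$, and chaining with item 6 of \Cref{lem:steptwo} yields $\bg(\bSext)\ge \expo{-O((\tsqni+\boo+k+n\tsqn)\log\logparam)}\bg(\bSp)$ after absorbing the standalone $n\tsqn$ into $(n\tsqn)\log\logparam$.

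The main obstacle is verifying the sign of the entropy-plus-row-sum part term by term. When $\alpha_j\ge 1$, both $\alpha_j+\beta_j$ and $\alpha_j+z_{j,j}$ lie in $[1,\infty)$, so monotonicity of $x\log x$ combined with $\beta_j\ge z_{j,j}$ gives $(\alpha_j+\beta_j)\log(\alpha_j+\beta_j)\ge(\alpha_j+z_{j,j})\log(\alpha_j+z_{j,j})$, and the inequality $-z_{j,j'}\log z_{j,j'}\ge 0$ on $[0,1]$ disposes of the off-diagonal terms. When $\alpha_j=0$ one splits on $\beta_j\in\{0,1\}$: if $\beta_j=0$ all $z_{j,j'}$ vanish and the bracket is $0$; if $\beta_j=1$ the row-sum term becomes $1\log 1=0$ and the bracket reduces to $-\sum_{j'} z_{j,j'}\log z_{j,j'}\ge 0$. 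This resolves every case, after which the rest of the proof is routine bookkeeping.
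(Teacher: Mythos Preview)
Your proposal is correct and follows essentially the same approach as the paper's proof: feasibility is checked via the same three ingredients (inherited integer row sums, column sums preserved by $\sum_{j'} z_{j',j}=x_j$, and the distributional bound via item~2 of \Cref{lem:structured} plus $\max_j w_j\le\tsqn$), and the approximation bound is obtained by isolating the $n\log(1+\tsqn)$ rescaling cost, using item~3 of \Cref{lem:structured} for the power part, and the same $\{0,1\}$ case analysis on $\beta_j$ for the entropy-plus-row-sum part. Your case split on $\alpha_j\ge 1$ versus $\alpha_j=0$ is just a reordering of the paper's split on $\beta_j\in\{0,1\}$ and yields the identical inequalities.
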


	\subsection{Combining everything together}\label{subsec:combine}
	Here we combine the analysis from previous two sections to provide an efficient algorithm to compute an $\expo{\sqrt{n}\log n}$ approximate PML distribution. The main contribution of this section is to define a probability discretization set $\bR$ that guarantees existence of a discrete pseudo-distrbution $\bq$ with respect to $\bR$ that is also an $\expo{\sqrt{n}\log n}$ approximate PML pseudo-distribution. We further use this probability discretization set $\bR$ and combine it with results from the previous two sections to finally output an $\expo{\sqrt{n}\log n}$ approximate PML distribution. In this direction, we first provide definition of $\bR$ that has desired guarantees and such a set $\bR$ was already constructed in \cite{CSS19} and we formally state results from \cite{CSS19} that help us define such a set $\bR$. 
	
\begin{lemma}[Lemma 4.1 in \cite{CSS19}]\label{lemminmain}
	For any profile $\phi \in \Phi^{n}$, there exists a distribution $\bp'' \in \simplex$ such that $\bp''$ is an $\expo{-6}$-approximate PML distribution and $\min_{x \in \bX:\bp''_x \neq 0}\bp''_x \geq \frac{1}{2n^2}$.
\end{lemma}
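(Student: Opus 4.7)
The plan is to construct $\bp''$ from any optimal PML distribution $\bp^*$ by isolating the set $U \defeq \{x \in \bX : 0 < \bp^*_x < 1/(2n^2)\}$ of atoms carrying dangerously small mass $T \defeq \sum_{x \in U} \bp^*_x$, and then either (a) zeroing out $U$ and renormalizing when $T \leq 1/2$, or (b) replacing $\bp^*$ by a carefully chosen near-uniform distribution on a subset of $\bX$ of size $\Theta(n^2)$ in the edge case $T > 1/2$. The minimum-probability guarantee $\min_{x : \bp''_x \neq 0} \bp''_x \geq 1/(2n^2)$ is immediate in either sub-case; the substance lies in bounding the loss in the profile likelihood.

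For the main case, I would set $\bp''_x = 0$ on $U$ and $\bp''_x = \bp^*_x/(1-T)$ otherwise. Using $\probpml(\bp,\phi) \propto \perm(\ma^{\bp,\phi})$ with $\ma^{\bp,\phi}_{x,y} = \bp_x^{m_y}$ and expanding the permanent as a sum over permutations $\sigma : \bX \to \bX$, only the ``good'' permutations satisfying $\sigma(U) \subseteq \{y : m_y = 0\}$ survive under $\bp''$, and each picks up a factor $(1-T)^{-n} \geq 1$ because $\sum_x m_{\sigma(x)} = n$. The key step is to show that these good permutations already account for a constant fraction of $\perm(\ma^{\bp^*,\phi})$: every ``bad'' $\sigma$ has some $x \in U$ matched to a seen column $y$ with $m_y \geq 1$, contributing a factor $(\bp^*_x)^{m_y} \leq 1/(2n^2)$, and a swap exchanging $\sigma(x)$ with some $\sigma(x_0)$ where $x_0 \notin U$ sits over an unseen column yields a good $\sigma'$ whose term is larger by at least $(2n^2\bp^*_{x_0})^{m_y}$. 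After accounting for how many bad permutations can share a given good target, this would bound the total bad contribution by a constant fraction of $\perm(\ma^{\bp^*,\phi})$, and combined with the $(1-T)^{-n} \geq 1$ gain this gives the required $e^{-6}$ bound.

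The edge case $T > 1/2$ calls for a different construction: such a PML must be highly diffuse, so I would define $\bp''$ to be uniform on an explicit subset of $\bX$ of size $2n^2$ (which trivially satisfies the minimum bound) and compare the resulting permanent against $\perm(\ma^{\bp^*,\phi})$ via a Stirling-type estimate, exploiting that diffuse profiles are themselves realized nearly optimally by uniform distributions on supports of roughly this size. The main obstacle I foresee is the quantitative swapping step in the main case---extracting the explicit constant $e^{-6}$ rather than an unspecified $e^{-O(1)}$, and ensuring via pigeonhole on the unseen columns $\phi_0 = |\bX| - \sum_j \phi_j$ that a partner $x_0$ always exists---together with a clean interface between the two regimes so that the whole construction delivers a single distribution $\bp''$ meeting both requirements simultaneously.
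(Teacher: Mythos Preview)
The paper does not prove this lemma; it is quoted verbatim as Lemma~4.1 of \cite{CSS19} and used as a black box. There is therefore no in-paper argument to compare against, and I evaluate your proposal on its own.

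Your case-(a) construction can fail outright. Zeroing out $U$ leaves $\bp''$ supported on the set $L \defeq \{x:\bp^*_x \geq 1/(2n^2)\}$; if $|L| < \sum_{j\geq 1}\phi_j$ then no permutation can place nonzero-probability elements on every seen column, so $\perm(\ma^{\bp'',\phi})=0$. The hypothesis $T\leq 1/2$ does not prevent this: the large atoms carry total mass $\geq 1/2$ but may number as few as one. Your swap step breaks at exactly this point---the required partner $x_0\notin U$ with $\bp^*_{x_0}>0$ on an unseen column need not exist---and even when a partner does exist, the gain $(\bp^*_{x_0}/\bp^*_x)^{m_y}$ can be arbitrarily close to~$1$ (both probabilities may straddle $1/(2n^2)$), so the injection cannot absorb the overcount of bad permutations mapping to a single good one. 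A workable repair is to \emph{redistribute} the mass $T$ uniformly over roughly $\lfloor 2n^2 T\rfloor$ elements of $U$ (and zero the rest) rather than deleting $U$ entirely; this keeps each retained probability $\geq 1/(2n^2)$ while preserving enough support to cover the seen columns. Alternatively, the stationarity of the PML---for every $x$ in the support, $\E_\mu[m_{\sigma(x)}]=n\,\bp^*_x$ under the permanent measure $\mu$---gives direct control over how often small-probability elements are matched to seen columns, and is the cleaner tool for making the ``good permutations dominate'' step quantitative.

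Case~(b) is essentially unargued: the assertion that ``diffuse profiles are realized nearly optimally by uniform distributions on $\Theta(n^2)$ elements'' is precisely the statement that needs a proof, and a Stirling estimate alone does not deliver the specific factor $e^{-6}$.
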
 
 The above lemma allows to define a region in which our approximate PML takes all its probability values and we use idea similar to \cite{CSS19} to define this region.

Let $\bR\defeq\{(1+\epso)^{1-i}\}_{i \in [\boo]}$ be a discretization of probability space, where $\boo=O(\frac{\log n}{\epso})$ is the smallest integer such that $\frac{1}{4n^2} \leq (1+\epso)^{1-\boo}\leq \frac{1}{2n^2}$ for some $\epso \in (0,1)$. Fix any arbitrary order for the elements of set $\bR$, we use $\ri$ to denote the $i$'th element of this set. 
We next state a result in \cite{CSS19} that captures the effect of this discretization.
\renewcommand{\bp}{\textbf{p}}
\begin{lemma}[Lemma 4.4 in \cite{CSS19}]\label{lem:probdisc}
	For any profile $\phi \in \Phi^{n}$ and distribution $\bp \in \simplex$, its discrete pseudo-distribution $\bq=\disc(\bp) \in \dsimplex$ satisfies:
	$$\probpml(\bp,\phi) \geq \probpml(\bq,\phi) \geq \expo{-\epso n}\probpml(\bp,\phi)~.$$
\end{lemma}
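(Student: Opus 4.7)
The plan is to prove both inequalities by term-by-term comparison in the sum
\[
\probpml(\bp, \phi) = \sum_{y^n : \Phi(y^n) = \phi} \bbP(\bp, y^n) = \sum_{y^n : \Phi(y^n) = \phi} \prod_{x \in \bX} \bp_x^{\bff(y^n, x)}
\]
and the analogous expression for $\bq$. Two elementary properties of the coordinate-wise floor map $\bq_x = \floor{\bp_x}_{\bR}$ drive everything. First, by definition of $\floor{\cdot}_{\bR}$, we have $\bq_x \leq \bp_x$ for every $x$. Second, because $\bR = \{(1+\epso)^{1-i}\}_{i\in[\ell]}$ is a geometric sequence with ratio $(1+\epso)$ whose smallest element is at most $1/(2n^2)$, whenever $\bp_x$ lies in the interval $[(1+\epso)^{1-\ell},1]$ one has $\bq_x \geq \bp_x/(1+\epso)$.

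First I would establish $\probpml(\bq, \phi) \leq \probpml(\bp, \phi)$. This is immediate from $\bq_x \leq \bp_x$: for each fixed sequence $y^n$ with $\Phi(y^n)=\phi$,
\[
\bbP(\bq, y^n) = \prod_{x\in\bX} \bq_x^{\bff(y^n,x)} \;\leq\; \prod_{x\in\bX} \bp_x^{\bff(y^n, x)} = \bbP(\bp, y^n),
\]
and summing over all $y^n$ with profile $\phi$ completes the upper inequality.

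For the reverse direction $\probpml(\bq, \phi) \geq e^{-\epso n} \probpml(\bp, \phi)$, I would use the bound $\bq_x \geq \bp_x/(1+\epso)$ for each $x$ with $\bp_x$ at least the minimum element of $\bR$. For any $y^n$ with $\Phi(y^n)=\phi$ whose observed symbols all have $\bp_x \geq 1/(4n^2)$, combining this with $\sum_x \bff(y^n,x) = n$ and $(1+\epso) \leq e^{\epso}$ gives
\[
\bbP(\bq, y^n) \;\geq\; \prod_{x} \left(\frac{\bp_x}{1+\epso}\right)^{\bff(y^n,x)} = (1+\epso)^{-n}\, \bbP(\bp, y^n) \;\geq\; e^{-\epso n}\, \bbP(\bp, y^n),
\]
and summation over such $y^n$ yields the desired lower bound.

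The step that requires care is handling distributions $\bp$ that assign positive but sub-$(1+\epso)^{1-\ell}$ mass to some symbol $x$, because the per-sequence inequality above breaks when $\bq_x = 0$ while $\bp_x > 0$ and $\bff(y^n,x) > 0$. This is the main obstacle, and it is resolved by invoking \Cref{lemminmain}: it suffices to prove the lemma when $\bp$ itself has $\min_{x:\bp_x\neq 0}\bp_x \geq 1/(2n^2)$, because the $e^{-6}$ loss from replacing an arbitrary $\bp$ by such a distribution is absorbed by the overall $e^{-\epso n}$ slack in the downstream application. Under that assumption every positive $\bp_x$ exceeds the smallest element of $\bR$, and the per-sequence comparison applies uniformly, giving the claimed $e^{-\epso n}$ factor.
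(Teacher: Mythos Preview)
The paper does not prove this lemma; it simply cites Lemma~4.4 of \cite{CSS19}. So there is no in-paper proof to compare against. Your core argument---term-by-term comparison using $\bq_x\le\bp_x$ for the upper bound and $\bq_x\ge\bp_x/(1+\epso)$ from the geometric spacing of $\bR$ for the lower bound---is exactly the standard proof and is correct whenever every nonzero $\bp_x$ lies above the minimum element of $\bR$.

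Your identification of the edge case is apt, but your resolution does not prove the lemma as literally stated. Invoking \Cref{lemminmain} replaces an arbitrary $\bp$ by a different distribution $\bp''$ with $\min_{x:\bp''_x>0}\bp''_x\ge 1/(2n^2)$; applying your argument to $\bp''$ yields a bound for $\disc(\bp'')$, not for $\bq=\disc(\bp)$. Indeed, the inequality $\probpml(\bq,\phi)\ge e^{-\epso n}\probpml(\bp,\phi)$ can fail for $\bp$ with sub-threshold mass: if some $\bp_x$ is positive but below $\min\bR$, then $\bq_x=0$, and any profile that forces a symbol of that frequency will have $\probpml(\bq,\phi)=0$ while $\probpml(\bp,\phi)>0$. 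So as a proof of the stated lemma your last paragraph is a gap.

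That said, your workaround is exactly right for how the lemma is used downstream. In the proof of \Cref{thm:second} the only purpose of \Cref{lem:probdisc} is to exhibit \emph{some} $\bq\in\dsimplex$ with $\probpml(\bq,\phi)\ge e^{-O(\sqrt n\log n)}\probpml(\pml,\phi)$, which then lower-bounds $\max_{\bq\in\dsimplex}\probpml(\bq,\phi)$. Taking $\bq=\disc(\bp'')$ from \Cref{lemminmain} and applying your per-sequence bound gives precisely this, with the extra $e^{-6}$ absorbed. So the correct framing is: you have proved the lemma under the additional hypothesis $\min_{x:\bp_x>0}\bp_x\ge\min\bR$, and shown that this restricted version suffices for the paper's application.
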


We are now ready to state our final algorithm. Following this algorithm, we prove that it returns an approximate PML distribution.
\begin{algorithm}[H]
	\caption{Algorithm for approximate PML}\label{euclid}
	\begin{algorithmic}[1]
		\Procedure{Approximate PML}{$\phi,\bR$}
		\State {\bf Input}: Profile $\phi \in \Phi^n$ and probability discretization set $\bR$.
		\State {\bf Output}: A distribution $\bpapprox$.
		\State Solve $\bSp=\argmax_{\bS\in \bZfrac} \log \bg(\bS)$. 
		\State Use \Cref{alg:rounding} to round the fractional solution $\bSp$ to integral solution $\bSext \in \bZext$.
		\State Construct discrete pseudo-distribution $\qext$ with respect to $\bSext$ (See \Cref{defn:distS}).
		\State \textbf{return} $\bpapprox\defeq \frac{\qext}{\|\qext\|_1}$.
		\EndProcedure
	\end{algorithmic}
\label{alg:final}
\end{algorithm}
\renewcommand{\pml}{\textbf{p}_{\mathrm{pml}}}
\secondthm*
\begin{proof}
	Choose $\epso=\frac{\log n}{\sqrt{n}}$ and let the probability discretization space $\bR\defeq\{(1+\frac{1}{\sqrt{n}})^{1-i}\}_{i \in [\boo]}$ and $\boo\defeq |\bR|$ be the smallest integer such that $\frac{1}{2n^2} \geq (1+\frac{1}{\sqrt{n}})^{1-\boo} \geq \frac{1}{4n^2}$ and therefore $\boo \in O(\sqrt{n})$. Let $\pvec_{i}$ be the $i$'th element of set $\bR$ and we have $\pvec_{i} \geq \frac{1}{4n^2}$.
	
	Given profile $\phi$, let $\pml$ be the PML distribution. Define $\dpml \defeq \floor{\pml}_{\bR}$ and by \Cref{lem:probdisc} (and choice of $\epso$) we have,
	\begin{equation}\label{eq:ddone}
	\probpml(\dpml,\phi)\geq \expo{- O(\sqrt{n} \log n)}\probpml(\pml,\phi)~.
	\end{equation}

	Let $\bSp \defeq \argmax_{\bS\in \bZfrac}\bg(\bS)$, then by \Cref{lem:dpmlapprox} we have,
	\begin{equation}\label{eq:ddtwo}
	\max_{\bq \in \dsimplex}\probpml(\bp,\phi) \leq \expo{\bigO{k \log \frac{\Ns}{k}}} \cdot \cphi \cdot \bg(\bSp)~.
	\end{equation}
	Note $\dpml \in \dsimplex$, therefore $\probpml(\dpml,\phi) \leq \max_{\bq \in \dsimplex}\probpml(\bq,\phi)$ and further combined with equations \ref{eq:ddone} and \ref{eq:ddtwo} we have,
	\begin{equation}\label{eq:ddthree}
	\probpml(\pml,\phi) \leq \expo{\bigO{k \log \frac{\Ns}{k} +\sqrt{n} \log n}} \cdot \cphi \cdot \bg(\bSp)~.
	\end{equation}
	Let $\bSext$ and $\bRext$ be the solution returned by \Cref{alg:rounding}, then by the second condition of \Cref{thm:stepthree} we have,
	\begin{equation}\label{eq:ddfour}
	\bg(\bSext) \geq \expo{-O\left(\left(\tsqni+\boo+k+\tsqn n \right)\log \logparam \right)} \bg(\bSp)
	\end{equation}
	Combining equations \ref{eq:ddthree} and \ref{eq:ddfour} we have,
	\begin{equation}\label{eq:z}
		\probpml(\pml,\phi) \leq \expo{\bigO{k \log \frac{\Ns}{k} +\sqrt{n} \log n + \left(\tsqni+\boo+k+\tsqn n \right)\log \logparam }} \cdot \cphi \cdot \bg(\bSext)~.
	\end{equation}
	We now simplify the above expression by providing the bounds and values for parameters $k, \ell,\gamma, \Ns$ and $\Delta$. We choose $\gamma=\frac{1}{\sqrt{n}}$ and recall $\boo \in O(\sqrt{n})$. Given $n$ samples, the number of distinct frequencies in upper bounded by $\sqrt{n}$ and therefore $k \leq \sqrt{n}$. By \Cref{lemminmain}, up to constant multiplicative loss we can assume that the minimum non-zero probability value of our approximate PML distribution is at least $\frac{1}{4n^2}$ and therefore the support $\Ns \leq 4n^2$. Recall by the third condition of \Cref{lem:stepone}, we have $\logparam= \max (\sum_{i \in [1,\boo]}(\bSp \onevec)_{i}, \boo \times \btt )$. The condition $\bSp \in \bZfrac$ implies $\sum_{i \in [1,\boo]}\pvec_{i}(\bSp \onevec)_{i} \leq 1$ and further using $\pvec_{i} \geq \frac{1}{4n^2}$ for all $i \in [1,\boo]$ we have $\sum_{i \in [1,\boo]}(\bSp \onevec)_{i} \leq 4n^2$. Therefore $\logparam \leq \max (4n^2,\ell \times k) \in O(n^2)$.
	
	Substituting these values in \Cref{eq:z} we get,
	\begin{equation}\label{eq:ddfive}
	\probpml(\pml,\phi) \leq \expo{\bigO{\sqrt{n} \log n }} \cdot \cphi \cdot \bg(\bSext)~.
	\end{equation}
	
	By the first condition of \Cref{thm:stepthree} we have $\bSext \in \bZext$. Let $\qext$ be the discrete pseudo-distribution with respect to $\bSext$, then the condition $\bSext \in \bZext$ further implies $\bSext \in \bZqext$ and combined with \Cref{pmlprob:approx} we have,
	\begin{equation}\label{eq:ddsix}
	\expo{-O(k \log (\Ns+n))}\cdot	\cphi \cdot	\bg(\bSext) \leq \probpml(\qext,\phi)
	\end{equation}
	Combining equations \ref{eq:ddfive}, \ref{eq:ddsix}, $k\leq \sqrt{n}$ and $\Ns \leq 4n^2$ we have,
	\begin{equation}\label{eq:ddseven}
	\probpml(\qext,\phi) \geq \expo{-\bigO{\sqrt{n} \log n }} \probpml(\pml,\phi)~.
	\end{equation}
	Define $\bpapprox\defeq \frac{\qext}{\|\qext\|_1}$, then $\bpapprox$ is a distribution, $\probpml(\bpapprox,\phi) \geq \probpml(\qext,\phi)$ (because $\qext$ is a pseudo-distribution and $\|\qext\|_1\leq 1$) and combined with \Cref{eq:ddseven} we get,
	\begin{equation}
	\probpml(\bpapprox,\phi) \geq \expo{-\bigO{\sqrt{n} \log n }} \probpml(\pml,\phi)~.
	\end{equation}
	Therefore $\bpapprox$ is an $\expo{-\bigO{\sqrt{n} \log n }}$-approximate PML distribution. 
	
	In the remainder of the proof we argue about the running time of our final algorithm for approximate PML. Step 4 of the algorithm, that is the convex program $\argmax_{\bS\in \bZfrac} \log \bg(\bS)$ can be solved in $\otilde(k^2 \ell)$ time (See \Cref{thm:convrt}). \Cref{alg:create} ($\algcreate$) and \Cref{alg:structurerounding} ($\mathrm{StructuredRounding}$) can be implemented in $\otilde(k \ell)$ and $\otilde(k^2)$ time respectively; therefore, the \Cref{alg:rounding} (Rounding algorithm) can be implemented in $\otilde(k \ell)$ time. Combining everything together our final algorithm (\Cref{alg:final}) can be implemented in $\otilde(k^2 \ell)$ time. Further using $k,\ell \in O(\sqrt{n})$, we conclude the proof.
\end{proof}

\subsection{Missing Proofs from \Cref{subsec:algmain}}\label{subsec:missing}
Here we provide the proofs for all the lemmas and theorems in \Cref{subsec:algmain}
\begin{proof}[Proof of \Cref{lem:structured}]
	Without loss of generality assume $w_{0} \geq w_{1} \geq w_{2} \dots \geq w_{k}$. Let $a\defeq \sum_{j\in \ztbtt}x_{j} $, we invoke \Cref{alg:structurerounding} with inputs $(x,w,a)$. Let $s\in \Z^{a}$ and $z\in \R^{\ztbtt \times \ztbtt}$ be the output of \Cref{alg:structurerounding}. We now provide the proof for the three conditions in the lemma.
	
	{\bf Condition 1}: By construction of \Cref{alg:structurerounding}, for any $s \in \{s_{i}\}_{i\in [1,a]}$ we have $\sum_{j\in \ztbtt}z_{s,j}=1$ (Line 6) and for any other $s \in \ztbtt \backslash \{s_{i}\}_{i\in [1,a]}$ we have $\sum_{j\in \ztbtt}z_{s,j}=0$. Therefore the first part of condition 1 holds.
	
	For any $j \in \ztbtt$, one of the following two cases holds,
	\begin{enumerate}
		\item If $j\in \{s_{i}\}_{i\in [1,a]}$ and in this case let $i \in [1,a]$ be such that $s_{i}=j$. By line 6 (third case) of the algorithm we have,
		\begin{equation}\label{eq:fone}
		z_{s_{i-1},j}=1-\left(\sum_{j'\leq s_{i-1}}x_{j'}-(i-2)+\sum_{s_{i-1}<j'<s_{i}}x_{j'} \right)=(i-1)-\sum_{j'<s_{i}}x_{j'}~.
		\end{equation}
		We now analyze the term $\sum_{i'\in \ztbtt}z_{i',j}$,
		$$\sum_{i'\in \ztbtt}z_{i',j}=z_{s_{i},j}+z_{s_{i-1},j}=\sum_{j'\leq s_{i}}x_{j'}-(i-1)+(i-1)-\sum_{j'<s_{i}}x_{j'}=x_{s_{i}}=x_{j}~.$$
		The first equality follows because for $i' \in \ztbtt \backslash \{s_{i},s_{i-1}\}$ we have $z_{i',j}=0$ and this follows by the second and third case in line 6 of the algorithm. In the second equality we substituted values for $z_{s_{i},s_{i}}$ and $z_{s_{i-1},s_{i}}$ using second case (Line 6) and \Cref{eq:fone} respectively.
		\item Else $j\in \ztbtt\backslash \{s_{i}\}_{i\in [1,a]}$, and in this case let $i \in [1,a]$ be such that $s_{i}<j<s_{i+1}$. Then by the first case in line 6 of the algorithm we have,
		$$\sum_{i'\in \ztbtt}z_{i',j}=z_{s_{i},j}=x_{j}~.$$
	\end{enumerate}
	{\bf Condition 2:} Consider $\sum_{i\in \ztbtt}\left(\sum_{j\in \ztbtt}z_{i,j}\right)w_{i}$,
	\begin{equation}\label{eq:gone}
	\begin{split}
	\sum_{i\in \ztbtt}\left(\sum_{j\in \ztbtt}z_{i,j}\right)w_{i}&=\sum_{i\in [1,a]}\left(\sum_{s_{i}\leq j  \leq s_{i+1}}z_{\si,j}\right)w_{\si} \leq \sum_{i\in [1,a]}\sum_{s_{i}\leq j  \leq s_{i+1}}z_{\si,j}(w_{j}+w_{s_{i}}-w_{s_{i+1}})\\
	&\leq \sum_{i\in [1,a]}\sum_{s_{i}\leq j  \leq s_{i+1}}z_{\si,j}w_{j}+\sum_{i\in [1,a]}\sum_{s_{i}\leq j  \leq s_{i+1}}z_{\si,j}(w_{s_{i}}-w_{s_{i+1}})\\
	&= \sum_{i\in [1,a]}\sum_{j  \in \ztbtt}z_{\si,j}w_{j}+\sum_{i\in [1,a]}\sum_{s_{i}\leq j  \leq s_{i+1}}z_{\si,j}(w_{s_{i}}-w_{s_{i+1}})~.\\
	\end{split}
	\end{equation}
	The first equality follows because rest of the other entries are zero. In the second inequality we used $j \leq s_{i+1}$ and therefore $w_{j} \geq w_{s_{i+1}}$ by our assumption at the beginning of the proof. In the remainder, we simplify both the terms. Consider the first term in the final expression above,
	\begin{equation}\label{eq:gtwo}
	\sum_{i\in [1,a]}\sum_{j  \in \ztbtt}z_{\si,j}w_{j}=\sum_{j\in \ztbtt}w_{j}\sum_{i  \in [1,a]}z_{\si,j}=\sum_{j\in \ztbtt}w_{j}x_{j}~.
	\end{equation}
	In the first equality we interchanged the summations. In the second equality we used $\sum_{i  \in [1,a]}z_{\si,j}=\sum_{i' \in \ztbtt}z_{i',j}$ and further invoked condition 1 of the lemma. Now consider the second term in the final expression of \Cref{eq:gone},
	\begin{equation}\label{eq:gthree}
	\begin{split}
	\sum_{i\in [1,a]}\sum_{s_{i}\leq j  \leq s_{i+1}}z_{\si,j}(w_{s_{i}}-w_{s_{i+1}})&= \sum_{i\in [1,a]}(w_{s_{i}}-w_{s_{i+1}})\sum_{s_{i}\leq j  \leq s_{i+1}}z_{\si,j}
	= \sum_{i\in [1,a]}(w_{s_{i}}-w_{s_{i+1}})\\
	&= (w_{s_{1}}-w_{s_{x+1}})
	\leq \max_{j\in \ztbtt}w_{j}~.
	\end{split}
	\end{equation}
	The second equality follows by line 6 of the algorithm.	Condition 2 follows by combining equations \ref{eq:gone}, \ref{eq:gtwo} and \ref{eq:gthree}.
	
	{\bf Condition 3:} First we show that $z_{i,j}>0$ implies $i\leq j$. Consider $j \in \ztbtt$,
	\begin{enumerate}
		\item If $j\in \{s_{i}\}_{i\in [1,a]}$, in this case let $i \in [1,a]$ be such that $s_{i}=j$. Then by the second and third case in line 6 of the algorithm we have,
		$z_{i',j}>0$ implies $i'\in \{s_{i},s_{i-1}\}$. Further, using $s_{i-1}<s_{i}$ and $s_{i}=j$ we have $i' \leq j$.
		\item Else $j\in \ztbtt\backslash \{s_{i}\}_{i\in [1,a]}$ and in this case let $i \in [1,a]$ be such that $s_{i}<j<s_{i+1}$. Then by the first case in line 6 of the algorithm we have, $z_{i',j}>0$ implies $i'=s_{i}$. Further, using $s_{i}<j$ we have $i'<j$.
	\end{enumerate}
	Using the above implication we have,
	\begin{equation}
	\begin{split}
	\prod_{j\in \ztbtt}w_{j}^{\mmjj x_{j}}&=\prod_{j\in \ztbtt}w_{j}^{\mmjj \sum_{i\in \ztbtt}z_{i,j}}=\prod_{i\in \ztbtt}\prod_{j\in \ztbtt}w_{j}^{\mmjj z_{i,j}}\leq \prod_{i\in \ztbtt}\prod_{j\in \ztbtt}w_{i}^{\mmjj z_{i,j}}
	\end{split}
	\end{equation}
	In the first equality we used $x_{j}=\sum_{i\in \ztbtt}z_{i,j}$ for all $j \in \ztbtt$ (Condition 1). In the final inequality, we used the result $z_{i,j}>0$ implies $i\leq j$ and further combined it with the assumption at the begining of the proof, that is, $w_{i}\geq w_{j}$ for all $i,j \in \ztbtt$ and $i \leq j$.
\end{proof}

\renewcommand{\bttpo}{(\btt+1)}
\renewcommand{\bSp}{\textbf{B}}
\renewcommand{\bS}{\textbf{C}}
\renewcommand{\boo}{t}

\begin{proof}[Proof of \Cref{lem:create}] 
	Define $\phi_{0}\defeq \sum_{i \in [1,\boo]} \bSp_{i,0}$. In the following we provide the proof for each case.
	
	{\bf Condition 1:} For each $i\in [1,\boo]$, $\bSp'_{i,j} =\bS_{i,j}$ for all $j\in \ztbtt$ and the first condition holds.
	
	{\bf Condition 2:} Note $\bSp'$ is initialized to a zero matrix (Line 4). Further for any $i\in [\boo+1,\boo+\newk]$ let $j \in \ztbtt$ be such that $i=\boo+1+j$, then the algorithm only updates the $\bSp'_{\boo+1+j,j}$'th entry in the $i$'th row and keeps rest of the entries unchanged. Therefore the second condition holds.
	
	{\bf Condition 3:} For each $i\in [\boo+1,\boo+\newk]$ let $j \in \ztbtt$ be such that $i=\boo+1+j$, then $\sum_{j'\in \ztbtt}\bSp'_{i,j'}=\bSp'_{\boo+1+j,j}=\sum_{i' \in \otboo}(\bSp_{i',j}-\bS_{i',j})=\phi_{j}-\sum_{i' \in [1,\boo]}\bS_{i',j}$. The first equality holds because of the Condition 2. The third equality follows from the Line 8 of the algorithm. The last equality holds because $\bSp \in \bZfrac$ and we have $\sum_{i \in [1,\ell]}\bSp_{i,j}=\phi_{j}$.
	
	{\bf Condition 4:} Here we provide the proof for $\bSp' \in \textbf{Z}^{\phi,frac}_{\bR'}$. 
	For any $j\in \ztbtt$, we first show that $\sum_{i\in [1,\boo+\newk]}\bSp'_{i,j}=\phi_{j}$.
	\begin{align*}
	\sum_{i\in [1,\boo+\newk]}\bSp'_{i,j}&=\sum_{i \in [1,\boo]}\bSp'_{i,j}+\sum_{i \in [\boo+1,\boo+\newk]}\bSp'_{i,j}
	=\sum_{i \in [1,\boo]}\bS_{i,j}+\bSp'_{\boo+1+j,j}\\
	&=\sum_{i \in [1,\boo]}\bS_{i,j}+\phi_{j}-\sum_{i \in [1,\boo]}\bS_{i,j}=\phi_{j}\\
	\end{align*}
	The second equality follows because $\bSp'_{i,j}=\bS_{i,j}$ for all $i\in [1,\boo]$ and $j\in \ztbtt$ (Line 6) and $\sum_{i \in [\boo+1,\boo+\newk]}\bSp'_{i,j}=\bSp'_{\boo+1+j,j}$ (Condition 2). The third equality follows from the Condition 3.
	
	We next show that $\sum_{i\in [1,\boo+\newk]}\pvec_{i} \left( \sum_{j\in \ztbtt}\bSp'_{i,j} \right) \leq 1$.
	\begin{equation}
	\begin{split}
	\sum_{i\in [1,\boo+\newk]}\pvec_{i} \left( \sum_{j\in \ztbtt}\bSp'_{i,j} \right)&=\sum_{i\in [1,\boo]}\pvec_{i} \left( \sum_{j\in \ztbtt}\bSp'_{i,j} \right)+ \sum_{j\in \ztbtt}\pvec_{\boo+1+j} \bSp'_{\boo+1+j,j}\\
	&=\sum_{i\in [1,\boo]}\pvec_{i} \left( \sum_{j\in \ztbtt}\bS_{i,j} \right)+ \sum_{j\in \ztbtt}\frac{\sum_{i \in \otboo}(\bSp_{ij}-\bS_{ij})\pvec_{i}}{\sum_{i \in \otboo}(\bSp_{ij}-\bS_{ij})} \left(\sum_{i \in \otboo}(\bSp_{i,j}-\bS_{i,j}) \right)\\
	&=\sum_{i\in [1,\boo]}\pvec_{i} \left( \sum_{j\in \ztbtt}\bS_{i,j} \right)+ \sum_{j\in \ztbtt}\sum_{i \in \otboo}(\bSp_{ij}-\bS_{ij})\pvec_{i}\\
	&=\sum_{i\in [1,\boo]}\pvec_{i} \left( \sum_{j\in \ztbtt}\bSp_{i,j} \right) \leq 1\\
	\end{split}
	\end{equation}
	In the first equality, we divided the summation into two parts and for the second part we used Condition 3. In the second equality we used Line 7 and 8 of the algorithm. In the third and fourth equality we simplified the expression. In the final inequality we used $\bSp \in \bZfrac$.
	
	Combining all the conditions together we have $\bSp' \in \textbf{Z}^{\phi,frac}_{\bR'}$. In the remainder we show that $\sum_{i \in [1,\boo+\newk]} \sum_{j \in \ztbtt} \bSp'_{i,j}=\sum_{i \in [1,\boo]} \sum_{j \in \ztbtt} \bSp_{i,j}$.
	
	Recall we already showed that $\sum_{i\in [1,\boo+\newk]}\bSp'_{i,j}=\phi_{j}$ for all $j \in \ztbtt$. Recall $\phi_{0}= \sum_{i \in [1,\boo]} \bSp_{i,0}$ and $\bSp \in \bZfrac$ implies $\phi_{j}=\sum_{i\in [1,\boo]}\bSp_{i,j}$ for all $j \in [1,\btt]$. Therefore we have, $$\sum_{i \in [1,\boo+\newk]} \sum_{j \in \ztbtt} \bSp'_{i,j}=\sum_{i \in [1,\boo]} \sum_{j \in \ztbtt} \bSp_{i,j}$$
	
	{\bf Condition 5:} We first provide the explicit expressions for $\bg(\bSp')$ and $\bg(\bSp)$ below: 
	$$ \bg(\bSp')=\left(\prod_{i \in \otboo}\pp_i^{(\bSp' \mvec)_i}\frac{\expo{(\bSp' \onevec)_i \log (\bSp' \onevec)_i}}{\prod_{j \in \ztbtt}\expo{\bSp'_{ij}\log \bSp'_{ij}}}\right)\left(\prod_{j \in \ztbtt} \pp_{\boo+1+j}^{\nn_j\bSp'_{\boo+1+j,j} }\cdot 1\right)$$
	$$ \bg(\bSp)=\prod_{i \in \otboo}\left(\pp_i^{(\bSp \mvec)_i}\frac{\expo{(\bSp \onevec)_i \log (\bSp \onevec)_i}}{\prod_{j \in \ztbtt}\expo{\bSp_{ij}\log \bSp_{ij}}}\right)$$
	Note in the expression for $\bg(\bSp')$ we used Condition 2. In the above two definitions for $\bg(\bSp')$ and $\bg(\bSp)$, we refer to the expression involving $\pvec_{i}$'s as the probability term and the rest as the counting term. We start the analysis of Condition 5 by first bounding the probability term:
	\begin{equation}\label{eq:probterm}
	\begin{split}
	\prod_{i \in \otboo}&\pp_i^{(\bSp \mvec)_{i}}=\left(\prod_{i \in \otboo}\pp_i^{(\bSp' \mvec)_i}\right)\left(\prod_{i \in \otboo}\pp_i^{\sum_{j \in \ztbtt}\nn_j(\bSp_{ij}-\bSp'_{ij})}\right)
	=\left(\prod_{i \in \otboo}\pp_i^{(\bSp' \mvec)_i}\right)\left(\prod_{j \in \ztbtt} \prod_{i \in \otboo}\pp_i^{\nn_j(\bSp_{ij}-\bSp'_{ij})}\right)\\
	&=\left(\prod_{i \in \otboo}\pp_i^{(\bSp' \mvec)_i}\right)\left(\prod_{j\in \ztbtt} \left(\prod_{i \in \otboo}\pp_i^{(\bSp_{ij}-\bSp'_{ij})}\right)^{\nn_j}\right)=\left(\prod_{i \in \otboo}\pp_i^{(\bSp' \mvec)_i}\right)\left(\prod_{j \in \ztbtt} \left(\prod_{i \in \otboo}\pp_i^{(\bSp_{ij}-\bS_{ij})}\right)^{\nn_j}\right)\\
	&\leq \left(\prod_{i \in \otboo}\pp_i^{(\bSp' \mvec)_i}\right)\left(\prod_{j \in \ztbtt} \left(\frac{\sum_{i \in \otboo}\pp_i(\bSp_{ij}-\bS_{ij})}{\sum_{i \in \otboo}(\bSp_{ij}-\bS_{ij})}\right)^{\nn_j\sum_{i \in \otboo}(\bSp_{ij}-\bS_{ij})}\right)\\
	&\leq \left(\prod_{i \in \otboo}\pp_i^{(\bSp' \mvec)_i}\right)\left(\prod_{j \in \ztbtt} \pp_{\boo+1+j}^{\nn_j\bSp'_{\boo+1+j,j}}\right)\\
	\end{split}
	\end{equation}
	The first three inequalities simplify the expression. The fourth equality follows because $\bSp'_{i,j} =\bS_{i,j}$ for all $i\in [1,\boo]$ and $j\in \ztbtt$. The fifth inequality follows from AM-GM inequality. The final expression above is the probability term associated with $\bSp'$ and the equation above shows that our rounding procedure only increases the probability term and it remains to bound the counting term.
	\begin{equation}\label{eq:countterm}
	\begin{split}
	\frac{\bg(\bSp')}{\bg(\bSp)} & \geq \prod_{i \in \otboo}\frac{\expo{(\bSp' \onevec)_i \log (\bSp' \onevec)_i-(\bSp \onevec)_{i} \log (\bSp \onevec)_{i}}}{\prod_{j \in \ztbtt}\expo{\bSp'_{ij}\log \bSp'_{ij}-\bSp_{ij}\log \bSp_{ij}}}\\
	&=\prod_{i \in \otboo}\frac{\expo{(\bS \onevec)_i \log (\bS \onevec)_i-(\bSp \onevec)_{i} \log (\bSp \onevec)_{i}}}{\prod_{j \in \ztbtt}\expo{\bS_{ij}\log \bS_{ij}-\bSp_{ij}\log \bSp_{ij}}}~.
	\end{split}
	\end{equation}
	Consider the numerator in the above expression, for each $i\in \otboo$ let $s_{i}\defeq(\bS \onevec)_i$, then
	\begin{equation}
	\begin{split}
	\prod_{i \in \otboo}\expo{(\bS \onevec)_i \log (\bS \onevec)_i-(\bSp \onevec)_{i} \log (\bSp \onevec)_{i}}&=\prod_{i \in \otboo}\expo{s_i \log s_i-(s_{i}+\alpha_{i}) \log (s_{i}+\alpha_{i})}\\
	&=\prod_{i \in \otboo}\expo{s_{i}\log \frac{s_{i}}{s_{i}+\alpha_{i}} -\alpha_{i}\log (s_{i}+\alpha_{i})}\\
	&\geq \prod_{i \in \otboo}\expo{s_{i}\frac{-\alpha_{i}}{s_{i}} -\alpha_{i}\log (s_{i}+\alpha_{i})}\\
	&\geq \expo{-O\left(\log (\sum_{i\in [1,\boo]}s_{i})\sum_{i\in \otboo}\alpha_{i} \right)}\\
	&\geq \expo{-O\left(\sum_{i\in \otboo}\alpha_{i} \log \logparam \right)}~.
	\end{split}
	\end{equation}
	In the third inequality we used $\log (1+x) \geq \frac{x}{1+x}$ for all $x \geq -1$. The final inequality follows because $\sum_{i\in [1,\boo]}s_{i} \leq\sum_{i\in [1,\boo]}(\bSp \onevec)_{i} \leq  \logparam$.
	Now consider the denominator in the above expression, let $\alpha_{i,j}=\bSp_{i,j}-\bS_{i,j}$ for all $i\in \otboo$ and $j\in \ztbtt$, then 
	\begin{equation}\label{eq:derive}
	\begin{split}
	\prod_{i \in \otboo}\prod_{j \in \ztbtt}\expo{\bS_{ij}\log \bS_{ij}-\bSp_{ij}\log \bSp_{ij}}&=\prod_{i \in \otboo}\prod_{j \in \ztbtt}\expo{\bS_{ij}\log \bS_{ij}-(\bS_{ij}+\alpha_{i,j})\log (\bS_{ij}+\alpha_{i,j})}\\
	&=\prod_{i \in \otboo}\prod_{j \in \ztbtt}\expo{\bS_{ij}\log \frac{\bS_{ij}}{\bS_{ij}+\alpha_{i,j}}-\alpha_{i,j}\log (\bS_{ij}+\alpha_{i,j})}\\
	&\leq \prod_{i \in \otboo}\prod_{j \in \ztbtt}\expo{-\alpha_{i,j}\log (\bS_{ij}+\alpha_{i,j})}\\ 
	&\leq \prod_{i \in \otboo}\prod_{j \in \ztbtt}\expo{-\alpha_{i,j}\log \alpha_{i,j}}\leq \expo{O\big(\log(\boo \times \btt)\sum_{i\in \otboo} \alpha_{i} \big)}\\
	& \leq \expo{O\left(\sum_{i\in \otboo} \alpha_{i} \log \logparam\right)}~.
	\end{split}
	\end{equation}
	In the third inequality we used $\alpha_{i,j} \geq 0$ and therefore $\bS_{ij}\log \frac{\bS_{ij}}{\bS_{ij}+\alpha_{i,j}} \leq 0$. In the fourth inequality we used $\log (\bS_{ij}+\alpha_{i,j}) \geq \log \alpha_{i,j}$. In the fifth inequality we used $\sum_{j \in \ztbtt} \alpha_{i,j}=\alpha_{i}$ for all $i\in \otboo$ and further $\sum_{i\in \otboo}\sum_{j \in \ztbtt}- \alpha_{i,j} \log  \alpha_{i,j}=\sum_{i\in \otboo}\alpha_{i} \left(\sum_{j \in \ztbtt} -\frac{\alpha_{i,j}}{\alpha_{i}} \log  \frac{\alpha_{i,j}}{\alpha_{i}}-\log \alpha_{i} \right) \leq \log (k+1)\sum_{i\in \otboo} \alpha_{i}  -\sum_{i\in \otboo}\alpha_{i} \log \alpha_{i}$. Now consider the term $-\sum_{i\in \otboo}\alpha_{i} \log \alpha_{i}$ and note that $-\sum_{i\in \otboo}\alpha_{i} \log \alpha_{i}=(\sum_{i\in \otboo}\alpha_{i})\left(-\sum_{i\in \otboo}\frac{\alpha_{i}}{\sum_{i\in \otboo}\alpha_{i}} \log \frac{\alpha_{i}}{\sum_{i\in \otboo}\alpha_{i}}-\log \sum_{i\in \otboo}\alpha_{i} \right)\leq (1+\log \boo)\sum_{i\in \otboo}\alpha_{i}$. The fifth inequality in \Cref{eq:derive} follows by combining the previous two derivations together. The final inequality follows because $\boo \times \btt \leq \logparam$.
	
	{\bf Condition 6:} This condition follows immediately from Line 7 of the algorithm.
\end{proof}

\renewcommand{\bSp}{\textbf{S}}
\renewcommand{\bS}{\textbf{A}}
\renewcommand{\boo}{\ell}

\begin{proof}[Proof of \Cref{lem:stepone}]
	In the following we provide the proof for the claims in the lemma.
	
	{\bf Condition 1:} Note $\bHone \subseteq \bH \cup [\ell+1,\ell+\newk]$, where $[\ell+1,\ell+\newk]$ are the indices corresponding to the new rows created by the procedure $\algcreate$ (\Cref{alg:create}). Consider any $i \in \bHone$, then one the following two cases hold,
	\begin{enumerate}
		\item If $i \in \bH$, then by the first condition of \Cref{lem:create} we have $(\bSone \onevec)_{i}=(\ma \onevec)_{i}=\sum_{j\in \ztbtt} \ma_{i,j}=\sum_{j\in \ztbtt} \floor{\bSp_{i,j}} \in \Z$. 
		\item Else $i \in [\ell+1,\ell+\newk]$ and in this case we have $\sum_{i\in [1,\boo]}\ma_{i,j}=\sum_{i\in \bH}\ma_{i,j}+\sum_{i\in \bL}\ma_{i,j}=\sum_{i\in \bH}\floor{\bSp_{i,j}}+\floor{\sum_{i\in \bL}\bSp_{i,j}}\in \Z$. The second equality in the previous derivation follows from Line 7 and 8 of the algorithm. The previous derivation combined with third condition of \Cref{lem:create} we get,  $(\bSone \onevec)_{i}=\phi_{j}-\sum_{i\in [1,\boo]}\ma_{i,j}\in \Z$.
	\end{enumerate}
	$(\bSone \onevec)_{i} \in \Z$ in both the cases and the condition 1 follows.
	
	{\bf Condition 2:} This condition follows immediately from the fourth condition of \Cref{lem:create}.
	
	{\bf Condition 3:} Let $\alpha_{i}=\sum_{j\in \ztbtt}\bSp_{i,j} - \sum_{j\in \ztbtt}\ma_{i,j}$ for all $i\in [1,\boo]$. First we upper bound the term $\sum_{i\in \bH}\alpha_{i}$. Consider $\sum_{i\in \bH}\alpha_{i} \leq \sum_{i\in \bH} \sum_{j\in \ztbtt}\bSp_{i,j} \leq \tsqni$. The last inequality follows because of the constraint $\sum_{i\in [1,\boo]}\pvec_{i} \sum_{j\in \ztbtt}\bSp_{i,j} \leq 1$ ($\bSp \in \bZfrac$) and $\pvec_{i} >\tsqn$ for all $i \in \bH$. 
	
	We now upper bound the term $\sum_{i\in \bL}\alpha_{i}$. Consider $\sum_{i\in \bL}\alpha_{i}=\sum_{i\in \bL}\left(\sum_{j\in \ztbtt}\bSp_{i,j} - \sum_{j\in \ztbtt}\ma_{i,j}\right)=\sum_{j\in \ztbtt}\left(\sum_{i\in \bL}\bSp_{i,j} - \sum_{i\in \bL}\ma_{i,j}\right)$. Further $\sum_{i\in \bL}\ma_{i,j}=\floor{\sum_{i\in \bL}\bSp_{i,j}}$ for all $j\in \ztbtt$ (Line 8 of the algorithm) and we get $\sum_{i\in \bL}\alpha_{i} \leq k+1$. 
	
	Therefore $\sum_{i\in [\boo]}\alpha_{i}=\sum_{i\in \bH}\alpha_{i}+\sum_{i\in \bL}\alpha_{i} \leq \tsqni+k+1$ and combined with fifth condition \Cref{lem:create} we have, 
	$$\bg(\bSone) \geq \expo{-O\left(\left(\tsqni+k\right)\log \logparam \right)}\bg(\bSp) ~.$$
\end{proof}

\begin{proof}[Proof of \Cref{lem:steptwo}] 
	In the following we provide proof for all the conditions in the lemma.
	
	{\bf Condition 1:} For all $i \in [1,\boo+\newk]$, one of the following two conditions hold,
	\begin{enumerate}
		\item If $i \in \bHone$, then by the first condition of \Cref{lem:create} we have $(\bStwo \onevec)_{i}=(\bAone \onevec)_{i}=(\bSone \onevec)_{i} \in \Z$. The last expression follows from first condition of \Cref{lem:stepone}.
		\item Else $i \in \bLone$, then again by the first condition of \Cref{lem:create} we have $(\bStwo \onevec)_{i}=(\bAone \onevec)_{i}=\floor{(\bSone \onevec)_{i}} \in \Z$. The last equality follows from Line 15 of the algorithm.
	\end{enumerate}
	For all $i \in [1,\boo+\newk]$, we have $(\bStwo \onevec)_{i} \in \Z$ and therefore condition 1 holds.
	
	{\bf Condition 2:} This condition follows immediately from the second condition of \Cref{lem:create}.
	
	{\bf Condition 3:} This condition follows immediately from the fourth condition of \Cref{lem:create}.
	
	{\bf Condition 4:} Consider the term $\sum_{i \in [\boo+\newk+1,\boo+2\newk]} (\bStwo \onevec)_{i}$,
	\begin{equation}
	\begin{split}
	\sum_{i \in [\boo+\newk+1,\boo+2\newk]} (\bStwo \onevec)_{i}&=\sum_{i \in [1,\boo+2\newk]} (\bStwo \onevec)_{i} - \sum_{i \in [1,\boo+\newk]}(\bStwo \onevec)_{i}\\
	&=\sum_{j \in \ztbtt} \phi_{j} - \sum_{i \in [1,\boo+\newk]}(\bAone \onevec)_{i}\\
	&=\sum_{j \in \ztbtt} \phi_{j} - \left(\sum_{i \in \bHone}(\bAone \onevec)_{i}+\sum_{i \in \bLone}(\bAone \onevec)_{i} \right)\\
	&=\sum_{j \in \ztbtt} \phi_{j} - \left(\sum_{i \in \bHone}(\bSone \onevec)_{i}+\sum_{i \in \bLone} \floor{(\bSone \onevec)_{i}} \right) \in \Z
	\end{split}
	\end{equation}
	In the first equality we add and substract $\sum_{i \in [1,\boo+\newk]}(\bStwo \onevec)_{i}$ term. The first term in the second equality follows because $\sum_{i \in [1,\boo+2\newk]} (\bStwo \onevec)_{i}=\sum_{j \in \ztbtt} \sum_{i \in [1,\boo+2\newk]} \bStwo_{i,j}=\sum_{j \in \ztbtt} \phi_{j}$ and the last equality follows because $\bStwo \in \bZtwo$ (Condition 3). The second term in the second equality follows by the first condition of \Cref{lem:create}. In the third equality we divided the summation terms over $\bHone$ and $\bLone$. In the fourth equality we used Line 14 of the algorithm and further for any $i \in \bLone$ Line 15 implies $(\bAone \onevec)_{i}=\sum_{j\in \ztbtt}\bSone_{ij} \frac{\floor{(\bSone \onevec)_{i}}}{(\bSone \onevec)_{i}}=\floor{(\bSone \onevec)_{i}}$. Finally by first condition of \Cref{lem:stepone} we have $(\bSone \onevec)_{i} \in \Z$ for all $i \in \bHone$ and $\phi_{j} \in \Z$ for all $j \in \ztbtt$. Therefore, $\sum_{i \in [\boo+\newk+1,\boo+2\newk]} (\bStwo \onevec)_{i} \in \Z$ and the condition 4 holds.
	
	{\bf Condition 5:} For any $j \in \ztbtt$ we have,
	\begin{equation}
	\begin{split}
	\pvectwo_{\boo+\newk+1+j}&=\frac{\sum_{i \in [1,\boo+\newk]}(\bSone_{ij}-\bAone_{ij})\pvecone_{i}}{\sum_{i \in [1,\boo+\newk]}(\bSone_{ij}-\bAone_{ij})}
	=\frac{\sum_{i \in \bLone}(\bSone_{ij}-\bAone_{ij})\pvecone_{i}}{\sum_{i \in \bLone}(\bSone_{ij}-\bAone_{ij})}\\
	& \leq \tsqn \frac{\sum_{i \in \bLone}(\bSone_{ij}-\bAone_{ij})}{\sum_{i \in \bLone}(\bSone_{ij}-\bAone_{ij})} \leq \tsqn.
	\end{split}
	\end{equation}
	The first equality follows from the sixth condition of \Cref{lem:create}. The second equality follows because $\bSone_{i,j}=\bAone_{i,j}$ for all $i \in \bHone$ and $j \in \ztbtt$ (Line 14). The third inequality follows because $\bSone_{i,j} \geq \bAone_{i,j}$ for all $i \in \bLone$ and $j \in \ztbtt$ (Line 15) and further $\pvecone_{i} \leq \tsqn$ for all $i \in \bLone$ (Line 12).
	
	{\bf Condition 6:} For any $i \in [1,\boo+\newk]$, let $\alpha_{i}=\sum_{j \in \ztbtt}\bSone_{i,j} -\sum_{j \in \ztbtt}\bAone$. Note $\alpha_{i}=0$ for all $i\in \bHone$ (Line 14) and $\alpha_{i}= (\bSone \onevec)_{i}-\floor{(\bSone \onevec)_{i}} \leq 1$ for all $i \in \bLone$ (Line 15). Therefore $\sum_{i \in [1,\boo+\newk]} \alpha_{i}\leq |\bLone| \leq \boo+\newk$ and further combined with the fifth condition of \Cref{lem:create} we have $\bg(\bStwo) \geq \expo{-O\left((\boo+\btt)\log \logparam \right)} \bg(\bSone)$. Note by the third condition of \Cref{lem:stepone} we have $\bg(\bSone) \geq\expo{-O\left(\left(\tsqni+\btt \right)\log \logparam \right)} \bg(\bSp)$. Combining the previous two inequalities we get $\bg(\bStwo) \geq \expo{-O\left((\boo+\btt+\tsqni)\log \logparam \right)} \bg(\bSp)$ and condition 6 holds.
\end{proof}

\begin{proof}[Proof of \Cref{thm:stepthree}] In the following we provide proof for the two conditions of the theorem.
	
	{\bf Condition 1:} Here we provide the proof for the condition $\bSext \in \bZext$.
	\begin{enumerate}
		\item For all $i \in [1,\boo+2\newk]$, consider $(\bSext \onevec)_{i}$. If $i \in [1,\boo+\newk]$, then $(\bSext \onevec)_{i}=(\bStwo \onevec)_{i} \in \Z$. The first equality follows by line 22 of the algorithm and the last expression follows by first condition of \Cref{lem:steptwo}. Else $i \in [\boo+\newk+1,\boo+2\newk]$, let $j$ be such that $i=\boo+\newk+1+j$, then $(\bSext \onevec)_{i}= \sum_{j' \in \ztbtt}\bSext_{\boo+\newk+1+j,j'}=\sum_{j' \in \ztbtt}\left(\floor{\bStwo_{\boo+\newk+1+j,j'}}+z_{j,j'}\right)=\floor{\bStwo_{\boo+\newk+1+j,j}}+\sum_{j'\in \ztbtt}z_{j,j'} \in \Z$. The second equality follows by line 23 of the algorithm. The third equality follows from the second condition of \Cref{lem:steptwo}. Finally by the first condition of \Cref{lem:structured} we have $\sum_{j'\in \ztbtt}z_{j,j'} \in \{0,1\}$ for all $j \in \ztbtt$ and therefore $(\bSext \onevec)_{i} \in \Z$ for any $i \in [\boo+\newk+1,\boo+2\newk]$. 
		
		Combining the analysis of cases $i \in [1,\boo+\newk]$ and $i \in [\boo+\newk+1,\boo+2\newk]$ the condition 1 holds.
		
		\item For all $j \in \ztbtt$,
		\begin{equation}\label{eq:dddone}
		\begin{split}
		\sum_{i\in [1,\boo+2\newk]} \bSext_{i,j}&=\sum_{i\in [1,\boo+\newk]} \bSext_{i,j}+\sum_{i\in [\boo+\newk+1,\boo+2\newk]} \bSext_{i,j}\\
		&=\sum_{i\in [1,\boo+\newk]} \bStwo_{i,j}+\sum_{j' \in \ztbtt} \left(\floor{\bStwo_{\boo+\newk+1+j',j}}+z_{j',j}\right)~.
		\end{split}
		\end{equation}
		The second equality follows because $\bSext_{i,j}=\bStwo_{i,j}$ for all $i \in [1,\boo+\newk]$ (Line 22) and $\bSext_{i,j}=\floor{\bStwo_{\boo+\newk+1+j',j}}+z_{j',j}$ for all $i \in [\boo+\newk+1,\boo+2\newk]$ (Line 23).
		We next simplify the second term in the above expression.
		\begin{equation}\label{eq:dddtwo}
		\begin{split}
		\sum_{j' \in \ztbtt} \left(\floor{\bStwo_{\boo+\newk+1+j',j}}+z_{j',j}\right)&=\floor{\bStwo_{\boo+\newk+1+j,j}}+\sum_{j' \in \ztbtt}z_{j',j}
		=\floor{\bStwo_{\boo+\newk+1+j,j}}+x_{j}\\
		&=\bStwo_{\boo+\newk+1+j,j}=\sum_{i\in [\boo+\newk+1,\boo+2\newk]} \bStwo_{i,j}~.
		\end{split}
		\end{equation}
		In the first and final equality we used the second condition of \Cref{lem:steptwo} (Diagonal Structure). In the second equality we used the first condition of \Cref{lem:structured}. In the third equality we used the definition of $x_{j}$ (Line 19).
		Combining equations \ref{eq:dddone} and \ref{eq:dddtwo} we get,
		$$\sum_{i\in [1,\boo+2\newk]} \bSext_{i,j}=\sum_{i\in [1,\boo+2\newk]} \bStwo_{i,j} =\phi_{j}$$
		In the last inequality we used $\bStwo \in \bZtwo$.
		
		\item Let $\pvecext_{i}$ for all $i \in [1,\boo+2\newk]$ be the $i$'th element of $\bRext$. Consider $\sum_{i \in [1,\boo+2\newk]}\pvecext_{i}(\bSext \onevec)_{i}$, we have, 
		\begin{equation}\label{eq:derione}
		\begin{split}
		\sum_{i \in [1,\boo+2\newk]}&\pvecext_{i}(\bSext \onevec)_{i}=\sum_{i \in [1,\boo+2\newk]}\frac{\pvectwo_{i}}{1+\tsqn}(\bSext \onevec)_{i}\\
		&=\frac{1}{1+\tsqn} \sum_{i \in [1,\boo+\newk+1]}\pvectwo_{i}(\bStwo \onevec)_{i}+\frac{1}{1+\tsqn}\sum_{i \in [\boo+\newk+1,\boo+2\newk]}\pvectwo_{i}(\bSext \onevec)_{i}.
		\end{split}
		\end{equation}
		The first equality follows from Line 24 of the algorithm. In the second equality we divided the summation into two parts and used $\bSext_{i,j}=\bStwo_{i,j}$ for all $i \in [1,\boo+\newk+1]$ and $j \in \ztbtt$ (Line 22) for the first part. We now simplify the second part of the above expression.
		
		\begin{equation}\label{eq:deritwo}
		\begin{split}
		\sum_{i \in [\boo+\newk+1,\boo+2\newk]}\pvectwo_{i}(\bSext \onevec)_{i}
		&=\sum_{j\in \ztbtt}\pvectwo_{\boo+\newk+1+j}\sum_{j'\in \ztbtt}\left(\floor{\bStwo_{\boo+\newk+1+j,j'}}+z_{j,j'}\right)\\
		&=\sum_{j\in \ztbtt}w_{j}\left(\bStwo_{\boo+\newk+1+j,j}-x_{j}\right) +\sum_{j\in \ztbtt}w_{j}\sum_{j'\in \ztbtt}z_{j,j'}\\
		&\leq \sum_{j\in \ztbtt}w_{j}\left(\bStwo_{\boo+\newk+1+j,j}-x_{j}\right)+ \sum_{j\in \ztbtt} w_{j}x_{j}+\max_{j \in \ztbtt}w_{j}\\
		&=\sum_{i \in [\boo+\newk+1,\boo+2\newk]}\pvectwo_{i}(\bStwo \onevec)_{i}+\tsqn~.
		\end{split}
		\end{equation}
		In the first equality we expanded the $(\bSext \onevec)_{i}$ term. Further we used $\bSext_{\boo+\newk+1+j,j'}=\floor{\bStwo_{\boo+\newk+1+j,j'}}+z_{j,j'}$ for all $j,j' \in \ztbtt$ (Line 23). In the second equality we used the second condition of \Cref{lem:steptwo} (Diagonal Structure) and further combined it with definitions of $w_{j}$ and $x_{j}$ from Line 19 of the algorithm. The third inequality follows from second condition of \Cref{lem:structured}. In the final inequality we used $\max_{j \in \ztbtt}w_{j} \leq \tsqn$ that follows from the definition of $w_{j}$ and fifth condition of \Cref{lem:steptwo}. Further we combined it with $\bStwo_{\boo+\newk+1+j,j}=(\bStwo \onevec)_{i}$ that follows from the second condition of \Cref{lem:steptwo}.

		Combining equations \ref{eq:derione} and \ref{eq:deritwo} we have,
		$$\sum_{i \in [1,\boo+2\newk]}\pvecext_{i}(\bSext \onevec)_{i}\leq \frac{1}{1+\tsqn} \left(\sum_{i \in [1,\boo+2\newk]}\pvectwo_{i}(\bStwo \onevec)_{i}+\tsqn \right)\leq 1~.$$
		In the final inequality we used $\bStwo \in \bZtwo$ and therefore $\sum_{i \in [1,\boo+2\newk]}\pvectwo_{i}(\bStwo \onevec)_{i} \leq 1$.
	\end{enumerate}
	The condition 1 holds by combining the analysis of all the above three cases.
	
	{\bf Condition 2:} Recall the definition of $\bg(\bSext)$,
	\begin{align*}
	\bg(\bSext)=\prod_{i \in [1,\boo+2\newk]} \left({\pvecext_i}^{(\bSext \mvec)_i} \frac{\expo{(\bSext \onevec)_i \log (\bSext \onevec)_i}}{\prod_{j \in \ztbtt}\expo{\bSext_{ij}\log \bSext_{ij}}}\right)
	\end{align*}
	In the above expression consider the probability term,
	\begin{equation}\label{eq:done}
	\begin{split}
	\prod_{i \in [1,\boo+2\newk]} & {\pvecext_i}^{(\bSext \mvec)_i}=\prod_{i \in [1,\boo+2\newk]} \left(\frac{\pvectwo_i}{1+\tsqn} \right)^{(\bSext \mvec)_i}\\
	&\geq  \expo{-O(\tsqn n)} \left(\prod_{i \in [1,\boo+\newk]} {\pvectwo_i}^{(\bSext \mvec)_i} \right)\left(\prod_{i \in [\boo+\newk+1,\boo+2\newk]} {\pvectwo_i}^{(\bSext \mvec)_i} \right)\\
	&=\expo{-O(\tsqn n)}\left(\prod_{i \in [1,\boo+\newk]} {\pvectwo_i}^{(\bStwo \mvec)_i} \right)\left(\prod_{i \in [\boo+\newk+1,\boo+2\newk]} {\pvectwo_i}^{(\bSext \mvec)_i} \right)~.
	\end{split}
	\end{equation}
	In the first equality we used line 24 of the algorithm. In the second inequality we used $\sum_{i \in [1,\boo+2\newk]}(\bSext \mvec)_i=n$ that further implies $\left(1+\tsqn\right)^{-\sum_{i \in [1,\boo+2\newk]}(\bSext \mvec)_i} \geq \expo{-O(\tsqn n)}$. In the third equality we used $\bSext_{i,j}=\bStwo_{i,j}$ for all $i \in [1,\boo+\newk]$ and $j \in \ztbtt$ (Line 22).
	We now analyze the second product term in the final expression above,
	\begin{equation}\label{eq:dtwo}
	\begin{split}
	\prod_{i \in [\boo+\newk+1,\boo+2\newk]} & {\pvectwo_i}^{(\bSext \mvec)_i}=\prod_{j\in \ztbtt} {\pvectwo_{\boo+\newk+1+j}}^{\sum_{j'\in \ztbtt}\bSext_{\boo+\newk+1+j,j'} \mvec_{j'}}\\
	&=\prod_{j\in \ztbtt} {\pvectwo_{\boo+\newk+1+j}}^{\sum_{j'\in \ztbtt}\left(\floor{\bStwo_{\boo+\newk+1+j,j'}}+z_{j,j'} \right) \mvec_{j'}}\\
	&=\left( \prod_{j\in \ztbtt}{\pvectwo_{\boo+\newk+1+j}}^{\floor{\bStwo_{\boo+\newk+1+j,j}}} \right) \left(\prod_{j\in \ztbtt} {\pvectwo_{\boo+\newk+1+j}}^{\sum_{j'\in \ztbtt}z_{j,j'} \mvec_{j'}} \right).
	\end{split}
	\end{equation}
	The second equality follows from line 23 of the algorithm. The third equality follows from the second condition of \Cref{lem:steptwo} (Diagonal Structure).
	
	Now consider the second product term in the above expression.
	\begin{equation}\label{eq:dthree}
	\begin{split}
	\prod_{j\in \ztbtt} {\pvectwo_{\boo+\newk+1+j}}^{\sum_{j'\in \ztbtt}z_{j,j'} \mvec_{j'}}&=\prod_{j\in \ztbtt} w_{j}^{\sum_{j'\in \ztbtt}z_{j,j'} \mvec_{j'}} 
	\geq \prod_{j\in \ztbtt} w_{j}^{x_{j} \mvec_{j}}~.
	\end{split}
	\end{equation}
	In the first equality we used the definition of $w_{j}$ (Line 19). The second inequality follows from the third condition of \Cref{lem:structured}.
	
	Combining equations \ref{eq:dtwo}, \ref{eq:dthree} and further using $x_{j}=\bStwo_{\boo+\newk+1+j,j}-\floor{\bStwo_{\boo+\newk+1+j,j}}$ for all $j \in \ztbtt$ (Line 19) we have,
	\begin{equation}\label{eq:dfour}
	\prod_{i \in [\boo+\newk+1,\boo+2\newk]} {\pvectwo_i}^{(\bSext \mvec)_i}\geq \prod_{j\in \ztbtt} {\pvectwo_{\boo+\newk+1+j}}^{\bStwo_{\boo+\newk+1+j,j} \mvec_{j}}=\prod_{i \in [\boo+\newk+1,\boo+2\newk]} {\pvectwo_i}^{(\bStwo \mvec)_i}~.
	\end{equation}
	In the final inequality we used the second condition of \Cref{lem:steptwo} (Diagonal Structure).
	
	Combining equations \ref{eq:done} and \ref{eq:dfour} we have,
	$$\prod_{i \in [1,\boo+2\newk]} {\pvecext_i}^{(\bSext \mvec)_i} \geq \expo{-O(\tsqn n)} \prod_{i \in [1,\boo+2\newk]} {\pvectwo_i}^{(\bStwo \mvec)_i}$$
	
	Using the above expression we have,
	\begin{equation}\label{eq:eone}
	\begin{split}
	\frac{ \bg(\bSext)}{\bg(\bStwo)} &\geq \expo{-O(\tsqn n)} \prod_{i \in [1,\boo+2\newk]} \left( \frac{\expo{(\bSext \onevec)_i \log (\bSext \onevec)_i-(\bStwo \onevec)_i \log (\bStwo \onevec)_i}}{\prod_{j' \in \ztbtt}\expo{\bSext_{i,j'}\log \bSext_{i,j'}-\bStwo_{i,j'}\log \bStwo_{i,j'}}}\right)\\
	&=\expo{-O(\tsqn n)} \prod_{i \in [\boo+\newk+1,\boo+2\newk]} \left( \frac{\expo{(\bSext \onevec)_i \log (\bSext \onevec)_i-(\bStwo \onevec)_i \log (\bStwo \onevec)_i}}{\prod_{j' \in \ztbtt}\expo{\bSext_{i,j'}\log \bSext_{i,j'}-\bStwo_{i,j'}\log \bStwo_{i,j'}}}\right)\\
	&=\expo{-O(\tsqn n)} \prod_{i \in [\boo+\newk+1,\boo+2\newk]} \expo{(\bSext \onevec)_i \log (\bSext \onevec)_i-\sum_{j' \in \ztbtt}\bSext_{i,j'}\log \bSext_{i,j'}}~.
	\end{split}
	\end{equation}
	In the second equality we used $\bSext_{i,j}=\bStwo_{i,j}$ for all $i \in [1,\boo+\newk]$ and $j \in \ztbtt$ (Line 22). The third inequality follows by the second condition of \Cref{lem:steptwo} (Diagonal Structure). In the remainder of the proof we lower bound the term in the final expression.
	
	For each $i \in [\boo+\newk+1,\boo+2\newk]$ let $j \in \ztbtt$ be such that $i=\boo+\newk+1+j$, then $(\bSext \onevec)_i =\sum_{j' \in \ztbtt} (\floor{\bStwo_{\boo+\newk+1+j,j'}}+z_{j,j'}) = \floor{\bStwo_{\boo+\newk+1+j,j}} + \sum_{j' \in \ztbtt}z_{j,j'}$. The first equality follows from line 23 of the algorithm. The second equality follows by the second condition of \Cref{lem:steptwo} (Diagonal Structure). Using first condition of \Cref{lem:structured}, one of the following two cases hold,
	\begin{enumerate}
		\item If $\sum_{j' \in \ztbtt}z_{j,j'}=0$, then $z_{j,j'}=0$ for all $j' \in \ztbtt$. Using second condition of \Cref{lem:steptwo} (Diagonal Structure), we have $\bSext_{\boo+\newk+1+j,j'}=\floor{\bStwo_{\boo+\newk+1+j,j'}} + z_{j,j'}=0$ for all $j' \in \ztbtt$ and $j' \neq j$. Further note, $(\bSext \onevec)_i=\floor{\bStwo_{\boo+\newk+1+j,j}} + \sum_{j' \in \ztbtt}z_{j,j'}=\bSext_{\boo+\newk+1+j,j}$. Combining previous two equalities we have,
		$(\bSext \onevec)_i \log (\bSext \onevec)_i-\sum_{j' \in \ztbtt}\bSext_{i,j'}\log \bSext_{i,j'}=0$. Therefore,
		\begin{equation}\label{eq:etwo}
		\expo{(\bSext \onevec)_i \log (\bSext \onevec)_i-\sum_{j' \in \ztbtt}\bSext_{i,j'}\log \bSext_{i,j'}} \geq 1~.
		\end{equation}
		
		\item If $\sum_{j' \in \ztbtt}z_{j,j'}=1$, then $z_{j,j'} \in [0,1]_{\R}$ for all $j' \in \ztbtt$. Using second condition of \Cref{lem:steptwo} (Diagonal Structure), we have $\bSext_{i,j'}=\bSext_{\boo+\newk+1+j,j'}=\floor{\bStwo_{\boo+\newk+1+j,j'}} + z_{j,j'}=z_{j,j'}$ for all $j' \in \ztbtt$ and $j' \neq j$. Therefore, $\sum_{j' \in \ztbtt}\bSext_{i,j'}\log \bSext_{i,j'}=(\floor{\bStwo_{\boo+\newk+1+j,j}}+z_{j,j})\log (\floor{\bStwo_{\boo+\newk+1+j,j}}+z_{j,j})+\sum_{j' \neq j}z_{j,j'} \log z_{j,j'} \leq (\floor{\bStwo_{\boo+\newk+1+j,j}}+z_{j,j})\log (\floor{\bStwo_{\boo+\newk+1+j,j}}+z_{j,j})$. The final inequality follows because $z_{j,j'} \in [0,1]_{\R}$ and $z_{j,j'} \log z_{j,j'} \leq 0$ for all $j' \in \ztbtt$.
		
		Further note, $(\bSext \onevec)_i=\floor{\bStwo_{\boo+\newk+1+j,j}} + \sum_{j' \in \ztbtt}z_{j,j'}=\floor{\bStwo_{\boo+\newk+1+j,j}}+1$. Combining previous two inequalities we have,
		$(\bSext \onevec)_i \log (\bSext \onevec)_i-\sum_{j' \in \ztbtt}\bSext_{i,j'}\log \bSext_{i,j'} \geq (\floor{\bStwo_{\boo+\newk+1+j,j}}+1) \log (\floor{\bStwo_{\boo+\newk+1+j,j}}+1)-(\floor{\bStwo_{\boo+\newk+1+j,j}}+z_{j,j})\log(\floor{\bStwo_{\boo+\newk+1+j,j}}+z_{j,j}) \geq 0$. The last inequality follows because of the following: If $\floor{\bStwo_{\boo+\newk+1+j,j}}=0$, then the inequality follows because $z_{j,j} \in [0,1]_{\R}$ and $z_{j,j}\log z_{j,j} \leq 0$. Else $\floor{\bStwo_{\boo+\newk+1+j,j}} \geq 1$, in this case we use the fact that $x \log x$ is a monotonically increasing for $x \geq 1$. 
		
		Therefore 
		\begin{equation}\label{eq:ethree}
		\expo{(\bSext \onevec)_i \log (\bSext \onevec)_i-\sum_{j' \in \ztbtt}\bSext_{i,j'}\log \bSext_{i,j'}} \geq 1~.
		\end{equation}
	\end{enumerate}
	
	Combining equations \ref{eq:etwo} and \ref{eq:ethree}, for all $i \in [\boo+\newk+1,\boo+2\newk]$ we have,
	$$\expo{(\bSext \onevec)_i \log (\bSext \onevec)_i-\sum_{j' \in \ztbtt}\bSext_{i,j'}\log \bSext_{i,j'}} \geq 1~.$$
	Substituting previous inequality in \Cref{eq:eone} we get,
	$$\frac{ \bg(\bSext)}{\bg(\bStwo)} \geq \expo{-O(\tsqn n)}~.$$
	Further the condition 2 of the theorem follows by combining the above inequality with the sixth condition of \Cref{lem:steptwo}.
\end{proof}

\section{Acknowledgments}

We thank Jayadev Acharya and Yanjun Han for helpful conversations. We thank the anonymous reviewer for pointing out the alternative proof of the quality of scaled Sinkhorn and Bethe approximations on approximating the permanent of matrices with a bounded number of distinct columns (see \Cref{sec:related} and \Cref{sec:altproof}).

\bibliographystyle{alpha}
\bibliography{PML}
\appendix
\newcommand{\mB}{\textbf{B}}
\newcommand{\ml}{\textbf{L}}
\newcommand{\mr}{\textbf{R}}
\section{Alternative proof for the distinct column case.}\label{sec:altproof}
Here we provide an alternative and simpler proof for \Cref{thm:main} which was pointed to us by an anonymous reviewer. This alternative proof is derived using Corollary 3.4.5 in Barvinok's book~\cite{Bar17} (which is further derived using the Bregman-Minc inequality) and we formally state it below.
\begin{lemma}[Corollary 3.4.5 from \cite{Bar17}]\label{lem:barvinok}
	Suppose that $\bQ$ is a $N \times N$ doubly stochastic matrix that satisfies,
	$$\bQ_{i,j} \leq \frac{1}{b_i} \text{ for all }i \in [N],j \in [N]$$
	for some positive integers $b_1,\dots b_N$. Then,
	$$\perm(\bQ) \leq \prod_{i \in [N]} \frac{(b_i!)^{1/b_i}}{b_i}~.$$
\end{lemma}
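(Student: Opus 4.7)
The plan is to reduce the statement to the classical Bregman-Minc inequality for 0/1 matrices, namely that for a 0/1 matrix $\ma \in \{0,1\}^{N \times N}$ with row sums $r_1,\ldots,r_N$ one has $\perm(\ma) \leq \prod_{i \in [N]} (r_i!)^{1/r_i}$. This classical inequality is well-known and proven in several ways in the literature (e.g., via entropy in Schrijver or via the probabilistic argument of Radhakrishnan); I would cite it rather than re-prove it.

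The reduction proceeds in two steps. First I would rescale: define $\ma \in \R^{N \times N}$ by $\ma_{i,j} \defeq b_i \bQ_{i,j}$. The hypothesis $\bQ_{i,j} \leq 1/b_i$ gives $\ma_{i,j} \in [0,1]$, the row stochasticity of $\bQ$ gives row sums $\sum_j \ma_{i,j} = b_i$ (each a positive integer), and multilinearity of the permanent in the rows yields $\perm(\bQ) = \bigl(\prod_{i \in [N]} 1/b_i\bigr) \perm(\ma)$. Note that the column stochasticity of $\bQ$ plays no further role — we discard it and work only with the row-sum constraints together with $\ma \in [0,1]^{N \times N}$.

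The second step is to argue that $\perm(\ma)$ is dominated by the permanent of some 0/1 matrix with the same row sums. This is where the main (but still mild) idea lies: for each row $i$, with all other rows held fixed, $\perm(\ma)$ is a linear function of row $i$, and the feasible set $\{x \in [0,1]^N : \sum_j x_j = b_i\}$ is a polytope whose extreme points are exactly the 0/1 vectors with $b_i$ ones. Hence the linear function attains its maximum at such a 0/1 extreme point, and replacing row $i$ by this extreme point does not decrease $\perm$. Iterating this row-by-row for $i=1,\ldots,N$ produces a matrix $\ma^{*} \in \{0,1\}^{N \times N}$ with row sums $b_i$ and $\perm(\ma) \leq \perm(\ma^{*})$.

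Finally, applying the classical Bregman-Minc inequality to $\ma^{*}$ gives $\perm(\ma^{*}) \leq \prod_{i \in [N]} (b_i!)^{1/b_i}$, and combining with the rescaling identity yields
\[
\perm(\bQ) = \Bigl(\prod_{i \in [N]} \tfrac{1}{b_i}\Bigr)\perm(\ma) \leq \Bigl(\prod_{i \in [N]} \tfrac{1}{b_i}\Bigr)\perm(\ma^{*}) \leq \prod_{i \in [N]} \frac{(b_i!)^{1/b_i}}{b_i},
\]
as desired. The main obstacle is really just appealing to the classical Bregman-Minc bound as a black box; the extension from 0/1 matrices to doubly stochastic matrices with the $1/b_i$ entry cap is handled cleanly by the multilinearity-plus-extreme-point argument, and no further technical work is needed.
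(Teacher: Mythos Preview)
Your argument is correct. The paper does not actually prove this lemma; it is quoted as Corollary~3.4.5 from Barvinok's book and invoked as a black box, with the parenthetical remark that Barvinok derives it from the Bregman--Minc inequality. Your reduction---rescale row $i$ by $b_i$ to get a $[0,1]$-valued matrix with integer row sums $b_i$, then push each row to a $0/1$ extreme point of the hypersimplex $\{x\in[0,1]^N:\sum_j x_j=b_i\}$ using multilinearity, and finally apply the classical Bregman--Minc bound---is exactly the kind of derivation the paper is alluding to, so there is nothing to contrast.
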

Using the above result, we now prove \Cref{thm:main} and we restate it for convenience. 
\mainthm*
\begin{proof}[Alternative proof for \Cref{thm:main}]
The lower bound follows from \Cref{cor:ssinklb} and in the remainder we prove the upper bound. Let $\bQ$ be the maximizer of the scaled Sinkhorn objective, then it is a well know fact that $\bQ$ satisfies,
$$\bQ=\ml \ma \mr~,$$ 
where matrices $\ml$ and $\mr$ are the left and right non-negative diagonal matrices. Further by the symmetry of the objective, there exists an optimum solution $\bQ$ that has at most $k$ distinct columns and we work with such an optimum solution. As $\ml$ and $\mr$ are diagonal matrices, the following two inequalities are trivial,
\begin{equation}\label{eq:ssinkone}
\perm(\bQ)=\perm(\ml) \perm(\ma) \perm(\mr)~,
\end{equation}
\begin{equation}\label{eq:ssinktwo}
\ssink(\bQ)=\perm(\ml)~ \ssink(\ma) ~\perm(\mr),
\end{equation}
Further note that for all doubly stochastic matrices $\bQ$ we always have,
\begin{equation}\label{eq:ssinkthree}
\exp(-N) \leq \ssink(\bQ)~.
\end{equation}
Therefore combining \Cref{eq:ssinkone,eq:ssinktwo,eq:ssinkthree}, to prove the upper bound it is enough to show that,
$$\perm(\bQ) \leq \expo{O\left(k \log \frac{N}{k}\right)} \cdot \exp(-N)~.$$	
As matrix $\bQ$ has at most $k$ distinct columns, let the multiplicities of these distinct columns be $\phi_1,\ldots,\phi_k$. Note that if a column has multiplicity $\phi_i$, the maximal element in this column is at most $1/\phi_i$. Now by \Cref{lem:barvinok} (Corollary 3.4.5. in \cite{Bar17}), we have 
$$\perm(\bQ) \leq \prod_{i=1}^k \frac{\phi_i !}{\phi_i^{\phi_i}} \leq \expo{O\left(k \log\frac{N}{k} \right)} \cdot \expo{-N}~,$$
where the last inequality follows because the term $\prod_{i=1}^k \frac{\phi_i !}{\phi_i^{\phi_i}}$ is maximized when all $\phi_i$'s are equal and take value $N/k$. Therefore we conclude the proof.
\end{proof}

\end{document}